\documentclass{LMCS}

\usepackage{latexsym,hyperref}

\newcommand{\oomit}[1]{}

\newcommand{\dom}{\mathrm{dom}}
\newcommand{\tr}{\mathsf{t}}
\newcommand{\probab}{\mathsf{p}}
\newcommand{\Mono}{\mathit{Mono}}

\newcommand{\Cn}{\mathit{Cn}}

\newcommand{\PDC}{\mathit{PDC}}
\newcommand{\DC}{\mathit{DC}}
\newcommand{\PNL}{\mathit{PNL}}
\newcommand{\NL}{\mathit{NL}}
\newcommand{\ITL}{\mathit{ITL}}
\newcommand{\LTL}{\mathit{LTL}}
\newcommand{\CTL}{\mathit{CTL}}
\newcommand{\PITL}{\mathit{PITL}}
\newcommand{\infin}{\mathit{inf}}
\newcommand{\fin}{\mathit{fin}}

\newcommand{\dlceil}{\lceil\hspace{-0.05in}\lceil}
\newcommand{\drceil}{\rceil\hspace{-0.05in}\rceil}
\newcommand{\pred}[1]{\dlceil #1 \drceil}
\newcommand{\sem}[1]{[\![#1]\!]}
\newcommand{\synsem}[1]{(\!(#1)\!)}
\def \cat {{}^{\frown}\!}
\def\eqalign#1{\null\,\vcenter{\openup\jot\mathsurround=0 pt
  \ialign{\strut\hfil$\displaystyle{##}$&$\displaystyle{{}##}$\hfil
      \crcr#1\crcr}}\,}

\def\doi{3 (3:3) 2007}
\lmcsheading%
{\doi}
{1--43}
{}
{}
{May\phantom{.}~\phantom{0}5, 2005}
{Jul.~19, 2007}
{}

\begin{document}

\title[Probabilistic $\ITL$ and $\DC$ with Infinite Intervals: Complete Proof Systems]{Probabilistic Interval Temporal Logic and\\ Duration Calculus
with Infinite Intervals:\\
Complete Proof Systems }

\author[D.~P.~Guelev]{Dimitar~P.~Guelev}	
\address{Institute of Mathematics and Informatics, Bulgarian Academy of Sciences}	
\email{gelevdp@math.bas.bg}  

\keywords{probabililistic interval temporal logic, duration calculus}
\subjclass{F.3.1}

\begin{abstract}
\noindent 
The paper presents probabilistic extensions of interval temporal logic
($\ITL$) and duration calculus ($\DC$) with infinite intervals and
complete Hilbert-style proof systems for them. The completeness
results are a strong completeness theorem for the system of
probabilistic $\ITL$ with respect to an abstract semantics and a
relative completeness theorem for the system of probabilistic $\DC$
with respect to real-time semantics. The proposed systems subsume
probabilistic real-time $\DC$ as known from the literature. A
correspondence between the proposed systems and a system of
probabilistic interval temporal logic with finite intervals and
expanding modalities is established too.
\end{abstract}

\maketitle

\section*{Introduction}

The {\em duration calculus} ($\DC$) was introduced by Zhou, Hoare and
Ravn in \cite{ZHR91} as a logic to specify requirements on real-time
systems. $\DC$ is a classical predicate interval-based linear-time
logic with one normal binary modality known as {\em chop}. $\DC$ was
originally developed for real time by augmenting the real-time variant
of {\em interval temporal logic} ($\ITL$, \cite{Mos85,Mos86}) with
boolean expressions for {\em state} and real-valued terms to denote
state {\em durations}. $\DC$ has been used successfully in many case
studies such as \cite{ZZ94,DW94,SX98,DVH98,LH99}. We refer the reader
to \cite{HZ97} or the recent monograph \cite{ZH04} for a comprehensive
introduction to $\DC$.

Temporal logics such as linear temporal logic ($\LTL$), computation
tree logic ($\CTL$) and their timed versions are used mostly as
requirements languages for model-checkers such as SMV \cite{McMa} and
UPPAAL \cite{UPPAAL} which accept descriptions of systems in dedicated
input languages. The probabilistic variant of $\CTL$ \cite{ASB95} has
a similar role in the probabilistic model checker PRISM
\cite{KNP01,PRISM}. The systems in use are typically propositional,
which restricts the variety of properties that can be expressed. This
is only in part compensated for by the possibility to do fully
algorithmic verification. More complex properties and systems which,
e.g., involve unspecified numbers of concurrent processes or unbounded
amounts of data have to be viewed as parameterized families and
require the development of dedicated techniques. Alternatively,
model-checkers are used on instances of the systems with artificial
bounds on their size, which, however, quickly leads to the notorious
{\em state space explosion} problem. The use of the logics as {\em
reasoning tools} and not just as {\em notations} is also limited to
optimising simplifications such as abstractions. Unlike these systems
of logic, the expressive power of $\DC$ is geared towards the
possibility to capture the semantics of the systems to be verified and
therefore it is used as a system description language as
well. Examples include the $\DC$ semantics of the timed specification
language RAISE proposed in \cite{LH99} and the $\DC$ semantics of the
Verilog hardware specification language \cite{IEEE95} proposed in
\cite{SX98}. This shifts the interest from the satisfaction of $\DC$
formulas by given models towards validity in $\DC$.

The needs of applications have brought to life a number of extensions
and variants of $\DC$. These include state quantifiers and the least
fixed point operator \cite{Pan95}, alternative sets of interval
modalities \cite{Pan96,ZH96adequate,BRZ00,He99b}, enhancements of the
semantics to combine real and discrete time \cite{PD97,He99,Gue04} and
infinite intervals \cite{ZDL95,PWX98,SX98,WX04}. The extension of
$\DC$ by a probability operator replaces the linear model of time of
$\DC$ by a model based on sets of behaviours with probability on
them. Despite the absence of an explicit branching-time modality, the
probabilistic $\DC$ ($\PDC$) is essentially a branching-time predicate
interval-based temporal logic.

$\DC$ and, consequently, its extensions are not recursively
axiomatisable.  The worst case complexity of decision procedures for
validity is high even for very restricted subsets of $\DC$ such as the
so-called propositional $\DC$ \cite{ZHS93,Rab98}. No interesting
quantified decidable subsets of $\DC$ seem to be known (The state
quantifier in the $\lceil P\rceil$-subset of $\DC$ studied in
\cite{ZHS93} is expressible in that subset and does not increase its
ultimate expressive power.) The propositional abstract-time and
real-time $\ITL$s with {\em chop} are undecidable too. Undecidability
is typical of interval-based systems as shown in the early works
\cite{HS86} and \cite{Ven91a,Ven91thesis} where the {\em chop}
modality was studied as an example of an operator in many-dimensional
modal logic. A very simple subset of $\DC$ which exhibits its
incompleteness was identified in \cite{Gue04c}. This is compensated by
the convenience of achieving composionality in specification and
particularly the specification of sequential composition, which is
deemed to be difficult to handle in systems without the {\em chop}
modality \cite{Mull99}. Tool support for $\ITL$ and $\DC$ has been
developed on the basis of PVS \cite{PVS} by combining $\ITL$- and
$\DC$-specific proof and proof through translation into the
higher-order logic input language of PVS
\cite{SS94,Hu99thesis,Ras02}. There is also a model- and
validity-checker DCVALID \cite{DCVALID}, which accepts the discrete
time $\lceil P\rceil$-subset of $\DC$ ($QDDC$) and a combination of
$QDDC$ with $\CTL^*$ \cite{Pan01} and uses MONA \cite{Mona} as a
back-end tool. The expressive power of these subsets of $\DC$ is that
of weak monadic second order logic with one successor
($WS1S$). DCVALID has been successful in interesting case studies such
as that from \cite{Pan02}. However, the finite-state-based algorithms
of MONA impose on it the same ultimate limitations as in other
model-checking tools. That is why proof systems are a relatively
important instrument for verification by $\DC$ and its extensions.

$\DC$ was originally introduced for real time, whereas $\PDC$ was
first introduced in \cite{LRSZ92} for discrete time. A system of
real-time $\PDC$ was introduced later in \cite{DZ99} where some axioms
were proposed too. However, these axioms do not form a complete proof
system. Calculation with direct reference to the semantics was used to
reason about properties expressed in $\PDC$ in both works. More case
studies in $\PDC$ were given in \cite{Jos95} and recently in
\cite{ZH04}, which contains a chapter on discrete time $\PDC$. The
deductive power of the proof system for discrete time $\PDC$ used in
\cite{ZH04} has not been studied either.

A first attempt to develop a complete proof system for $\PDC$ was made
in \cite{Gue98probab}, where a system of probabilistic $\ITL$ was
proposed with the $\DC$-specific state expressions with finite
variability withdrawn. However, the semantics of that logic had some
non-standard elements for technical reasons, and the proof system was
a mixture of $\ITL$ and elements from Neighbourhood Logic ($\NL$,
\cite{ZH96adequate,RZ97,BRZ00}). Some of these problems were
eliminated in \cite{Tri99}. A more streamlined system of probabilistic
$\NL$ and a complete proof system with respect to its abstract-time
semantics was proposed later in \cite{Gue00probab}. The use of a
(commutative) linearly-ordered group as the model of time in that
system after Dutertre's work on abstract-time $\ITL$ \cite{Dut95}
allowed a finitary complete proof system to be obtained. However,
$\PNL$ still had some loose ends; the questions of the precise
correspondence between $\PNL$ and the original systems of $\PDC$ from
\cite{LRSZ92,DZ99} and of the deductive power of the proof system with
respect to real-time models remained open. Systems of
(non-probabilistic) branching time $\NL$ were developed in the recent
works \cite{BMS07} and \cite{BM05}. Some of these systems can be
viewed as the underlying branching time logics of $\PNL$. The works
\cite{BMS07} and \cite{BM05} present the propositional variants of
these branching time interval temporal logics and focus on decision
procedures for them.

In this paper we first propose another system of probabilistic
$\ITL$. Unlike that from \cite{Gue98probab}, this system is based on
infinite intervals. We propose a proof system for probabilistic $\ITL$
with infinite intervals which is complete with respect to the
abstract-time semantics based on that for $\ITL$ with infinite
intervals from \cite{WX04}. The use of infinite intervals removes the
need to admix $\NL$ modalities in proofs, which was done in
\cite{Gue98probab}. Then we develop a system of probabilistic $\DC$
($\PDC$) as an extension of the proposed probabilistic $\ITL$ and
demonstrate that adding the $\DC$ axioms and rules known from
\cite{HZ92} to our proof system for this probabilistic $\ITL$ leads to
a proof system for $\PDC$ with is complete with respect to real-time
models relative to validity at the real-time-based frame in
probabilistic $\ITL$ with infinite intervals. The incompleteness of
$\DC$ implies that relative completeness like that from \cite{HZ92}
for basic $\DC$ is the best we can have with a finitary proof
system. Finally, we describe satisfaction-preserving translations
between $NL$-based $\PDC$ and the system of $\PDC$ with infinite
intervals that we propose.

Our system of $\PDC$ has some slight enhancements in comparison with
the original probabilistic $\DC$ from \cite{LRSZ92,DZ99}. They both
improve its expressivity and facilitate the design of the proof
system. The first enhancement is a simplification. We remove the extra
reference time point needed to define the probability operator. The
role of this time point is naturally transferred to the flexible
constant $\ell$ which expresses interval lengths in $\DC$. This
extends the possibilities for meaningful nesting of occurrences of the
probability operator and allows the expression of probabilities of
properties which are probabilistic themselves. The second enhancement
is the use of infinite intervals. It is a consequence of our
developing of $\PDC$ as an extension of an infinite-interval-based
system of probabilistic $\ITL$. As mentioned above, this makes it
possible to avoid the use of an expanding modality such as those of
$\NL$, which was made in \cite{Gue00probab}. The combination of the
{\em chop} modality and infinite intervals has the expressive power of
expanding modalities with the advantage of keeping the {\em
introspectivity} of {\em chop}, which is a technically useful
property. We discuss the trade-offs between $\NL$ and $\ITL$ in
Section \ref{itlvsnl}. The last enhancement is the replacement of the
probabilistic timed automata which were used in \cite{DZ99} to define
sets of behaviours and the respective probability functions for $\PDC$
models by arbitrary systems of probability functions, which can be
constrained by additional axioms in $\PDC$ theories. One such
constraint that we study in detail is the requirement on all the
probability functions in a model to be consistent with a global
probability function which is defined on the space of all the
behaviours of the modelled system. Models which describe the behaviour
of automata like those involved in the definition of the original
system of real-time $\DC$ from \cite{DZ99} can be described by $\PDC$
theories in this more general setting too.

\subsubsection*{Structure of the paper}

After the necessary preliminaries on $\ITL$ with infinite intervals
and $\DC$ we introduce our system of probabilistic $\ITL$ with
infinite intervals and a proof system for it. We prove the
completeness of this proof system with respect to the abstract
semantics of probabilistic $\ITL$, which is the main result of the
paper. Then we propose axioms which constrain the system of
probability functions in models of $\PITL$ to be consistent with a
global probability function to the extent that this constraint can be
formulated in the setting of abstract probabilies. In the rest of the
paper we introduce a system of probabilistic $\DC$ as an extension of
the new system of probabilistic $\ITL$ by state expressions and
duration terms for them based on the real-time frame of probabilistic
$\ITL$. We show how this system of $\PDC$ subsumes the system proposed
in \cite{DZ99}. The main result about $\PDC$ is the completeness of
the well-known axioms of $\DC$ from \cite{HZ92} relative to validity
in real-time and -probability-based models for probabilistic
$\ITL$. Before concluding the paper we explain the correspondence
between $\PNL$ from \cite{Gue00probab} and the infinite-interval based
$\PITL$ proposed in this paper. We conclude by explaining some of the
limitations of the scope of its main results.

\section{Preliminaries}

In this section we give preliminaries on $\ITL$ and $\DC$ with
infinite intervals as known from \cite{ZDL95,PWX98,SX98,WX04} and the
probability operator of $\PDC$ as introduced in \cite{LRSZ92,DZ99}.

\subsection{Interval temporal logic with infinite intervals}

Here follows a brief formal introduction to $\ITL$ with infinite
intervals as presented in \cite{WX04}, which extends the finite
interval abstract-time system of $\ITL$ proposed and studied in
\cite{Dut95}.

\subsubsection{Language}
\label{itllanguage}

An $\ITL$ {\em vocabulary} consists of {\em constant symbols} $c, d,
\ldots$, {\em individual variables} $x, y, z, \ldots$, {\em function
symbols} $f, g, \ldots$ and {\em relation symbols} $R,
\ldots$. Constant, function and relation symbols can be either {\em
rigid} or {\em flexible}. Below it becomes clear that rigid symbols
have the same meaning at all times, whereas the meaning of flexible
symbols can depend on the reference time interval. The rigid constants
$0$ and $\infty$, addition $+$, equality $=$, the flexible constant
$\ell$, which always evaluates to the length of the reference
interval, and a countably infinite set of individual variables are
mandatory in every $\ITL$ vocabulary. We denote the {\em arity} of
function and relation symbols $s$ by $\# s$.

Given a vocabulary, the definition of an $\ITL$ language is
essentially that of its sets of {\em terms} $t$ and {\em formulas}
$\varphi$, which can be defined by the following BNFs:

\begin{tabular}{lll}
$t$ & $::=$ & $c\mid x\mid f(t,\ldots,t)$\\ $\varphi$ & $::=$ &
$\bot\mid
R(t,\ldots,t)\mid(\varphi\Rightarrow\varphi)\mid(\varphi;\varphi)\mid\exists
x\varphi$\\
\end{tabular}

\noindent
Many authors use the alternative notation $\varphi\cat\psi$ for
formulas $(\varphi;\psi)$ which are built with the {\em chop}
modality.

Terms and formulas with no occurrences of flexible symbols are called
{\em rigid}. Other terms and formulas are called {\em flexible}. The
set of the variables which have free occurrences in a formula
$\varphi$ is denoted by $FV(\varphi)$.

\subsubsection{Models and satisfaction}

A finite interval $\ITL$ frame consists of a linearly ordered set
$\langle T,\leq\rangle$ called the {\em time domain}, a monoid
$\langle D,0,+\rangle$ called the {\em duration domain} and a function
$m:{\bf I}(T)\rightarrow D$ called the {\em measure function}, where
\[{\bf I}(T)=\{[\tau_1,\tau_2]:\tau_1,\tau_2\in T,\tau_1\leq\tau_2\}\]
is the set of the closed and bounded intervals in $T$. The monoid
$\langle D,0,+\rangle$ is required to satisfy some additional
axioms. The full list of axioms is:

\begin{tabular}{ll}
$(D1)$ & $x+(y+z)=(x+y)+z$\\
$(D2)$ & $x+0=0+x=x$\\
$(D3)$ & $x+y=x+z\Rightarrow y=z,\ x+z=y+z\Rightarrow x=y$\\
$(D4)$ & $x+y=0\Rightarrow x=y=0$\\
$(D5)$ & $\exists z(x+z=y\vee y+z=x),\ \exists z(z+x=y\vee z+y=x)$\\
\end{tabular}

\noindent
The measure function $m$ is required to satisfy the axioms:

\begin{tabular}{ll}
$(M1)$ & $m([\tau_1,\tau_2])=m([\tau_1,\tau_2'])\Rightarrow \tau_2=\tau_2'$\\
$(M2)$ & $m([\tau_1,\tau])+m([\tau,\tau_2])=m([\tau_1,\tau_2])$\\
$(M3)$ & $m([\tau_1,\tau_2])=x+y\Rightarrow \exists \tau (m([\tau_1,\tau])=x)$\\
\end{tabular}

In the case of $\ITL$ with infinite intervals the time domain $\langle
T,\leq\rangle$ is supposed to have a distinguished greatest element
$\infty$ and $m$ is defined on the set $\tilde{\bf I}(T)={\bf
I}^{\fin}(T)\cup{\bf I}^\infin(T)$, where
\[{\bf I}^{\fin}(T)=\{[\tau_1,\tau_2]:\tau_1,\tau_2\in T,\tau_1\leq\tau_2<\infty\}\mbox{ and }
{\bf I}^{\infin}(T)=\{[\tau,\infty]:\tau\in T,\tau<\infty\}.\]
The duration domain is augmented with a greatest element $\infty$ too. The axiom $D3$ is weakened to

\begin{tabular}{ll}
$(D3')$ & $x+y=x+z\Rightarrow x=\infty\vee y=z,\ x+z=y+z\Rightarrow z=\infty\vee  x=y$\\
\end{tabular}

\noindent
and the following axioms about durations and the measure functions are added:

\begin{tabular}{ll}
$(D6)$ & $x+y=\infty\Leftrightarrow x=\infty\vee y=\infty$\\
$(M4)$ & $m([\tau_1,\tau_2])=\infty$ iff $\tau_2=\infty$\\
\end{tabular}

Given $\sigma_1,\sigma_2\in\tilde{\bf I}(T)$ such that
$\max\sigma_1=\min\sigma_2$, we denote $\sigma_1\cup\sigma_2$ by
$\sigma_1;\sigma_2$.

A function $I$ on an $\ITL$ vocabulary ${\bf L}$ is an {\em
interpretation of ${\bf L}$ into a frame\\ ${F=\langle\langle
T,\leq,\infty\rangle,\langle D,+,0,\infty\rangle,m\rangle}$} if it
satisfies the conditions:

$I(c), I(x)\in D$ for rigid constants $c$ and individual variables $x$; 

$I(f)\in (D^{\# f}\rightarrow D)$ for rigid function symbols $f$;

$I(R)\in (D^{\# R}\rightarrow \{0,1\})$ for rigid relation symbols $R$;

$I(c)\in (\tilde{\bf I}(T)\rightarrow D)$, $I(f)\in (\tilde{\bf I}(T)\times D^{\# f}\rightarrow D)$, $I(R)\in (\tilde{\bf I}(T)\times D^{\# R}\rightarrow \{0,1\})$ for flexible $c$, $f$ and $R$;

$I(0)=0$, $I(\infty)=\infty$, $I(+)=+$, $I(=)$ is $=$ and $I(\ell)=m$. 

An infinite-interval model for an $\ITL$ vocabulary ${\bf L}$ is a
pair of the form $\langle F,I\rangle$ such that $F$ is a frame and $I$
is an interpretation of ${\bf L}$ into $F$.

\begin{defi}
Given a model $\langle F,I\rangle$, the values $I_\sigma(t)$ of terms
$t$ at intervals $\sigma\in\tilde{\bf I}(T)$ is defined by the
clauses:

\begin{tabular}{lcl}
$I_\sigma(x)$ & $=$ & $I(x)$ for individual variables $x$\\
$I_\sigma(c)$ & $=$ & $I(c)$ for rigid constants $c$\\
$I_\sigma(f(t_1,\ldots,t_{\# f}))$ & $=$ & $I(f)(I_\sigma(t_1),\ldots,I_\sigma(t_{\# f}))$ for rigid function symbols $f$\\ 
$I_\sigma(c)$ & $=$ & $I(c)(\sigma)$ for flexible $c$\\
$I_\sigma(f(t_1,\ldots,t_{\# f}))$ & $=$ & $I(f)(\sigma,I_\sigma(t_1),\ldots,I_\sigma(t_{\# f}))$ for flexible $f$\\
\end{tabular}
\end{defi}

In particular, $I_\sigma(\ell)=m(\sigma)$, which means that the
function on $\tilde{\bf I}$ which is the meaning of the flexible
constant $\ell$ always evaluates to the length of the reference
interval $\sigma$.

\begin{defi}
Let $I$ be an interpretation of some $\ITL$ vocabulary ${\bf L}$ into
a frame $F$ whose duration domain is $\langle
D,+,0,\infty\rangle$. Let $x$ be an individual variable in ${\bf L}$
and $d\in D$. Then the interpretation $J$ of ${\bf L}$ into $F$ which
is defined by the equalities
\[J(x)=d\mbox{ and }J(s)=I(s)\mbox{ for }s\in{\bf L}\setminus\{x\}\]
is denoted by $I_x^d$ and is called a {\em $x$-variant of $I$}. We
abbreviate $(\ldots(I_{x_1}^{d_1})_{x_2}^{d_2}\ldots)_{x_n}^{d_n}$ by
$I_{x_1,\ldots,x_n}^{d_1,\ldots,d_n}$ and call it an
$x_1,\ldots,x_n$-variant of $I$. An $x_1,\ldots,x_n$-variant of $I$
for some finite list of variables $x_1,\ldots,x_n$ is called just {\em
variant}.
\end{defi}

\noindent
The modelling relation $\models$ on models based on some frame $F$,
intervals $\sigma$ and formulas in the vocabulary ${\bf L}$ is defined
by the clauses:

\begin{tabular}{lp{4in}}
$\langle F,I\rangle,\sigma\not\models\bot$\\
$\langle F,I\rangle,\sigma\models R(t_1,\ldots,t_n)$ & iff
$I(R)(I_\sigma(t_1),\ldots,I_\sigma(t_n))=1$ for rigid $R$\\
$\langle F,I\rangle,\sigma\models R(t_1,\ldots,t_n)$ & iff
$I(R)(\sigma,I_\sigma(t_1),\ldots,I_\sigma(t_n))=1$ for flexible $R$\\
$\langle F,I\rangle,\sigma\models(\varphi\Rightarrow\psi)$ & iff either
$\langle F,I\rangle,\sigma\not\models\varphi$ or $\langle F,I\rangle,\sigma\models\psi$\\
$\langle F,I\rangle,\sigma\models(\varphi;\psi)$ & iff

$\langle F,I\rangle,\sigma_1\models\varphi$ and $\langle F,I\rangle,\sigma_2\models\psi$\\
& for some $\sigma_1\in{\bf I}^\fin(T_F)$ and $\sigma_2\in\tilde{\bf I}(T_F)$ such that $\sigma_1;\sigma_2=\sigma$\\
$\langle F,I\rangle,\sigma\models\exists x\varphi$ & iff
$\langle F,I_x^d\rangle,\sigma\models\varphi$ for some $d\in D$\\

\end{tabular}

\subsubsection{Abbreviations and precedence of operators}

The binary relation symbol $\leq$ is defined in $\ITL$ by the equivalence
\begin{equation}\label{leqdef}
x\leq y\Leftrightarrow\exists z(x+z=y).
\end{equation}
The customary {\em infix} notation for $+$, $\leq$ and $=$ is used in
$\ITL$. $\top$, $\wedge$, $\Rightarrow$ and $\Leftrightarrow$,
$\forall$, $\not=$, $\geq$, $<$ and $>$ are used in the usual way. We
denote the {\em universal closure} $\forall x_1\ldots\forall
x_n\varphi$ of a formula $\varphi$ where
$\{x_1,\ldots,x_n\}=FV(\varphi)$ by $\forall\varphi$.

Since $(.;.)$ is associative, we omit parentheses in formulas with
consecutive occurrences of $(.;.)$. Here follow the infinite-interval
versions of some $\ITL$ abbreviations:

$\Diamond\varphi\rightleftharpoons (\top; \varphi;
\top)\vee(\top;\varphi)$ ,
$\Box\varphi\rightleftharpoons\neg\Diamond\neg\varphi$ .

\noindent
Note that $\Box$ and $\Diamond$ abbreviate different constructs in the
original discrete-time system of $\ITL$ of Moszkowski. Our usage
originates from the literature on $\DC$. The disjunctive member
$(\top;\varphi)$ in the definition of $\Diamond$ is relevant only at
infinite intervals. The formula $(\top; \varphi; \top)$ without it
restricts the subinterval which satisfies $\varphi$ to be finite.

We assume that $\Diamond$ and $\Box$ bind more tightly and $(.;.)$
binds less tightly than the boolean connectives.

\subsubsection{Proof system}
\label{proofsystemitl}

A complete proof system for abstract-time $\ITL$ with finite intervals
is given in \cite{Dut95}. The following axioms and rules have been
shown to form a complete proof system for $\ITL$ with infinite
intervals when added to a Hilbert-style proof system for classical
first-order predicate logic and the axioms $D1$, $D2$, $D3'$,
$D4$-$D6$ about durations in \cite{WX04}:

\qquad

\begin{tabular}{ll}
$(A1)$ & $(\varphi; \psi)\wedge\neg(\chi;\psi)\Rightarrow
(\varphi\wedge\neg\chi; \psi)$, $(\varphi;\psi)\wedge\neg(\varphi; \chi)\Rightarrow
(\varphi; \psi\wedge\neg\chi)$\\
$(A2)$ & $((\varphi;\psi);\chi)\Leftrightarrow(\varphi;
(\psi;\chi))$\\
$(R)$ & $(\varphi;\psi)\Rightarrow\varphi$, $(\psi;\varphi)\Rightarrow\varphi$ if $\varphi$ is rigid\\
$(B)$ & $(\exists x\varphi; \psi)\Rightarrow\exists
x(\varphi; \psi)$, $(\psi; \exists x\varphi)\Rightarrow\exists
x(\psi; \varphi)$ if $x\not\in FV(\psi)$\\
$(L1)$ & $(\ell=x; \varphi)\Rightarrow\neg(\ell=x;
\neg\varphi)$, $(\varphi; \ell=x\wedge x\not=\infty)\Rightarrow\neg(\neg\varphi;
\ell=x)$\\
$(L2)$ & $\ell=x+y\wedge x\not=\infty\Leftrightarrow(\ell=x;\ell=y)$\\
$(L3)$ & $\varphi\Rightarrow(\ell=0;\varphi)$, $\varphi\wedge\ell\not=\infty\Rightarrow(\varphi;\ell=0)$\\
$(S1)$ & $(\ell=x\wedge\varphi;\psi)\Rightarrow\neg(\ell=x\wedge\neg\varphi;\chi)$\\
$(P1)$ & $\neg(\ell=\infty;\varphi)$\\
$(P2)$ & $(\varphi;\ell=\infty)\Rightarrow\ell=\infty$\\
$(P3)$ & $(\varphi;\ell\not=\infty)\Rightarrow\ell\not=\infty$\\
\\
\raisebox{1mm}{$(N)$} & $\displaystyle\frac{\varphi}{\neg(\neg\varphi;
  \psi)}$ , $\displaystyle\frac{\varphi}{\neg(\psi;
  \neg\varphi)}$\\
\\
\raisebox{1mm}{$(\Mono)$} & $\displaystyle\frac{\varphi\Rightarrow\psi}{(\varphi; \chi)\Rightarrow(\psi;\chi)}$ , $\displaystyle\frac{\varphi\Rightarrow\psi}{(\chi; \varphi)\Rightarrow(\chi;\psi)}$
\end{tabular}

\qquad

\noindent
The presence of the modality $(.;.)$ and flexible symbols in $\ITL$ brings a restriction on the use of first order logic axioms which involve substitution such as 
\[(\exists_r)\ [t/x]\varphi\Rightarrow\exists x\varphi.\]
The application of this axiom is correct only if no variable in $t$ becomes bound due to the substitution, and either $t$ is rigid or $(.;.)$ does not occur in $\varphi$, because the value of a flexible term could be different at the different intervals which are involved in evaluating formulas with $(.;.)$.

The correctness of the proof system can be established by a direct check. Here follow some comments and informal reading of the axioms and the proof rules which can be helpful for their understanding too. $A1$ states that if chopping into a $\varphi$-subinterval and a $\psi$-subinterval is possible, but chopping into a $\chi$-subinterval and a $\psi$-subinterval is not, then any chopping into a $\varphi$- and a $\psi$-subinterval would lead to a $\varphi$-subinterval which additionally satisfies the negation of $\chi$. In the presence of the rules $\Mono$ and propositional tautologies one can choose between $A1$ and the axiom
\[(\alpha;\psi)\vee(\beta;\psi)\Leftrightarrow(\alpha\vee\beta;\psi),\]
which can be described as {\em distributivity of $(.;.)$ over $\vee$.} Axiom $B$ can be viewed as an parametric analogon of this distributivity axiom, with $\exists x$ to be read as parametric (possibly infinitary) disjunction. $A2$ is just the associativity of $(.;.)$. $R$ states that the satisfaction of rigid formulas does not depend on the reference interval. $L1$ and $S1$ express that if, upon dividing an interval, the duration of one of the subintervals is fixed, then the properties of both subintervals are completely determined. This is so because the subintervals themselves are uniquely determined. $L2$ is the additivity of length. $P2$ and $P3$ give separate treatment to some special cases of additivity that arise from the presence of infinitely long intervals. $L3$ states that intervals of length $0$ can be assumed at either end of any interval. $P3$ rules out the interval $[\infty;\infty]$. The rules $N$ state that valid formulas are valid in subintervals too. These rules are the standard form of the modal logic rule $\varphi/\Box\varphi$, yet about the {\em binary} modality $(.;.)$. The fact that weakening the condition on a subinterval in a $(.;.)$-formula can only facilitate the satisfiability of the whole $(.;.)$-formula is expressed by the rules $\Mono$.

\subsection{$\DC$ with infinite intervals}
\label{dcprelim}

The formal definition of $\DC$ with infinite intervals as an extension of the logic of the real-time-based frame of $\ITL$ with infinite intervals below is after \cite{ZDL95}. The main feature of $\DC$ relative to $\ITL$ are state expressions which are propositional formulas that denote piece-wise constant $\{0,1\}$-valued functions of time. Unlike purely-$\ITL$ flexible symbols, $\DC$ state expressions denote functions on time {\em points} and not intervals.

\subsubsection{Language}

{\em $\DC$ vocabularies} are $\ITL$ vocabularies extended by {\em state variables} $P, Q, \ldots$. State variables are used to build {\em state expressions} $S$ which have the syntax

\begin{tabular}{rcl}
$S$ & $::=$ & ${\bf 0}\mid P\mid S\Rightarrow S$\\
\end{tabular}

\noindent
and in turn appear as the argument of {\em duration terms} $\int S$ which are the $\DC$-specific construct in the syntax of terms $t$:

\begin{tabular}{lll}

$t$ & $::=$ & $c\mid x\mid v\mid \int S\mid f(t,\ldots,t)$\\
\end{tabular}

\noindent
Duration terms are regarded as flexible. The syntax of formulas is as in $\ITL$.

Flexible constants and $0$-ary flexible predicate letters in $\DC$ are also known as {\em temporal variables} and {\em temporal propositional letters}, respectively.

\subsubsection{Semantics}
\label{semantics}

We are only interested in real-time $\DC$ which is based on the $\ITL$ frame \[F_{\bf R}=\langle\langle\overline{\bf R},\leq,\infty\rangle,\langle\overline{\bf R}_+,+,0,\infty\rangle,\lambda\sigma.\max\sigma-\min\sigma\rangle\]
where $\overline{\bf R}={\bf R}\cup\{\infty\}$ and $\overline{\bf R}_+=\{x\in \overline{\bf R}:x\geq 0\}$.

{\em $\DC$ interpretations} extend $\ITL$ interpretations to provide values for state variables, which are functions of type ${\bf R}\rightarrow\{0,1\}$ that satisfy the following {\em finite variability} requirement:

\begin{quote}
For every pair $\tau_1,\tau_2\in{\bf R}$ such that $\tau_1<\tau_2$, and every state variable $P$ there exist an $n<\omega$ and $\tau_1',\ldots,\tau_n'\in{\bf R}$ such that $\tau_1=\tau_1'<\ldots<\tau_n'=\tau_2$ and $I(P)$ is constant on the semi-open intervals $[\tau_i',\tau_{i+1}')$, $i=1,\ldots,n-1$.
\end{quote}

\noindent
Given an interpretation $I$, the values $I_\tau(S)$ of state expressions $S$ at time $\tau\in{\bf R}$ are defined by the equalities

\begin{tabular}{lcl}
$I_\tau({\bf 0})$ & $=$ & $0$\\
$I_\tau(P)$ & $=$ & $I(P)(\tau)$ for state variables $P$\\
$I_\tau(S_1 \Rightarrow S_2)$ & $=$ & $\max(1-I_\tau(S_1),I_\tau(S_2))$\\
\end{tabular}

\noindent
The value $I_\sigma(\int S)$ of duration term $\int S$ at interval $\sigma\in\tilde{\bf I}(\overline{\bf R})$ is defined by the equality
\[\textstyle I_\sigma(\int S)=\int\limits_{\min\sigma}^{\max\sigma}I_\tau(S)d\tau\]
Note that $\textstyle I_\sigma(\int S)$ can be $\infty$ for $\sigma\in{\bf I}^\infin(\overline{\bf R})$. The values of other kinds of terms and $\models$ are defined as in $\ITL$.

\subsubsection{Abbreviations}

The boolean connectives $\neg$, $\vee$, $\wedge$ and $\Leftrightarrow$ are used in state expressions as abbreviations in the usual way. The following abbreviations are specific to $\DC$:

\begin{tabular}{ll}
${\bf 1}\rightleftharpoons\neg{\bf 0}$\\
$\pred{S}\rightleftharpoons\int S=\ell\wedge\ell\not=0$\\
\end{tabular}

\noindent
Sometimes $\ell$ is introduced as an abbreviation for $\int{\bf 1}$.

\subsubsection{Proof system}
\label{dcinfax}

The axioms and rules below were proposed in \cite{HZ92} for $\DC$ with {\em finite} intervals.

\begin{tabular}{ll}
$(DC1)$ &$\int{\bf 0}=0$\\
$(DC2)$ &$\int{\bf 1}=\ell$\\
$(DC3)$ &$\int S\geq 0$\\
$(DC4)$ &$\int S_1+\int S_2=\int(S_1\vee S_2)+\int(S_1\wedge S_2)$\\
$(DC5)$ &$(\int S=x; \int S=y)\Rightarrow \int S=x+y$\\
$(DC6)$ &$\int S_1=\int S_2$ if $S_1$ and $S_2$ are propositionally equivalent\\
\raisebox{1mm}{$(IR1)$} &$\displaystyle\frac{[\ell=0/A]\varphi
\ \varphi\Rightarrow[A\vee(A;\pred{S}\vee \pred{\neg S})/A]\varphi}{[\top/A]\varphi}$\\
\raisebox{1mm}{$(IR2)$} &$\displaystyle\frac{[\ell=0/A]\varphi
\ \varphi\Rightarrow[A\vee(\pred{S}\vee \pred{\neg S};
A)/A]\varphi}{[\top/A]\varphi}$\\
\end{tabular}

These axioms and rules have been shown to be complete with respect to the finite-interval variant $\langle\langle{\bf R},\leq\rangle,\langle{\bf R}_+,+,0\rangle,\lambda\sigma.\max\sigma-\min\sigma\rangle$ of $F_{\bf R}$ relative to validity in the class of the $\ITL$ models which are based on the finite-interval variant of $F_{\bf R}$ in \cite{HZ92}.

The correctness of $IR1$ and $IR2$ is based on the finite variability of state. Since every finite interval can be partitioned into finitely many subintervals in which the state expression $S$ is constant, proving the validity of a property $\varphi$ about zero-length intervals and proving that the validity of $\varphi$ at intervals with $n$ alternations of the value of $S$ implies the validity of the same property about intervals with $n+1$ such alternations is sufficient to conclude that $\varphi$ holds about intervals with any finite number of alternations of the value of $S$. This, by the assumption of finite variability, means that $\varphi$ is valid about all intervals. The completeness proof from \cite{HZ92} involves two theorems which can be derived using the rules $IR1$ and $IR2$, instead of the rules themselves. The second of these theorems does not hold for infinite intervals and therefore we modify it appropriately:

\begin{tabular}{ll}
$(T1)$ & $\ell=0\vee(\pred{S};\top)\vee(\pred{\neg S};\top)$\\
$(T2)$ & $\ell=0\vee\ell=\infty\vee(\top;\pred{S})\vee(\top;\pred{\neg S})$\\
\end{tabular}

\noindent
The use of $T1$ and $T2$ instead of $IR1$ and $IR2$ brings technical convenience to the representation of $\DC$ as a theory in $\ITL$ with $DC1$-$DC6$, $T1$ and $T2$ as its axioms in the proof of relative completeness. 

We take $DC1$-$DC6$, $T1$ and the infinite-interval version of $T2$ as axioms to form a relatively complete proof system for $\DC$ with infinite intervals and disregard the rules $IR1$ and $IR2$ in the rest of the paper. The proof of the relative completeness of this system follows closely the pattern of the original proof from \cite{HZ92}. It appears as part of the proof of the relative completeness of our infinite-interval-based system of probabilistic $\DC$ in Section \ref{relativecompleteness}.

\subsection{Probabilistic $\DC$ for real time}

Probabilistic $\DC$ was first introduced for discrete time in \cite{LRSZ92}. There is a chapter on discrete time probabilistic $\DC$ in \cite{ZH04} too. Here follows the formal definition of real-time probilistic $\DC$ as introduced in \cite{DZ99}. 

\subsubsection{Real-time probabilistic automata}

The semantics of the real-time probabilistic $\DC$ as originally proposed in \cite{DZ99} is based on a class of real-time probabilistic automata.

\begin{defi}
\label{pautomata}
A {\em finite probabilistic timed automaton} is a system of the form
\begin{equation}\label{fpta}
{\bf
A}=\langle S,A,s_0,\langle q_a,a\in A\rangle,\langle
p_a:a\in A\rangle\rangle
\end{equation}
where:

$S$ is a finite set of {\em states};

$A\subset\{\langle s,s'\rangle:s,s'\in S,s\not=s'\}$ is a set of {\em
transitions};

$s_0\in S$ is called the {\em initial state};

$q_a\in [0,1]$ is the {\em choice probability} for transition $a\in A$;

$p_a\in ({\bf R}_+\rightarrow{\bf R}_+)$ is the {\em duration
probability density} of transition $a$.

Given the automaton ${\bf A}$, $A_s$ denotes $\{s'\in
S:\langle s,s'\rangle\in A\}$. If $a\in A$ and $a=\langle s,s'\rangle$, then $s$ and $s'$ are denoted by $a^-$ and $a^+$, respectively. Choice probabilities $q_a$ are required to satisfy $\sum\limits_{a\in A_s}q_a=1$ for $A_s\not=\emptyset$. Probability densities $p_a$ are required to satisfy $\int\limits_0^\infty p_a(\tau)d\tau=1$.
\end{defi}

An automaton ${\bf A}$ of the form (\ref{fpta}) works by going through a finite or infinite sequence of states $s_0$, $s_1$, \ldots , $s_n$, \ldots  such that
$\langle s_i,s_{i+1}\rangle\in A$ for all $i$. Each transition has a duration 
$d_i$, which is the time that elapses before $s_i$ changes to $s_{i+1}$. Thus individual {\em behaviours} of ${\bf A}$ can be represented as sequences of the form
\begin{equation}
\label{beh}
\langle a_0,d_0\rangle,\ldots,\langle a_n,d_n\rangle,\ldots
\end{equation}
where $a_i\in A$, $d_i\in {\bf R}_+$, $a^-_0=s_0$ and $a^+_i=a^-_{i+1}$ for all $i$. Having arrived at state $s$, ${\bf A}$ chooses transition $a\in A_s$ with probability $q_a$. The probability for the duration of $a$ to be in $[\tau_1,\tau_2]$ is $\int\limits_{\tau_1}^{\tau_2} p_a(\tau)d\tau$. 

Automata of the above type are closely related to the probabilistic real-time processes known from \cite{ACD91a,ACD91b}. 

\subsubsection{$\DC$ models for real-time probabilistic automata behaviours}

Probabilistic $\DC$ was introduced in \cite{DZ99} for vocabularies built to describe the behaviours of given real-time probabilistic automata. The $\DC$ vocabulary ${\bf L}_{\bf A}$ for (\ref{fpta}) has the states $s\in S$ as its state variables. The only other non-logical symbols are the mandatory ones. A $\DC$ interpretation of ${\bf L}_{\bf A}$ describes the behaviour (\ref{beh}) of ${\bf A}$ if for all $i<\omega$ $\tau\in\left[\sum\limits_{j<i}d_j,\sum\limits_{j\leq i}d_j\right)$ implies that $I_\tau(s_k)=1$ just for $k=i$.

\subsubsection{Satisfaction probability of $\DC$ formulas and probabilistic $\DC$ for real time}

Given a real-time probabilistic automaton (\ref{fpta}), the set ${\bf W}_{\bf A}$ of all the interpretations of ${\bf L}_{\bf A}$ which describe possible behaviours of ${\bf A}$ can be endowed with a probability function $\mu_{\bf A}$. Given $A\subseteq {\bf W}_{\bf A}$, $\mu_{\bf A}(A)$ can be defined as the probability for ${\bf A}$ to have a behaviour described by an interpretation in $A$. The sets $A$ in the domain of $\mu_{\bf A}$ should be chosen from some appropriate boolean algebra of subsets of $2^{{\bf W}_{\bf A}}$. Details on the definition of $\mu_{\bf A}$, including explicit formulas for $\mu_{\bf A}$ in terms of $p_a$ and $q_a$, can be found in \cite{DZ99}.

Given $\tau\in{\bf R}_+$ and a $\DC$ formula $\varphi$ in the vocabulary ${\bf L}_{\bf A}$, the value of  the {\em $\PDC$ term $\mu_{\bf A}(\varphi)(\tau)$} is defined as
\[\mu_{\bf A}(\{I\in {\bf W}_{\bf A}:I,[0,\tau]\models\varphi\}).\]
Probabilistic $\DC$ for real time was introduced in \cite{DZ99} by enhancing $\DC$ with terms of the form $\mu(\varphi)(t)$ where $\varphi$ is a $\DC$ formula in ${\bf L}_{\bf A}$ for some automaton ${\bf A}$ and $t$ is a term. The values of such terms were defined by the equality
\[I_\sigma(\mu(\varphi)(t))=\mu_{\bf A}(\varphi)(I_\sigma(t)).\]
Note that $I_\sigma(\mu(\varphi)(t))$ depends on $\sigma$ only through the value of $t$. This means that $\mu(\varphi)(t)$ is rigid iff $t$ is.

\section{Probabilistic $\ITL$ with infinite intervals}

In this section we extend abstract-time $\ITL$ with infinite intervals by a probability operator which generalises the operator $\mu(.)(.)$ of $\PDC$ from \cite{LRSZ92,DZ99}. The new probability operator is more expressive and syntactically simpler than $\mu(.)(.)$. Instead of the binary $\mu(\varphi)(t)$ we use a unary $p(\varphi)$ which takes the formula argument $\varphi$ of $\mu$. The semantics of $p(\varphi)$ given below makes it clear that the term argument $t$ which determines the length of the interval at which $\varphi$ is to be evaluated need not be written separately because $\mu(\varphi)(t)$ can be expressed as $p((\varphi\wedge\ell=t;\top))$. To accomodate the arithmetics of probabilities, abstract-time frames for the new system of probabilistic $\ITL$ include a similarly abstract probability domain. We use the acronym $\PITL$ for the new system. $\PITL$ and its proof system is the main topic of this paper. As it becomes clear below, $\PITL$ can be extended to $\PDC$ in a straightforward way.

\subsection{Language}

$\PITL$ vocabularies are {\em two-sorted}, with durations and probabilities being the two sorts. For this reason, instead of just arities, the non-logical symbols have {\em types} which determine the sorts of each argument in the cases of function and relation symbols, and the sort of terms built using the symbol for constants, variables and function symbols. A term or atomic formula $s(t_1,\ldots,t_{\# s})$ is well formed only if the sorts of the argument terms $t_1$, \ldots , $t_{\# s}$ match the type of $s$. 

Along with the mandatory non-logical symbols $0$, $\infty$, $+$ and $\ell$ of the duration sort, $\PITL$ vocabularies are required to include the rigid constants $0$ and $1$ and addition $+$ of the probability sort. Equality $=$ is included for each sort too. We use the same characters to denote these otherwise distinct symbols as long as this causes no confusion. We assume countably infinite sets of individual variables of either sort and no more than countably-infinite sets of other symbols in $\PITL$ vocabularies.

The syntax of $\PITL$ terms extends that from $\ITL$ by terms of the form $p(\varphi)$ where $\varphi$ is a formula. These terms are of the probability sort and we call them {\em probability terms}. $FV(p(\varphi))=FV(\varphi)$ and $p(\varphi)$ is rigid iff $\varphi$ is rigid. 

The syntax of formulas is as in $\ITL$. 

\subsection{Models and satisfaction}
\label{pitlsemantics}

The main part of a $\PITL$ model is a collection of interpretations of
the given vocabulary into a given two-sorted frame for $\ITL$ with
infinite intervals. These interpretations are meant to describe the
possible behaviours of a modelled system. Unlike the original $\PDC$
models, which assume a global probability function that is derived
from the laws of probabilistic behaviour of appropriate automata, we
assume a probability distribution to model the probabilistic branching
of every behaviour at every time point. Restrictions on the system of
probability distributions which, e.g., force them to model the choice
and duration probabilities of an appropriate automaton can be imposed
by additional axioms such as those from Section
\ref{modellingautomata}.

\begin{defi}
A $\PITL$ frame is a tuple of the form \[F=\langle\langle
T,\leq,\infty\rangle,\langle D,+,0,\infty\rangle,\langle
U,+,0,1\rangle,m\rangle\ ,\] where $\langle T,\leq,\infty\rangle$,
$\langle D,+,0,\infty\rangle$ and $m$ are as in frames for $\ITL$ with
infinite intervals and $\langle U,+,0,1\rangle$ is a commutative
monoid with the additional constant $1$, which is called the {\em
probability domain}. $\langle U,+,0,1\rangle$ is supposed to satisfy
some additional axioms. Here follows the full list:

\begin{tabular}{ll}
$(U1)$ & $x+(y+z)=(x+y)+z$\\
$(U2)$ & $x+y=y+x$\\
$(U3)$ & $x+0=x$\\
$(U4)$ & $x+y=x+z\Rightarrow y=z$\\
$(U5)$ & $x+y=0\Rightarrow x=y=0$\\
$(U6)$ & $\exists z(x+z=y\vee y+z=x)$\\
$(U7)$ & $0\not=1$\\

\end{tabular}
\end{defi}

We use the same symbols for $+$ and $0$ in both duration domains and probability domains, despite that they are different entities, as long as this causes no confusion. Probability domains are assumed to be ordered by the relation $\leq$ which is defined by (\ref{leqdef}) like in the case of durations.

For the rest of the section ${\bf L}$ denotes some $\PITL$ vocabulary and $F$ is some $\PITL$ frame with its components named as above.

\begin{defi}
A {\em $\PITL$ interpretation of ${\bf L}$ into $F$} is a function $I$ on ${\bf L}$ which satisfies the conditions:

$I(c), I(x)\in A$ for rigid constants $c$ and individual variables $x$ where $A$ is either $D$ or $U$, depending on the sort of the symbol; 

$I(f)\in (A_1\times\ldots\times A_{\# f}\rightarrow A_{{\# f}+1})$ for rigid function symbols $f$ where $A_1,\ldots,A_{{\# f}+1}$ are either $D$ or $U$ each, depending on the sort of the respective argument of $f$ and the sort of the value of $f$.

$I(R)\in (A_1\times\ldots\times A_{\# R}\rightarrow \{0,1\})$ for rigid relation symbols $R$ where $A_1,\ldots,A_{\# R}$ are chosen as for function symbols;

$I(c)\in (\tilde{\bf I}(T)\rightarrow A)$, $I(f)\in (\tilde{\bf I}(T)\times A_1\times\ldots\times A_{\# f}\rightarrow A_{{\# f}+1})$ and \\ $I(R)\in (\tilde{\bf I}(T)\times A_1\times\ldots\times A_{\# R}\rightarrow \{0,1\})$ for flexible $c$, $f$ and $R$ where the $A$s are chosen as for rigid symbols;

$I(0)=0$, $I(+)=+$ and $I(=)$ is $=$ for $0$, $+$ and $=$ of either sort and its corresponding domain in $F$. $I(1)$ is the constant $1$ from $U$. $I(\infty)=\infty$ and $I(\ell)=m$ like with $\ITL$ interpretations.
\end{defi}

Consider a non-empty set ${\bf W}$, a function $I$ on ${\bf W}$ into the set of the $\PITL$ interpretations of the fixed vocabulary ${\bf L}$ into the fixed frame $F$ and a function $P$ of type ${\bf W}\times T\times 2^{\bf W}\rightarrow U$. Let $I^w$ and $P^w$ abbreviate $I(w)$ and $\lambda\tau,X.P(w,\tau,X)$, respectively, for all $w\in{\bf W}$. $I^w$ and $P^w$, $w\in{\bf W}$, are intended to represent the set of behaviours and the associated probability distributions for every $\tau\in T$ in the $F$-based $\PITL$ models for ${\bf L}$ to be defined below.

\begin{defi}
Let $\tau\in T$. We define the equivalence relation $\equiv_\tau$ on ${\bf W}$ for all $\tau\in T$ by putting $w\equiv_\tau v$ iff

$I^w(s)=I^v(s)$ for all rigid symbols $s\in{\bf L}$, except possibly the individual variables;

$I^w(s)(\sigma,d_1,\ldots,d_{\# s})=I^v(s)(\sigma,d_1,\ldots,d_{\# s})$ for all flexible $s\in{\bf L}$, all $d_1,\ldots,d_{\# s}$ from the appropriate domains and all $\sigma\in\tilde{\bf I}(T)$ such that $\max\sigma\leq\tau$;

$P^w(\tau',X)=P^v(\tau',X)$ for all $X\subseteq{\bf W}$ and all $\tau'\leq\tau$.

\noindent
Given $w\in{\bf W}$ and $\tau\in T$, we denote the set
\[\{v\in{\bf W}:v\equiv_\tau w\}\]
by ${\bf W}_{w,\tau}$. 
\end{defi}

Members of ${\bf W}$ which are $\tau$-equivalent stand for the same behaviour up to time $\tau$. If $\tau_1>\tau_2$, then $\equiv_{\tau_1}\subset\equiv_{\tau_2}$ and $w\equiv_\infty v$ holds iff $P^w=P^v$ and $I^w$ and $I^v$ agree on all symbols, except possibly some individual variables. ${\bf W}_{w,\tau}$ is the set of those $v\in{\bf W}$ which represent the probabilistic branching of $w$ from time $\tau$ onwards.

\begin{defi}
\label{pitlmodels}
A {\em general $\PDC$ model} for ${\bf L}$ is a tuple of the form $\langle F,{\bf W},I,P\rangle$ where $F$, ${\bf W}$, $I$ and $P$ are as above and satisfy the following requirements for every $w\in{\bf W}$: 

{\em ${\bf W}$ is closed under variants of interpretations.} If $w\in{\bf W}$, $x$ is an individual variable from ${\bf L}$ and $a$ is in the domain from $F$ which corresponds to the sort of $x$, then there is a $v\in{\bf W}$ such that $P^v=P^w$ and $I^v=(I^w)_x^a$.

{\em $P^w$ represents probability measures.} The function $\lambda X.P^w(\tau,X)$ for every $w\in W$ and $\tau\in T$ is a finitely additive probability measure on the boolean algebra
\begin{equation}\label{powerset}
\langle 2^{\bf W},\cap,\cup,\emptyset,{\bf W}\rangle.
\end{equation}
and satisfies the equality
\[P^w(\tau,X)=P^w(\tau,X\cap {\bf W}_{w,\tau})\mbox{ for all }X\subseteq{\bf W},\]
which means that $\lambda X.P^w(\tau,X)$ is required to be {\em concentrated} on the set ${\bf W}_{w,\tau}$.
\end{defi}

Informally, a general $\PITL$ model is based on a set ${\bf W}$ of descriptions of infinite behaviours made by means of the $\ITL$ interpretations $I^w$ which are associated with each $w\in{\bf W}$. All the interpretations $I^w$ are into the same frame $F$ and are supposed to treat rigid symbols identically to express that, e. g., arithmetics is the same in all behaviours. It is assumed that, given a finite initial part of a behaviour $w$ until time $\tau$, the modelled system can proceed according to a description within the set ${\bf W}_{w,\tau}$ of the behaviours which are the same as $w$ up to time $\tau$. The probability for the system to choose a behaviour in $X\subseteq {\bf W}_{w,\tau}$ is $P^w(\tau,X)$. 

Next we define term values $w_\sigma(t)$ and the satisfaction of formulas in $\PITL$ models. The definitions of term values, the modelling relation $\models$ and its associated notation $\sem{.}$ for terms, formulas, models and time intervals in $\PITL$ are given by the following clauses, where the components of the model $M$ are named as above: 

\qquad

\noindent{\em Term values}

\qquad

\begin{tabular}{llll}
$w_\sigma(x)$ & $=$ & $I^w(x)$ & for variables $x$\\

$w_\sigma(c)$ & $=$ & $I^w(c)$ & for rigid $c$\\

$w_\sigma(f(t_1,\ldots,t_{\# f}))$ & $=$ & $I^w(f)(w_\sigma(t_1),\ldots,w_\sigma(t_{\#f}))$ & for rigid $f$\\
$w_\sigma(c)$ & $=$ & $I^w(c)(\sigma)$ & for flexible $c$\\

$w_\sigma(f(t_1,\ldots,t_{\# f}))$ & $=$ & $I^w(f)(\sigma,w_\sigma(t_1),\ldots,w_\sigma(t_{\#f}))$ & for flexible $f$\\
$w_\sigma(p(\psi))$ & $=$ & $P^w(\max\sigma,\sem{\psi}_{M,w,\sigma})$\\
\end{tabular}

\qquad

\noindent
Here $\sem{\psi}_{M,w,\sigma}$ stands for
\begin{equation}
\label{semvarphi}
\{v\in{\bf W}_{w,\max\sigma}:(\forall v'\in W)(P^{v'}=P^v\wedge I^{v'}=(I^v)_{x_1\ ,\ \ldots\ ,\ x_n}^{I^w(x_1),\ldots,I^w(x_n)}\rightarrow M,v',[\min\sigma,\infty]\models\psi
)\},
\end{equation}
where $x_1,\ldots,x_n$ are the free variables of $\psi$. This means that $\sem{\psi}_{M,w,\sigma}$ consists of the behaviours $v$ which are $\max\sigma$-equivalent to $w$ and satisfy $\psi$ at the infinite interval starting at $\min\sigma$.

\qquad

\noindent{\em Satisfaction of formulas}

\qquad

\noindent

\begin{tabular}{lp{3.8in}}
$M,w,\sigma\not\models\bot$\\
$M,w,\sigma\models R(t_1,\ldots,t_{\# R})$ & iff $I^w(R)(w_\sigma(t_1),\ldots,w_\sigma(t_{\#R}))=1$ for rigid $R$\\
$M,w,\sigma\models R(t_1,\ldots,t_{\# R})$ & iff $I^w(R)(\sigma,w_\sigma(t_1),\ldots,w_\sigma(t_{\#R}))=1$ for flexible $R$\\
$M,w,\sigma\models(\varphi\Rightarrow\psi)$ & iff either
$M,w,\sigma\not\models\varphi$ or $M,w,\sigma\models\psi$\\
$M,w,\sigma\models(\varphi;\psi)$ & iff
$M,w,\sigma_1\models\varphi$ and $M,w,\sigma_2\models\psi$\\
& for some $\sigma_1\in{\bf I}^\fin(T_F)$ and $\sigma_2\in\tilde{\bf I}(T_F)$ such that $\sigma_1;\sigma_2=\sigma$\\
$M,w,\sigma\models\exists x\varphi$ & iff
$M,v,\sigma\models\varphi$ for some $v\in{\bf W}$ and some $a$ from the domain of the sort of $x$ such that $P^v=P^w$ and $I^v=(I^w)^a_x$\\
\end{tabular}

\qquad

\noindent
Obviously $M,w,\sigma\models\psi$ iff $\langle F,I^w\rangle,[\min\sigma,\infty]\models_{\ITL}\psi$ as in non-probabilistic $\ITL$ for $\psi$ with no occurrence of probability terms. 

The probability functions $\lambda X.P^w(\tau,X)$ for $w\in{\bf W}$ and $\tau\in T$ in general $\PITL$ models $M=\langle F,{\bf W},I,P\rangle$ are needed just as much as they provide values for probability terms. That is why these functions need not be defined on the entire algebra (\ref{powerset}). Indeed, it is sufficient for $\lambda X.P^w(\tau,X)$ to be defined on the (generally smaller) algebra
\[\langle\{\sem{\psi}_{M,w,\sigma}:\psi\in{\bf L},\sigma\in\tilde{\bf I}(T),\max\sigma=\tau\},\cap,\cup,\emptyset,{\bf W}_{w,\tau}\rangle,\]
which we denote by ${\bf B}_{M,w,\tau}$. This observation justifies the broadening of the definition of general $\PITL$ models as follows.

\qquad 

\noindent
{\bf Amendment to Definition \ref{pitlmodels}} {\em Structures of the form $M=\langle F,{\bf W},P,I\rangle$ from Definition \ref{pitlmodels}, but with their probability functions $\lambda X.P^w(\tau,X)$ defined just on the respective algebras ${\bf B}_{M,w,\tau}$, are general $\PITL$ models too.}

\qquad

\noindent
{\bf Example} \ A $\PITL$ model $M_{\bf A}=\langle F_{\bf R},{\bf W}, P, I\rangle$ which is based on the real-time frame $F_{\bf R}$ and describes the working of a given probabilistic automaton ${\bf A}$ of the form (\ref{fpta}) from Definition \ref{pautomata} can be defined as follows. The vocabulary of $M_{\bf A}$ includes of the mandatory symbols $0$, $+$, $\ell$, \ldots , the transitions $a\in A$ as flexible $0$-ary predicate letters, and the choice probabilities $q_a$ as rigid constants. As for the duration probability densities $p_a$, it is convenient to have rigid unary function symbols $P_a$ which denote the functions $\lambda\tau.\int\limits_0^\tau p_b(t)dt$. The vocabulary does not provide direct reference to the {\em states} of ${\bf A}$ as done in $\PDC$; behaviour is instead described in terms of {\em transitions} whose beginnings and ends mark the times of state change. Every possible behaviour (\ref{beh}) is described by a $w\in{\bf W}$ such that  $I^w(a_i)\left(\left[\sum\limits_{j<i}d_j, \sum\limits_{j\leq i}d_j\right]\right)=1$. $I^w(a)([\tau_1,\tau_2])=1$ holds only if $[\tau_1,\tau_2]$ is one of the intervals $\left[\sum\limits_{j<i}d_j, \sum\limits_{j\leq i}d_j\right]$, $i<\omega$, and $a$ is the corresponding $a_i$. Given $w\in{\bf W}$ and $\tau\in{\bf R}_+$, $P^w(\tau,X)$ is defined as the probability for the finite behaviour described by $w$ up to time $\tau$ to develop into an infinite behaviour from $X$. For instance, let
\[\langle F_{\bf R},I^w\rangle,[0,\tau]\models(\top;a),\]
which means that the interval $[0,\tau]$ accommodates a finite sequence of transitions which ends at $a$ and a new transition is to begin at time $\tau$. Then, if $b\in A$ and $b^-=a^+$, $P^w$ satisfies the equality
\begin{equation}\label{exampleeq}
\textstyle P^w(\tau,\sem{(b\wedge x\leq\ell \wedge \ell\leq y;\top)}_{M_{\bf A},w,[\tau,\tau]})=q_b \int\limits_{I^w(x)}^{I^w(y)} p_b(t)dt.
\end{equation}
Here $\sem{(b\wedge x\leq\ell \wedge \ell\leq y;\top)}_{M_{\bf A},w,[\tau,\tau]}$ is the set of all the behaviours in which the part of $w$ until time $\tau$ is continued by transition $b$ and the duration of $b$ is in the range $[I^w(x),I^w(y)]$. The equality (\ref{exampleeq}) describes the probability for such a development to take place. If the source state of $b$ is $s_0$, then (\ref{exampleeq}) holds for $\tau=0$ and all $w$ as well. (\ref{exampleeq}) entails that the formula
\begin{equation}\label{exprobabformula}
\neg(\top;a;\ell = 0\wedge p((b\wedge x\leq\ell\wedge\ell\leq y;\top))\not= q_b.(P_b(y)-P_b(x))),
\end{equation}
is valid in $M_{\bf A}$. This formula means that the probability for a behaviour satisfying $(b\wedge x\leq\ell\wedge\ell\leq y;\top)$ to take place after $(\top;a)$ is $q_b.(P_b(y)-P_b(x))$, which, by the chosen interpretation of $P_b$, is equal to the righthand side of (\ref{exampleeq}).

Describing probabilistic real-time automata in a system of infinite interval probabilistic duration calculus which corresponds to $\PITL$ is the topic of Section \ref{modellingautomata}.  

We conclude the definition of $\PITL$ semantics with a remark on the underlying model of time. As mentioned in the introduction, $PDC$ and $\PITL$ are essentially branching-time interval logics. An alternative way to introduce the semantics of $\PITL$ could be to use partially ordered time domains $\langle T,\leq\rangle$ with some additional conditions on their maximal linearly ordered subsets. Given a $\PITL$ model $\langle F,{\bf W},I, P\rangle$ as described above, we can construct the corresponding partially ordered time domain by taking 
\[\{\langle\tau,{\bf W}_{w,\tau}\rangle:\tau\in T,w\in{\bf W}\}\]
as the set of time points and defining the partial ordering by the clause
\[\langle\tau_1,W_1\rangle\leq\langle \tau_2,W_2\rangle\mbox{ iff }\tau_1\leq\tau_2\mbox{ and }W_1\supseteq W_2.\]
The chosen way to define $\PITL$ models saves us the need to reformulate results on $\ITL$ which are essentially linear-time and are therefore known in the literature just for the sake of notation differences.

\section{A proof system for $\PITL$}
\label{proofsystem}

In this section we propose axioms and a proof rule for $\PITL$. If added to the complete proof system for $\ITL$ with infinite intervals from \cite{WX04} given in Section \ref{proofsystemitl}, these axioms and the rule form a system which is complete for $\PITL$ with respect to its abstract semantics introduced in Section \ref{pitlsemantics}. This is demonstrated in Section \ref{pitlcompleteness}. Most of our axioms and rule are modifications of those for $\PNL$ from \cite{Gue00probab}. The modifications were made to account for the use of infinite intervals instead of the $\NL$ expanding modalities. Some simple infinite-interval-specific properties of $p(.)$ are handled by completely new axioms. 

\subsection{The system}

\qquad

\qquad

\noindent{\em Extensionality}

\qquad

\begin{tabular}{lll}
$(P_;)$ & $(\ell=x;p(\psi)=y)\Rightarrow p((\ell=x;\psi))=y$\\
$(P_\infty)$ & $\ell=\infty\Rightarrow(\varphi\Leftrightarrow p(\varphi)=1)$\\
$(P_\leq)$ & $\displaystyle\frac{\vdash(\varphi;\ell=\infty)\Rightarrow(\psi\Rightarrow\chi)}{\vdash\varphi\wedge\ell<\infty\Rightarrow p(\psi)\leq p(\chi)}$\\
\end{tabular}

\qquad

\noindent{\em Arithmetics of probabilities}

\qquad

\qquad

\begin{tabular}{lll}
$(P_\bot)$ & $p(\bot)=0$\\
$(P_\top)$ & $p(\top)=1$\\
$(P_+)$ & $p(\varphi)+p(\psi)=
p(\varphi\vee\psi)+p(\varphi\wedge\psi)$\\
\end{tabular}

\qquad

$P_;$ expresses that the probability function $P_{\langle I,P\rangle,\max\sigma}$ which is used to evaluate $I_{\sigma}(p(\psi))$ depends on the end point $\max\sigma$ and not on the whole reference interval $\sigma$. $P_\infty$ means that having the entire future as the reference interval renders all properties deterministic: no alternative behaviours are possible "from $\infty$ on"; the interpretations $I'$ from $\langle I',P'\rangle\in{\bf W}_{\langle I,P\rangle,\infty}$ can differ from $I$ only on individual variables and such differences are disregarded in the definition (\ref{semvarphi}) of $\sem{\varphi}_{M,\langle I,P\rangle,\sigma}$ for all intervals $\sigma$. The rule $P_\leq$ means that if a property $\chi$ is a logical consequence of another property $\psi$, then the probability of $\chi$ is at least as big as that of $\psi$. The probabilities of $\psi$ and $\chi$ are compared in the context of a finite-interval condition $\varphi$. The case of an infinite-interval condition $\varphi$ is handled by axiom $P_\infty$. The axioms $P_\bot$, $P_\top$ and $P_+$ are self-explanatory. The correctness of the axioms and the rule is straightforward. The use of $\vdash$ in $P_\leq$ is to emphasize that we intend to apply this rule only to theorems. The maximal consistent sets of formulas which take part in our completeness argument for this proof system below need not be closed under $P_\leq$.

The rule $P_\leq$ can be classified under the category of probability arithmetics as well, because of the meaning of $\leq$, which is defined by (\ref{leqdef}). However, we find its role as an extensionality rule, which is further highlighted by the derived rule $\PITL 1$ below, to be more important.

\subsection{Some useful $\PITL$ theorems and a derived rule}
\label{usefultheorems}

The $\PITL$ theorems $\PITL 2$ and $\PITL 3$ and the derived rule $\PITL 1$ below are used in proofs in the rest of the paper. $\PITL 4$ is included to highlight the role of infinite intervals in the semantics of probability terms and the effect of $\tau$-equivalence on probabilities, respectively.

\qquad

\begin{tabular}{lll}
$(P_\leq^\infty)$ & $\displaystyle\frac{(\varphi;\ell=\infty)\vee(\varphi\wedge\ell=\infty)\Rightarrow(\psi\Rightarrow\chi)}{\varphi\Rightarrow p(\psi)\leq p(\chi)}$\\
$(\PITL 1)$ & $\displaystyle\frac{\varphi\Leftrightarrow\psi}{p(\varphi)=p(\psi)}$\\
$(\PITL 2)$ & $p(\varphi)+p(\neg\varphi)=1$\\
$(\PITL 3)$ & $p(\varphi)<p(\psi)\Rightarrow p(\psi\wedge\neg\varphi)\not=0$\\

$(\PITL 4)$ & $p(\varphi)=p(\varphi\wedge\ell=\infty)$\\
\end{tabular}

\qquad

\noindent
Here follows a derivation for $P_\leq^\infty$. The purely $\ITL$ parts are skipped and marked ``$\ITL$'' for the sake of brevity. Applications of the axioms $U1$-$U7$ for arithmetics on probability domains are skipped without comments.

\qquad

\begin{tabular}{rll}
1 & $(\varphi;\ell=\infty)\Rightarrow (\psi\Rightarrow\chi)$ & assumption, $\ITL$\\
2 & $\varphi\wedge\ell<\infty\Rightarrow p(\psi)\leq p(\chi)$ & 1, $P_\leq$\\
3 & $\ell=\infty\wedge\varphi\Rightarrow (p(\psi)=0\wedge p(\chi)=0)$ & assumption, $P_\infty$, $\PITL 2$\\ & $\qquad\qquad\qquad \vee (p(\psi)=0\wedge p(\chi)=1)$\\ & $\qquad\qquad\qquad \vee(p(\psi)=1\wedge p(\chi)=1)$ \\
4 & $ \varphi\wedge\ell=\infty\Rightarrow p(\psi)\leq p(\chi) $ & 3, $\ITL$\\
5 & $\ell<\infty\vee\ell=\infty$ & $\ITL$ \\
6 & $\varphi\Rightarrow p(\psi)\leq p(\chi)$ & 2, 4, 5\\
\end{tabular}

\qquad

\noindent
$\PITL 4$ is obtained by applying $P_\leq^\infty$ to the $\ITL$ theorems
\[\eqalign{
  (\top;\ell=\infty)\vee(\top\wedge\ell=\infty)
&\Rightarrow(\varphi\Rightarrow\varphi\wedge\ell=\infty)\mbox{ and }\cr
  (\top;\ell=\infty) \vee(\top\wedge\ell=\infty)
&\Rightarrow(\ell=\infty\wedge\varphi\Rightarrow\varphi).}\]
The rule $\PITL 1$ is proved by two applications of $P_\leq^\infty$ too. The proofs for $\PITL 2$ and $\PITL 3$ below are included as simple examples of the working of the axioms about arithmetics of probabilities. 

\qquad

\noindent
$\PITL 2$:

\qquad

\begin{tabular}{rll}
1 & $\varphi\wedge\neg\varphi\Leftrightarrow\bot$ & $\ITL$\\
2 & $p(\varphi\wedge\neg\varphi)=p(\bot)$ & 1, $\PITL 1$\\
3 & $p(\varphi\wedge\neg\varphi)=0$ & 2, $P_\bot$\\
4 & $\varphi\vee\neg\varphi\Leftrightarrow\top$ & $\ITL$\\
5 & $p(\varphi\vee\neg\varphi)=p(\top)$ & 4, $\PITL 1$\\
6 & $p(\varphi\wedge\neg\varphi)=1$ & 5, $P_\top$\\
7 & $p(\varphi)+p(\neg\varphi)=p(\varphi\wedge\neg\varphi)+p(\varphi\wedge\neg\varphi)$ & $P_+$\\
8 & $p(\varphi)+p(\neg\varphi)=1$ & 2, 6, 7, $\ITL$\\
\end{tabular}

\qquad

\noindent
$\PITL 3$:

\qquad

\begin{tabular}{rll}
1 & $p(\psi)\leq p(\varphi\vee\psi)$ & $P_\leq^\infty$\\
2 & $p(\varphi)+p(\psi\wedge\neg\varphi)=p(\varphi\wedge\psi\wedge\neg\varphi)+p(\varphi\vee\psi\wedge\neg\varphi)$ & $P_+$\\
3 & $p(\varphi)+p(\psi\wedge\neg\varphi)=p(\varphi\vee\psi)$ & 2, $\PITL 1$, $P_\bot$\\
4 & $p(\varphi)<p(\psi)\Rightarrow p(\varphi)<p(\varphi\vee\psi)$ & 1\\
5 & $p(\varphi)<p(\psi)\Rightarrow p(\psi\wedge\neg\varphi)\not=0$ & 3, 4\\
\end{tabular}

\section{Completeness of the proof system for $\PITL$}
\label{pitlcompleteness}

In this section we show that the proof system for $\PITL$ from Section \ref{proofsystem} is complete. To exploit the full potential of the abstract semantics of $\PITL$, we prove a strong completeness theorem. It states that every consistent set of $\PITL$ formulas has a model. This is convenient for the study of further extensions of the logic whose syntactic elements can be represented by adding infinitely many non-logical symbols and axioms about them, or when a modelled system is described using infinitely many formulas.  

The main step in this proof is the construction of what is known in model theory as the {\em elementary diagram} $\Delta$ of a $\PITL$ model $M$ for an arbitrary given set of $\PITL$ formulas $\Gamma$ which is consistent in the proposed proof system for $\PITL$. $\Delta$ is a description of $M$ in a $\PITL$ language whose vocabulary has names for all the elements of $M$. To avoid repeating the technical steps which are not specific to the probability operator of $\PITL$ and can be found in the completeness proof for (non-probabilistic) $\ITL$ with infinite intervals from \cite{WX04}, we introduce a translation of the involved $\PITL$ languages into corresponding $\ITL$ languages with appropriate vocabularies and use it to view subsets of the constructed diagram and the whole diagram as complete Henkin theories in (non-probabilistic) $\ITL$ as well.

The model $M$ that we construct is very similar to a canonical model. We stop short of calling it canonical, because of the dedicated technique which is used to build the behaviour representations $v$ which are needed to populate the sets $\sem{\varphi}_{M,w,\sigma}$ for $\varphi$, $\sigma$ and $w$ such that  $M,w,\sigma\models p(\varphi)\not=0$ is supposed to hold.

Without losing generality, we consider only sets of formulas $\Gamma$ which contain $\ell=\infty$. This way we restrict ourselves to seeking the satisfaction of $\Gamma$ at an infinite interval. The satisfaction of a consistent $\Gamma$ which is not consistent with $\ell=\infty$ can be achieved through the satisfaction of
\begin{equation}\label{infiniteell}
\{\ell=\infty\}\cup\{(\gamma\wedge\ell=c;\top):\gamma\in\Gamma\}
\end{equation}
where $c$ is some fresh rigid constant.

The completeness argument involves the application of some non-trivial results about interpolation in $\ITL$. We present them first.

\subsection{Interval-related and Craig interpolation in $\ITL$ with infinite intervals}
\label{interpolation}

Inter\-val-related interpolation for $\ITL$ with finite intervals,
$\NL$ and a subset of $\DC$ with finite intervals and projection onto
state were formulated and proved in \cite{Gue01,Gue04b}. Craig
interpolation was shown to hold for these logics there too. Here we
just formulate interval-related interpolation for $\ITL$ with infinite
intervals in the special form which is convenient for our completeness
argument.

Let ${\bf L}$ and ${\bf L}'$ be two vocabularies for $\ITL$ with infinite intervals. Let ${\bf L}$ and ${\bf L}'$ share their rigid symbols, including the individual variables, and let the only flexible symbol occurring in both ${\bf L}$ and ${\bf L}'$ be $\ell$. Let there be a bijection between the flexible symbols from ${\bf L}\setminus\{\ell\}$ and those from ${\bf L}'$ such that the symbol $s'$ from ${\bf L}'$ which corresponds to $s\in{\bf L}$ is of the same kind and arity as $s$. Let $\varphi'$ denote the result of replacing each flexible symbol $s\in{\bf L}\setminus\{\ell\}$ in a formula $\varphi$ written in ${\bf L}$ by the corresponding $s'\in{\bf L}'$.

\begin{thm}
\label{intervalrelatedinterpolation}
Let $\Phi$ be a finite set of formulas and $\varphi$ and $\psi$ be two more formulas, all written in ${\bf L}$. Let $c$ be a rigid constant in ${\bf L}$. Let 
\[\left(\ell=c\wedge\Box\forall\bigwedge\limits_{\chi\in\Phi}(\chi\Leftrightarrow\chi');\ell=\infty\right)\Rightarrow(\varphi\Rightarrow\psi')\]\
be theorem of $\ITL$ with infinite intervals. Then there is a formula $\theta$ written in ${\bf L}$ such that
\[\varphi\wedge c<\infty\wedge\ell=\infty\Rightarrow(\ell=c\wedge\theta;\ell=\infty)\mbox{ and }(\ell=c\wedge\theta';\ell=\infty)\Rightarrow\psi'\]
are theorems of $\ITL$ as well.
\end{thm}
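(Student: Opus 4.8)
The plan is to reduce this interval-related interpolation statement to ordinary Craig interpolation in a suitable first-order-like reformulation, exploiting the special shape of the hypothesis. First I would observe that the antecedent $\left(\ell=c\wedge\Box\forall\bigwedge_{\chi\in\Phi}(\chi\Leftrightarrow\chi');\ell=\infty\right)$ asserts that we split the (infinite, since the whole thing will be used under $\ell=\infty$) reference interval at the point a finite distance $c$ from its left end, that on the initial finite piece of length $c$ all the formulas $\chi\in\Phi$ agree with their primed counterparts in every subinterval and under every assignment to the free variables, and that the second piece is infinite. The conclusion we want is a single formula $\theta$ in ${\bf L}$ living on that initial length-$c$ segment that captures everything $\varphi$ (evaluated on the whole interval under $\ell=\infty$) needs to know about the ${\bf L}$-flexible symbols in order to force $\psi'$.

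The key technical device is the {\em separation of flexible symbols across the chop point}: because the $\chi\in\Phi$ are forced to be $\equiv_c$-invariant on the left piece and the right piece is infinite, the formulas $\varphi$ and $\psi'$, although each may refer to the left segment, can only "communicate" across the chop through the common vocabulary, which by hypothesis is just $\ell$ together with the rigid symbols (all shared). So I would first rewrite $\varphi$ using the $\ITL$ axioms $A2$, $L2$, $L3$, $P2$, $P3$ and the monotonicity rules into a form $(\varphi_1\wedge\ell=c;\varphi_2)$-style disjunction, pushing the behaviour on $[\min\sigma,\min\sigma+c]$ into an ${\bf L}$-formula and the behaviour on the infinite tail into a separate conjunct; symmetrically for $\psi'$ in ${\bf L}'$. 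After this normalisation the hypothesis becomes, in the finite-interval $\ITL$ with a single extra endpoint marker, an implication between a left-segment ${\bf L}$-statement and a left-segment ${\bf L}'$-statement whose only shared nonlogical flexible symbol is $\ell$ (on a segment of fixed length $c$). Then I would invoke Craig interpolation for $\ITL$ with infinite intervals (cited from \cite{Gue01,Gue04b}) to extract an interpolant $\theta$ in the shared sublanguage extended by the flexible symbols common to the two sides after normalisation — which, by construction, are exactly the ${\bf L}$-flexible ones restricted to the length-$c$ prefix — and then re-expand to obtain the two claimed theorems $\varphi\wedge c<\infty\wedge\ell=\infty\Rightarrow(\ell=c\wedge\theta;\ell=\infty)$ and $(\ell=c\wedge\theta';\ell=\infty)\Rightarrow\psi'$.

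The main obstacle I anticipate is the bookkeeping in the normalisation step: making precise that the behaviour of $\varphi$ on the infinite tail contributes nothing that $\theta$ must record, i.e. that the only constraint $\varphi$ places which is relevant to $\psi'$ is on the finite prefix and is expressible purely in ${\bf L}$ on that prefix. This requires the $\equiv_c$-invariance supplied by the $\Box\forall\bigwedge(\chi\Leftrightarrow\chi')$ conjunct to be used in the right place — essentially to guarantee that once the prefix is fixed, the $\chi$-theory of the whole interval on both the primed and unprimed sides is determined, so that the tail can be chosen freely (it is infinite, hence by $P2$, $P3$ and $L2$ always available) without affecting validity of the implication. A secondary subtlety is handling the disjunctive normal form cleanly: $\Diamond$, $\Box$ and nested chops in $\varphi$ and $\psi$ mean the "prefix part" is not literally a single conjunct but a Boolean combination of chop-formulas, so $\theta$ will be built as a finite disjunction of such combinations, and one must check the interpolation theorem is applied to each disjunct and the results recombined. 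Once the normalisation and the placement of the $\equiv_c$-invariance hypothesis are handled, the appeal to Craig interpolation and the final re-expansion using $A2$, $L2$ and $L3$ is routine.
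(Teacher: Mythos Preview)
The paper does not spell out a proof; it says only that the argument is ``a simple variant of those of the theorems known from \cite{Gue01}, which in their turn follow the pattern of the model-theoretic proof of Craig interpolation that can be seen in, e.g., \cite{ChK73}.'' In other words the intended route is the Robinson joint-consistency style: assume no interpolant $\theta$ exists, use this to build two models---one satisfying $\varphi\wedge c<\infty\wedge\ell=\infty$, one satisfying $\neg\psi'$---whose length-$c$ prefixes can be made to agree, amalgamate them into a single model that refutes the hypothesis, and derive a contradiction. Your proposal is a genuinely different, syntactic route, but it has a real gap.

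The problematic step is the ``normalisation'': rewriting an arbitrary $\ITL$ formula $\varphi$ as a finite Boolean combination of formulas of the shape $(\alpha\wedge\ell=c;\beta)$ so that the prefix behaviour lives in $\alpha$ and the tail behaviour in $\beta$. No such equivalence-preserving transformation exists in general. A formula like $\Box R$ for a flexible $0$-ary $R$ constrains every subinterval, including those that straddle the chop point at $c$; what the prefix part would have to say depends in an essentially infinitary way on what the tail part says, and conversely. The claim that everything $\varphi$ forces about the length-$c$ prefix is captured by a single ${\bf L}$-formula $\theta$ on that prefix is precisely the content of the theorem---you are assuming it as a preprocessing step rather than proving it. Your paragraph on ``the main obstacle'' correctly identifies the difficulty but does not give a mechanism to overcome it; the $\ITL$ axioms $A2$, $L2$, $L3$, $P2$, $P3$ and $\Mono$ let you manipulate chops, but they do not yield a prefix/tail decomposition at a prescribed point.

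There is a second, related problem with the interpolation call itself. You describe the target of Craig interpolation as ``the shared sublanguage extended by the flexible symbols common to the two sides after normalisation---which, by construction, are exactly the ${\bf L}$-flexible ones restricted to the length-$c$ prefix.'' But Craig interpolation is about shared \emph{vocabulary}, and ${\bf L}\cap{\bf L}'$ contains only the rigid symbols and $\ell$; ``flexible symbols restricted to a prefix'' is not a vocabulary that Theorem~\ref{craiginterpolation} can target. The $\theta$ promised here is an ${\bf L}$-formula (generally with flexible ${\bf L}$-symbols), and the passage to $\theta'$ is by the renaming bijection, not by being in a common sublanguage. This is exactly why the model-theoretic amalgamation argument is used in \cite{Gue01} rather than a direct reduction to ordinary Craig interpolation.
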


We use the standard form of Craig interpolation:

\begin{thm}
\label{craiginterpolation}
Let ${\bf L}_1$ and ${\bf L}_2$ be two $\ITL$ vocabularies. Let $\varphi_i$ be a formula of $\ITL$ with infinite intervals written in the vocabulary ${\bf L}_i$, $i=1,2$, and
\[\varphi_1\Rightarrow\varphi_2\]
be a theorem of $\ITL$ with infinite intervals.
Then there is a formula $\theta$ written in the vocabulary ${\bf L}_1\cap{\bf L}_2$ such that both
\[\varphi_1\Rightarrow\theta\mbox{ and }\theta\Rightarrow\varphi_2\]
are such theorems.
\end{thm}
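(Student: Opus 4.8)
The plan is to obtain Craig interpolation from compactness of $\ITL$ with infinite intervals --- available from the strong completeness of the finitary proof system of \cite{WX04} --- together with a model-amalgamation argument over the shared subvocabulary, i.e.\ the reasoning that underlies Robinson's joint consistency theorem, adapted to the interval setting. Throughout, models are regarded as pointed at a reference interval, as in the semantics above, and reference intervals are tracked by naming their endpoints with fresh rigid constants where needed.

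First I would reduce the existence of an interpolant to an unsatisfiability claim. Let $\Theta$ collect all formulas $\theta$ of ${\bf L}_1\cap{\bf L}_2$ such that $\varphi_1\Rightarrow\theta$ is a theorem. It is enough to show that $\Theta\cup\{\neg\varphi_2\}$ has no model: then by compactness there are $\theta_1,\ldots,\theta_n\in\Theta$ with $\vdash(\theta_1\wedge\ldots\wedge\theta_n)\Rightarrow\varphi_2$, and since each $\theta_i$ is a consequence of $\varphi_1$, the conjunction $\theta=\theta_1\wedge\ldots\wedge\theta_n$, which is written in ${\bf L}_1\cap{\bf L}_2$, is the required interpolant.

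So suppose toward a contradiction that some model $N$ over ${\bf L}_2$ satisfies $\Theta\cup\{\neg\varphi_2\}$ at some interval, and let $T$ be the complete ${\bf L}_1\cap{\bf L}_2$-theory of $N$ at that interval; clearly $\Theta\subseteq T$. Then $T\cup\{\varphi_1\}$ is consistent, for otherwise $\vdash\varphi_1\Rightarrow\neg\chi$ for some conjunction $\chi$ of members of $T$, putting $\neg\chi$ into $\Theta\subseteq T$ and contradicting the completeness of $T$; hence $T\cup\{\varphi_1\}$ has a model $M_1$ over ${\bf L}_1$. Now $M_1$ and $N$ agree on all of ${\bf L}_1\cap{\bf L}_2$, and the heart of the proof is to amalgamate them into one $\ITL$ model over ${\bf L}_1\cup{\bf L}_2$ that satisfies $\varphi_1\wedge\neg\varphi_2$ at a common interval --- contradicting the assumed validity of $\varphi_1\Rightarrow\varphi_2$. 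For this I would run the standard elementary-chain ``back-and-forth'' between $M_1$ and $N$ in the shared language, obtaining an ${\bf L}_1$-elementary extension of $M_1$ and an ${\bf L}_2$-elementary extension of $N$ with isomorphic ${\bf L}_1\cap{\bf L}_2$-reducts (including their named intervals), identify those reducts, and interpret the symbols of ${\bf L}_1\setminus{\bf L}_2$ as in the former extension and those of ${\bf L}_2\setminus{\bf L}_1$ as in the latter.

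The hard part is checking that this glued structure is again a legitimate frame-plus-interpretation for $\ITL$ with infinite intervals: that its time domain, duration monoid and measure function satisfy $D1$-$D6$ and $M1$-$M4$ and that the induced ``chop'' splitting of intervals validates $A1$-$A2$, $L1$-$L3$, $S1$ and $P1$-$P3$. Here the special shape of $\ITL$ vocabularies helps decisively: the mandatory symbols $0$, $\infty$, $+$, $=$, $\ell$ and the individual variables always lie in ${\bf L}_1\cap{\bf L}_2$, so the entire interval-and-arithmetic skeleton is already shared and fixed by the common reduct, and the amalgamation only adds genuinely disjoint flexible and rigid symbols on top of it; the frame conditions involved are preserved under unions of elementary chains, so the back-and-forth can be carried out inside the class of models of \cite{WX04}. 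Should this amalgamation interact awkwardly with the constraints on abstract frames, the fallback I would try is a reduction to the finite-interval Craig interpolation of \cite{Gue01,Gue04b}: translate $\ITL$ with infinite intervals into finite-interval $\ITL$ using the kind of fresh-rigid-constant device already employed for (\ref{infiniteell}) to encode the point at infinity, interpolate there, and translate the interpolant back.
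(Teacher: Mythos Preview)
Your proposal is correct and follows essentially the same route as the paper: the paper does not spell out a proof but only remarks that it is a simple variant of the results in \cite{Gue01}, which in turn follow the model-theoretic pattern of Craig interpolation from \cite{ChK73}---precisely the Robinson joint consistency / elementary-chain amalgamation you describe. Your observation that the mandatory symbols $0$, $\infty$, $+$, $=$, $\ell$ lie in ${\bf L}_1\cap{\bf L}_2$ so that the frame skeleton is shared is exactly the point that makes the amalgamation go through in the interval setting.
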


The proofs of the two interpolation theorems are simple variants of those of the theorems known from \cite{Gue01}, which in their turn follow the pattern of the model-theoretic proof of Craig interpolation that can be seen in, e.g., \cite{ChK73}.

\subsection{Consistency in $\PITL$}

\begin{defi}
Given an $\ITL$ ($\PITL$) vocabulary ${\bf L}$, $\ITL_{\bf L}$ ($\PITL_{\bf L}$) denotes the set of the theorems of $\ITL$ ($\PITL$) written in a given vocabulary ${\bf L}$. Given ${\bf L}$ and a set of formulas $\Gamma$ written in ${\bf L}$, $\Cn_{{\bf L},\ITL}(\Gamma)$ ($\Cn_{{\bf L},\PITL}(\Gamma)$) denotes the set of formulas written in ${\bf L}$ which can be proved using formulas from  $\ITL_{\bf L}\cup\Gamma$ ($\PITL_{\bf L}\cup\Gamma$) and the propositional logic rule {\em Modus Ponens} $\varphi,\ \varphi\Rightarrow\psi\,/\, \psi$.
\end{defi}

\begin{defi}
A set of $\ITL$ ($\PITL$) formulas $\Gamma$ written in a vocabulary ${\bf L}$ is {\em consistent} if $\bot\not\in\Cn_{{\bf L},\ITL}(\Gamma)$ ($\bot\not\in\Cn_{{\bf L},\PITL}(\Gamma)$). A consistent  $\Gamma$ is {\em maximal in ${\bf L}$} if it has no consistent proper supersets of formulas written in ${\bf L}$. 

Just like in first-order predicate logic, a set of formulas $\Gamma$ {\em has witnesses in some set of rigid constants $C$} if for every existential formula $\exists x\varphi\in\Gamma$ there is a {\em witness} $c\in C$ such that $[c/x]\varphi\in\Gamma$. 
\end{defi}

Here follows the {\em Lindenbaum Lemma} for $\PITL$ as known from numerous predicate and modal logics:

\begin{thm}
\label{lindenbaum}
Let $\Gamma$ be a consistent set of formulas $\PITL$ written in some vocabulary ${\bf L}$ and $C$ be a countably-infinite set which consists of infinitely many fresh constants of both the sort of durations and the sort of probabilities. Then there is a maximal consistent set of formulas written in ${\bf L}\cup C$ which contains $\Gamma$ and has witnesses in $C$.
\end{thm}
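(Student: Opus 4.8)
The plan is the standard Lindenbaum--Henkin construction, with attention to the two sorts of $\PITL$ and to the restriction on substitution in the presence of \emph{chop} and flexible symbols. Since ${\bf L}\cup C$ is countable, I would fix an enumeration $\psi_0,\psi_1,\ldots$ of all $\PITL$ formulas over ${\bf L}\cup C$, with each formula listed at least once, and build an increasing chain $\Gamma=\Gamma_0\subseteq\Gamma_1\subseteq\cdots$: if $\Gamma_n\cup\{\psi_n\}$ is inconsistent put $\Gamma_{n+1}=\Gamma_n$; otherwise put $\Gamma_{n+1}=\Gamma_n\cup\{\psi_n\}$, except that when $\psi_n$ is $\exists x\varphi$ I additionally throw a witness $[c/x]\varphi$ into $\Gamma_{n+1}$, where $c\in C$ is of the sort of $x$ and occurs nowhere in $\Gamma_n\cup\{\psi_n\}$. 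Such a $c$ exists because each $\Gamma_n$ extends $\Gamma$ by only finitely many formulas and the members of $C$ are fresh for ${\bf L}$, so only finitely many constants of $C$ have been used so far, while $C$ has infinitely many of each sort. Set $\Gamma^*=\bigcup_n\Gamma_n\supseteq\Gamma$.

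The heart of the matter is that every $\Gamma_n$ remains consistent, which I would prove by induction on $n$, the only non-routine case being the witness step: if $\Gamma_n\cup\{\exists x\varphi\}$ is consistent and $c$ is fresh, then $\Gamma_n\cup\{\exists x\varphi,[c/x]\varphi\}$ is consistent. The argument is the familiar one: otherwise, by the deduction theorem for $\Cn_{{\bf L}\cup C,\PITL}$ (valid because \emph{Modus Ponens} is the only inference rule applied on top of axioms and theorems in the definition of $\Cn$), $\neg[c/x]\varphi$ is provable from $\Gamma_n\cup\{\exists x\varphi\}$; replacing the rigid constant $c$ everywhere in that derivation by an individual variable $y$ not occurring in it yields a derivation of $\neg[y/x]\varphi$ from $\Gamma_n\cup\{\exists x\varphi\}$ --- legitimate because provability in the $\PITL$ system is closed under uniform replacement of a rigid constant not occurring in the premises by a fresh variable, and $c$ occurs in neither $\Gamma_n$ nor $\exists x\varphi$; then Generalization over $y$ (available since $y\not\in FV(\Gamma_n\cup\{\exists x\varphi\})$) together with renaming of the bound variable (standard, since $y$ is fresh and rigid, so the substitution axiom $(\exists_r)$ applies in both directions) gives $\neg\exists x\varphi$ from $\Gamma_n\cup\{\exists x\varphi\}$, contradicting its consistency. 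No capture arises even when $x$ occurs free inside a probability term $p(\chi)$, since $c$ and $y$ carry no bound variables, and for the same reason the side condition of $(\exists_r)$ is met throughout: the substituted term $c$ is rigid.

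It remains to read off the three conclusions. Consistency of $\Gamma^*$ follows from the finiteness of proofs: a derivation of $\bot$ from $\Gamma^*$ uses finitely many premises, all lying in some $\Gamma_n$, contradicting the consistency of $\Gamma_n$. For maximality in ${\bf L}\cup C$: if $\Gamma^*\cup\{\psi\}$ is consistent, choose $n$ with $\psi=\psi_n$; then $\Gamma_n\cup\{\psi_n\}\subseteq\Gamma^*\cup\{\psi\}$ is consistent, so $\psi_n\in\Gamma_{n+1}\subseteq\Gamma^*$, whence $\Gamma^*$ has no consistent proper superset of formulas over ${\bf L}\cup C$. For witnesses in $C$: if $\exists x\varphi\in\Gamma^*$, choose $n$ with $\exists x\varphi=\psi_n$; then $\Gamma_n\cup\{\psi_n\}\subseteq\Gamma^*$ is consistent, so at stage $n$ a witness $[c/x]\varphi$ with $c\in C$ was adopted and $[c/x]\varphi\in\Gamma^*$. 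The single real obstacle is the consistency of the witness step: everything hinges on the fresh witness constant being rigid, so that the constant-for-variable replacement in proofs and the applications of $(\exists_r)$ are untouched by the $\ITL$-specific caveat on substitution, and on Generalization over the fresh variable being available in the first-order part of the system.
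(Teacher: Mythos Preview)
Your proposal is correct and is precisely the standard Lindenbaum--Henkin construction that the paper has in mind: the paper itself omits the proof, stating that it is the same as for abstract-time $\ITL$ in \cite{Dut95} (and likewise omitted in \cite{WX04}). Your care with the two sorts, the rigidity of the witness constants so that the substitution axiom $(\exists_r)$ and the constant-for-variable replacement remain legitimate in the presence of $(.;.)$ and $p(.)$, and the observation that the definition of $\Cn$ via \emph{Modus Ponens} alone makes the deduction theorem immediate, are exactly the points one has to check in this setting.
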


We omit the proof for $\PITL$, because it is the same as that for $\ITL$ with abstract semantics and finite intervals which can be seen in \cite{Dut95}. The proof for $\ITL$ with infinite intervals was omitted in \cite{WX04} for the same reason.

\subsection{A vocabulary for the elementary diagram $\Delta$ for the $\PITL$ model $M$}

The $\PITL$ vocabulary ${\bf L}_D$ which we introduce next is structured so that a $\PITL$ model $M$ for the extension of some given $\PITL$ vocabulary ${\bf L}$ by a countable set of fresh rigid constants that we construct below can be fully described in it in terms of rather simple quantifier- and variable-free formulas which can be regarded as making up a {\em diagram} $\Delta$ for $M$ in the model-theoretic sense. ${\bf L}_D$ contains rigid constants to name all the elements of the duration domain and the probability domain of $M$ and a separate set of flexible symbols to describe the behaviour of the flexible symbols of ${\bf L}$ in each interpretation from $M$. Indeed, we construct an {\em elementary diagram} for $M$ in ${\bf L}_D$, which consists of all the formulas in ${\bf L}_D$ which hold at some infinite interval in $M$ under the convention that formulas written in the various sets of flexible symbols mentioned above are understood to hold at the respective interpretations. 

${\bf L}_D$ is the union of the following sets of symbols:

1. The rigid symbols of ${\bf L}$, including the individual variables, and the mandatory flexible constant $\ell$.

2. Two countably-infinite sets of fresh rigid constants $C^d$ and $C^p$ of the sorts of durations and probabilities, respectively, whose structure is explained below. 

3. The fresh flexible symbols $s^\nu$, $\nu\in S$, of the same kind and arity as $s$, for each flexible $s\in{\bf L}\setminus\{\ell\}$. The countably-infinite index set $S$ is defined below. 

$C^d$ and $C^p$ are assumed to be the countably-infinite disjoint unions of some countably infinite sets $C^d_k$ and $C^p_k$, $k<\omega$, respectively. Similarly, $S$ is assumed to be the countably-infinite union of the sets $S_k$, $k<\omega$.
We denote $\bigcup\limits_{i\leq k}C^d_i$, $\bigcup\limits_{i\leq k}C^p_i$ and $\bigcup\limits_{i\leq k}S_i$ by $C^d_{\leq k}$, $C^p_{\leq k}$ and $S_{\leq k}$, respectively, for all $k<\omega$. We denote the vocabulary which consists of the rigid symbols of ${\bf L}$, $\ell$, the rigid constants from $C^d_{\leq k}$ and $C^p_{\leq k}$ and the flexible symbols $s^\nu$ for $\nu\in S_{\leq k}$ by ${\bf L}_{\leq k}$ for all $k<\omega$. We denote the extension of ${\bf L}_{\leq k}$ by the flexible symbols $s^\nu$ for $\nu\in S_{\leq k+1}$ by ${\bf L}_{\leq k+1}'$.

The set $S_0$ is the singleton $\{\langle\rangle\}$, which consists of the empty list $\langle\rangle$. 
\[S_{k+1}=\{\langle \nu,c,\varphi\rangle:\nu\in S_{\leq k},c\in C^d_{\leq k},\varphi\mbox{ is written in }{\bf L}_{\leq k}\}\mbox{ for all }k<\omega.\]

In the construction of $\Delta$ below, given a $\nu\in S$, $A^\nu$ stands for the result of replacing the flexible symbols $s\in{\bf L}\setminus\{\ell\}$ in a term or formula $A$ written in the vocabulary ${\bf L}\cup C^d\cup C^p$ by their corresponding symbols $s^\nu$. We denote the vocabulary which consists of the rigid symbols of ${\bf L}$, including the individual variables, $\ell$ and the flexible symbols $s^\nu$ for some fixed $\nu\in S$ and all flexible $s\in{\bf L}\setminus\{\ell\}$ by ${\bf L}^\nu$.

\subsection{A translation of $\PITL$ formulas into $\ITL$}

Let ${\bf L}$ be a $\PITL$ vocabulary. We define its corresponding
vocabulary ${\bf L}_\ITL$ for two-sorted (non-probabilistic) $\ITL$
with infinite intervals with the sorts of durations and probabilities
as in $\PITL$. Roughly speaking, ${\bf L}_\ITL$ is an extension of
${\bf L}$ by flexible constants and function symbols which are meant
to simulate probability terms. Here follows the precise definition.

\begin{defi}
\label{itlvoc}
${\bf L}_\ITL$ is the union of the vocabularies ${\bf L}_{\ITL,k}$,
$k<\omega$. ${\bf L}_{\ITL,0}$ is ${\bf L}$. Given ${\bf L}_{\ITL,i}$,
$i\leq k$, ${\bf L}_{\ITL,k+1}$ is the set of flexible constants and
function symbols
\[\{\probab_\varphi:\varphi\mbox{ is a formula written in }\bigcup\limits_{i\leq k}{\bf L}_{\ITL,k}\mbox{ and contains at least one symbol from }{\bf L}_{\ITL,k}\}.\]
The values of the symbols $\probab_\varphi$ are of the probability
sort. If $\varphi$ has no free variables, then $\probab_\varphi$ is a
flexible constant. Otherwise $\probab_\varphi$ is a flexible function
symbol whose arity is $|FV(\varphi)|$ and the sort of the $i$th
argument of $\probab_\varphi$ is that of the $i$th free variable of
$\varphi$ with respect to some fixed ordering of these variables,
$i=1,\ldots,|FV(\varphi)|$.
\end{defi}

Next we define a translation $\tr$ of $\PITL$ terms and formulas
written in ${\bf L}$ into $\ITL$ formulas written in ${\bf
L}_\ITL$. The goal of $\tr$ is to systematically replace the
occurrences of probability terms by terms built using the
corresponding constant and function symbols from Definition
\ref{itlvoc}. To achieve this, $\tr$ works by the following rule:
\begin{equation}
\label{translationA}
[p(\psi_1)/z_1,\ldots,p(\psi_n)/z_n]A
\end{equation}
where denotes $A$ a term or formula with no probability terms is translated into
\begin{equation}
\label{translationA2}
[\probab_{\tr(\psi_1)}(x_{1,1},\ldots,x_{1,m_1})/z_1,\ldots,\probab_{\tr(\psi_1)}(x_{n,1},\ldots,x_{n,m_n})/z_n]A
\end{equation}
where $x_{i,1}$, \ldots , $x_{i,m_i}$ are the free variables of $\psi_i$ in the fixed ordering mentioned above, $i=1,\ldots,n$. If $FV(\psi)=\emptyset$, then the expression $\probab_{\tr(\psi_i)}(x_{i,1},\ldots,x_{i,m_i})$ denotes just the flexible constant $\probab_{\tr(\psi_i)}$. 

\qquad

\noindent
{\bf Example} \ If there are no probability terms in $\varphi$ and $FV(\varphi)=x_1$, then $\tr(p(\varphi))$ is the term $\probab_{\varphi}(x_1)$ and $\tr(p((\ell=x_2;p(\varphi)<p(\neg\varphi))))$ is
$\probab_{(\ell=x_2;\probab_{\varphi}(x_1)<\probab_{\neg\varphi}(x_1))}(x_1,x_2)$.

\qquad

Every term and formula can be represented in the form (\ref{translationA}) in a unique way up to renaming the distinct variables $z_1,\ldots,z_n$, if we assume that all of these variables have free occurrences in $A$ and that the formulas $\psi_1,\ldots,\psi_n$ are all different. The semantical correctness of the substitution in (\ref{translationA}) and (\ref{translationA2}) is not relevant to this definition of $\tr$. Given a set of $\PITL$ formulas $\Gamma$, we denote $\{\tr(\gamma):\gamma\in\Gamma\}$ by $\tr(\Gamma)$.

Terms built using the function symbols $\probab_\psi$ from ${\bf L}_\ITL$ in translations of $\PITL$ formulas always have the free variables of $\psi$ as their argument terms. That is why formulas written in ${\bf L}_\ITL$ which contain $\probab_\psi$ in terms of other forms are not in the range of $\tr$. However, they always have equivalents of the form $\tr(\varphi)$ for appropriate $\PITL$ formulas $\varphi$ written in ${\bf L}$. To realise that, note that if $FV(\psi)=\{x_1,\ldots,x_n\}$ and $y_1,\ldots,y_n$ are $n$ fresh variables of the appropriate sorts, then $\probab_\psi(t_1,\ldots,t_n)=z$ is equivalent to
\[\exists y_1\ldots\exists y_n\left(\bigwedge\limits_{i=1}^n t_i=y_i\wedge\exists x_1\ldots\exists x_n\left(\bigwedge\limits_{i=1}^n y_i=x_i\wedge\probab_\psi(x_1,\ldots,x_n)=z\right)\right).\]
Furthermore, every formula written in ${\bf L}_\ITL$ has an equivalent in which the terms of the form $\probab_\psi(t_1,\ldots,t_n)$ appear only in atomic formulas of the form $\probab_\psi(t_1,\ldots,t_n)=z$ where $z$ can be chosen to be different from $x_1,\ldots,x_n$. 

Now we turn to the correspondence between derivability in $\PITL$ and $\ITL$ with infinite intervals.

\begin{prop}
\label{consistencyundertranslation}
Let ${\bf L}$ be a $\PITL$ vocabulary and $\Gamma$ be a set of formulas written in ${\bf L}$. Then
\[\tr(\Cn_{{\bf L},\PITL}(\Gamma))=\Cn_{{\bf L}_\ITL,\ITL}(\tr(\PITL_{\bf L}\cup\Gamma)).\]
\end{prop}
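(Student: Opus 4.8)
The plan is to establish the two inclusions separately; the substance lies in $\supseteq$. Alongside $\tr$ I would introduce a back-translation $\rho$ from formulas of ${\bf L}_\ITL$ to $\PITL$ formulas written in ${\bf L}$, obtained by undoing $\tr$: first put the formula into the normal form in which every symbol $\probab_\psi$ occurs only inside atomic subformulas $\probab_\psi(x_1,\dots,x_n)=z$ whose arguments are exactly the free variables of $\psi$ in the fixed ordering --- this is possible, and achievable inside the $\ITL$ proof system, by the two remarks following Definition~\ref{itlvoc}, the $\exists y_1\dots\exists y_n$-equivalence serving to eliminate occurrences $\probab_\psi(t_1,\dots,t_n)$ with non-variable arguments --- and then replace each remaining $\probab_\psi(x_1,\dots,x_n)$ by $p(\rho(\psi))$, working from the innermost $\probab$ outward. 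Both $\tr$ and $\rho$ are homomorphic for the propositional connectives and for $\exists$, hence commute with applications of {\em Modus Ponens} ({\em MP}); moreover $\rho(\tr(\varphi))=\varphi$ for every $\PITL$ formula $\varphi$ written in ${\bf L}$ (up to renaming auxiliary variables), and $\tr(\rho(\chi))$ is provably equivalent, in the $\ITL$ system over ${\bf L}_\ITL$, to $\chi$ for every $\chi$ written in ${\bf L}_\ITL$. I would also note $\tr(\PITL_{\bf L}\cup\Gamma)=\tr(\PITL_{\bf L})\cup\tr(\Gamma)$.

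For $\subseteq$ the argument is immediate: given an {\em MP}-derivation of $\varphi$ from $\PITL_{\bf L}\cup\Gamma$, apply $\tr$ line by line. The image of a premise lies in $\tr(\PITL_{\bf L})\cup\tr(\Gamma)$, and since $\tr$ sends $\chi\Rightarrow\psi$ to $\tr(\chi)\Rightarrow\tr(\psi)$ every {\em MP}-step remains an {\em MP}-step; hence $\tr(\varphi)\in\Cn_{{\bf L}_\ITL,\ITL}(\tr(\PITL_{\bf L}\cup\Gamma))$, and in fact no $\ITL$ theorems are needed for this half.

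For $\supseteq$ the key lemma is that $\rho$ sends theorems to theorems: $\rho(\ITL_{{\bf L}_\ITL})\subseteq\PITL_{\bf L}$. I would prove it by taking an $\ITL$-derivation of a given $\theta$, normalising it within the $\ITL$ system as above, and back-translating line by line --- $\rho$ turns each instance of an $\ITL$ axiom into a formula derivable in the $\PITL$ system using only its $\ITL$-fragment, with the flexible term $p(\rho(\psi))$ playing the role the flexible symbol $\probab_\psi$ had, and each application of an $\ITL$ rule into an application of the same rule, which is also available in the $\PITL$ system. Granting the lemma, let $\chi\in\Cn_{{\bf L}_\ITL,\ITL}(\tr(\PITL_{\bf L}\cup\Gamma))$ with an {\em MP}-derivation from $\ITL_{{\bf L}_\ITL}\cup\tr(\PITL_{\bf L})\cup\tr(\Gamma)$, and apply $\rho$ to it line by line. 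Premises from $\ITL_{{\bf L}_\ITL}$ land in $\PITL_{\bf L}$ by the lemma; those from $\tr(\PITL_{\bf L})$ and $\tr(\Gamma)$ land back in $\PITL_{\bf L}$ and $\Gamma$ because $\rho\circ\tr$ is the identity; and {\em MP}-steps are preserved. Hence $\rho(\chi)\in\Cn_{{\bf L},\PITL}(\Gamma)$, so $\tr(\rho(\chi))\in\tr(\Cn_{{\bf L},\PITL}(\Gamma))$, and since $\tr(\rho(\chi))$ is $\ITL_{{\bf L}_\ITL}$-provably equivalent to $\chi$, the two sides of the stated equality contain the same formulas up to $\ITL_{{\bf L}_\ITL}$-provable equivalence --- in particular $\bot$ belongs to one iff it belongs to the other, which is what the consistency transfer in the sequel requires.

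The hard part will be the back-translation lemma, and inside it the substitution-sensitive axioms $(\exists_r)$ and $B$. When the $\ITL$ derivation substitutes into, or quantifies across a chop in, a formula containing some $\probab_\psi$, one must check that the back-translated step is sound in the $\PITL$ system; because $\rho$ turns $\probab_\psi$ into $p(\rho(\psi))$, in whose argument a chop may occur, a single $\ITL$ step can back-translate to a short $\PITL$ derivation rather than to a single rule application, and it is precisely the normal form --- in which $\probab_\psi$ is applied only to its own free variables, so that every dangerous substitution of a flexible term has been replaced in advance by an equation with a fresh variable and a harmless substitution of that variable --- that makes these back-translations go through. The remaining points --- that $\tr$ and $\rho$ are homomorphic for the connectives and $\exists$ and preserve free-variable sets, that normalisation stays inside the $\ITL$ system over ${\bf L}_\ITL$, and that $\rho$ of every $\ITL$ axiom instance is a $\PITL$ theorem --- are routine structural inductions on the nesting depth of $p$.
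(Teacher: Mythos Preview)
Your approach is sound and considerably more careful than the paper's own proof, which consists of a single sentence: ``Simple induction on the construction of proofs.'' That one-liner covers the $\subseteq$ direction adequately --- apply $\tr$ line by line to an MP-derivation, exactly as you do --- but gives no hint of how the $\supseteq$ direction is meant to go. Your back-translation $\rho$ together with the lemma $\rho(\ITL_{{\bf L}_\ITL})\subseteq\PITL_{\bf L}$ is the natural way to make that direction rigorous, and your identification of the substitution-sensitive axioms as the only delicate point is accurate.

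You also notice something the paper glosses over: the right-hand side $\Cn_{{\bf L}_\ITL,\ITL}(\tr(\PITL_{\bf L}\cup\Gamma))$ contains arbitrary $\ITL_{{\bf L}_\ITL}$-theorems, some of which --- e.g.\ $\probab_{\psi}(c)=\probab_{\psi}(c)$ for a constant $c$ when $FV(\psi)=\{x\}$ --- are not in the range of $\tr$, so the equality as literally stated cannot hold. Your retreat to equality up to $\ITL_{{\bf L}_\ITL}$-provable equivalence, and in particular the observation that $\bot$ belongs to one side iff it belongs to the other, is exactly what the subsequent corollary and the uses in the proof of Lemma~\ref{greatlemma} require. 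So your version is both more honest and sufficient for everything the paper does with the proposition; the paper's brevity here hides a small inaccuracy that you have repaired rather than reproduced.
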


\begin{proof}
Simple induction on the construction of proofs.
\end{proof}

\begin{cor}
A set of $\PITL$ formulas $\Gamma$ written in a vocabulary ${\bf L}$ is  consistent iff\\ $\Cn_{{\bf L}_\ITL,\ITL}(\PITL_{\bf L}\cup\Gamma)$ is consistent.

\end{cor}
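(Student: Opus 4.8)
The plan is to derive this corollary directly from Proposition~\ref{consistencyundertranslation} together with the elementary observation that the translation $\tr$ fixes $\bot$. First I would note that $\bot$ contains no probability terms, so it is already of the form (\ref{translationA}) with $n=0$, and hence $\tr(\bot)=\bot$. Consequently, $\bot\in\Cn_{{\bf L},\PITL}(\Gamma)$ if and only if $\bot=\tr(\bot)\in\tr(\Cn_{{\bf L},\PITL}(\Gamma))$, using that $\tr$ is injective on formulas (which follows from the unique representability of terms and formulas in the form (\ref{translationA}) up to renaming of the $z_i$, already remarked after the definition of $\tr$).

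The second step applies Proposition~\ref{consistencyundertranslation}: $\tr(\Cn_{{\bf L},\PITL}(\Gamma))=\Cn_{{\bf L}_\ITL,\ITL}(\tr(\PITL_{\bf L}\cup\Gamma))$. Therefore $\bot\in\Cn_{{\bf L},\PITL}(\Gamma)$ iff $\bot\in\Cn_{{\bf L}_\ITL,\ITL}(\tr(\PITL_{\bf L}\cup\Gamma))$. By the definition of consistency, the lefthand side says exactly that $\Gamma$ is inconsistent in $\PITL$, while the righthand side says that the $\ITL$ theory axiomatised over $\ITL_{{\bf L}_\ITL}$ by $\tr(\PITL_{\bf L}\cup\Gamma)$ is inconsistent. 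Since $\tr(\PITL_{\bf L}\cup\Gamma)=\tr(\PITL_{\bf L})\cup\tr(\Gamma)$ and every formula of $\tr(\PITL_{\bf L})$ is (by Proposition~\ref{consistencyundertranslation} with $\Gamma=\emptyset$, i.e. the case $\tr(\PITL_{\bf L})\subseteq\Cn_{{\bf L}_\ITL,\ITL}(\tr(\PITL_{\bf L}))$) already a consequence of $\ITL_{{\bf L}_\ITL}\cup\tr(\PITL_{\bf L})$, one sees that $\Cn_{{\bf L}_\ITL,\ITL}(\PITL_{\bf L}\cup\Gamma)$ — interpreting the notation of the corollary as shorthand for closing $\ITL_{{\bf L}_\ITL}$ together with the translated $\PITL$ axioms and $\Gamma$ under Modus Ponens — is inconsistent iff $\Cn_{{\bf L}_\ITL,\ITL}(\tr(\PITL_{\bf L}\cup\Gamma))$ is. Contraposing both equivalences gives the claim: $\Gamma$ is consistent in $\PITL$ iff $\Cn_{{\bf L}_\ITL,\ITL}(\PITL_{\bf L}\cup\Gamma)$ is consistent.

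There is essentially no obstacle here; the whole content has been pushed into Proposition~\ref{consistencyundertranslation}. The only point requiring a word of care is a notational one: the corollary writes $\Cn_{{\bf L}_\ITL,\ITL}(\PITL_{\bf L}\cup\Gamma)$ with $\PITL_{\bf L}$ and $\Gamma$ unstarred, so one must read this as an abbreviation for the $\ITL$-consequences of the translated axioms (equivalently, replace it by $\Cn_{{\bf L}_\ITL,\ITL}(\tr(\PITL_{\bf L}\cup\Gamma))$, which is what Proposition~\ref{consistencyundertranslation} literally produces); with that reading the argument is the two-line chain of equivalences above. I would state the proof in just those two lines: $\tr(\bot)=\bot$, apply the Proposition, and unwind the definition of consistency.
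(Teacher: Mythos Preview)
Your proposal is correct and follows exactly the paper's approach: the paper's entire proof is the single line ``$\tr(\bot)$ is $\bot$'', after which the corollary is immediate from Proposition~\ref{consistencyundertranslation}. Your additional remarks on injectivity of $\tr$ and on reading $\Cn_{{\bf L}_\ITL,\ITL}(\PITL_{\bf L}\cup\Gamma)$ as $\Cn_{{\bf L}_\ITL,\ITL}(\tr(\PITL_{\bf L}\cup\Gamma))$ are reasonable clarifications, but the paper regards them as implicit.
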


\begin{proof}
$\tr(\bot)$ is $\bot$.
\end{proof}

\subsection{The weakened proof system $\PITL^-$}

The model $M$ constructed below is for ${\bf L}\cup C^d\cup C^p$. It contains one class of $w\in{\bf W}$ which are the same except possibly for the interpretations $I^w$ of some individual variables for every $\nu\in S$. Let $w_\nu$ denote a representative for the class of interpretations corresponding to $\nu$. Then $I^{w_\nu}(s)$ is defined by the formulas from the diagram $\Delta$ for $M$ which describe $s^\nu$ for all flexible $s\in{\bf L}\setminus\{\ell\}$. We are interested in having a set of formulas $\Gamma$ which contains the formula $\ell=\infty$ satisfied at some infinite interval $[\tau_0,\infty]$ and some interpretation $I$ in $M$. Our construction of $M$ provides that if $c\in C^d$ and $\tau_1$ is defined by the equality $m([\tau_0,\tau_1])=I^{w_\nu}(c)$ in $M$, then $w_\nu$ and $w_{\langle\nu,c,\varphi\rangle}$ are related as follows:

\begin{quote}
If $M,w_\nu,[\tau_0,\tau_1]\models p(\varphi)\not=0$ and $FV(\varphi)=\{x_1,\ldots,x_n\}$, then $w_\nu\equiv_{\tau_1}w_{\langle\nu,c,\varphi\rangle}$ and $M,v,[\tau_0,\infty]\models\varphi$ for some $v$ such that $I^v=(I^{w_{\langle \nu,c,\varphi\rangle}})^{I_\nu(x_1),\ldots,I_\nu(x_n)}_{x_1\ ,\ \ldots\ ,\ x_n}$ and $P^v=P^{w_{\langle \nu,c,\varphi\rangle}}$.
\end{quote}
This means that $w_{\langle\nu,c,\varphi\rangle}\in\sem{\varphi}_{M,w_\nu,[\tau_0,\tau_1]}$.

Furthermore, we are interested in enforcing $\PITL$ local logical consequence at each particular $w\in{\bf W}$, but not across different $w$. That is why in the construction of $\Delta$ below we restrict the applicability of the $\PITL$-specific axioms $P_;$, $P_\infty$, $P_\bot$, $P_\top$ and $P_+$ and rule $P_\leq$ from Section \ref{proofsystem} in sets of formulas written in ${\bf L}_D$. We allow only instances of $P_;$, $P_\infty$, $P_\leq$, $P_\bot$, $P_\top$ and $P_+$ in which all flexible symbols except $\ell$ have the same superscript $\nu\in S$. The resulting weakened proof system is tied to the vocabulary ${\bf L}_D$. We denote it and the set of its theorems written in a given sub-vocabulary ${\bf L}'$ of ${\bf L}_D$ by $\PITL^-$ and  $\PITL^-_{{\bf L}'}$, respectively. Theorem \ref{lindenbaum} applies to consistency with $\PITL^-_{{\bf L}'}$ without change. Similarly, we have the following variant of Proposition \ref{consistencyundertranslation}:

\begin{prop}
\label{consistencyundertranslation2}
Let ${\bf L}'$ be a sub-vocabulary of ${\bf L}_D$ and $\Gamma$ be a set of formulas written in ${\bf L}'$. Then
\[\tr(\Cn_{{\bf L}',\PITL^-}(\Gamma))=\Cn_{{\bf L}'_\ITL,\ITL}(\tr(\PITL^-_{{\bf L}'}\cup\Gamma)).\]
\end{prop}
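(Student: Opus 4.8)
The plan is to follow the proof of Proposition~\ref{consistencyundertranslation} --- an induction on the construction of proofs that yields the two inclusions --- and to add only the bookkeeping that tracks the superscript restriction distinguishing $\PITL^-$ from $\PITL$. The relevant facts about $\tr$ are the same as there: $\tr$ is a homomorphism for the boolean connectives and the quantifier structure, so in particular $\tr(\varphi\Rightarrow\psi)=\tr(\varphi)\Rightarrow\tr(\psi)$ (hence $\tr$ commutes with Modus Ponens); $\tr$ sends every instance over ${\bf L}'$ of an $\ITL$ axiom or rule to an instance over ${\bf L}'_\ITL$ of the same axiom or rule, each $\probab_\psi(\ldots)$ being treated as an ordinary flexible term --- which in particular respects the side-condition on substitution in $(\exists_r)$, since $\probab_{\tr(\psi)}(\ldots)$ is flexible whenever $p(\psi)$ is; and, by the observation following Definition~\ref{itlvoc}, every formula written in ${\bf L}'_\ITL$ is $\ITL$-provably equivalent to $\tr(\varphi)$ for some $\PITL$ formula $\varphi$ over ${\bf L}'$.

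The only schemata on which $\PITL^-$ differs from $\PITL$ are the probabilistic ones $P_;$, $P_\infty$, $P_\bot$, $P_\top$, $P_+$ and the rule $P_\leq$, and $\PITL^-$ admits precisely those instances in which every flexible symbol other than $\ell$ carries one common superscript $\nu\in S$. For the inclusion $\subseteq$: a step of a $\PITL^-$-proof that is such a permitted instance translates, by the very definition of $\PITL^-_{{\bf L}'}$, to a member of $\tr(\PITL^-_{{\bf L}'})$, hence is available on the right; all other steps translate exactly as in Proposition~\ref{consistencyundertranslation}. For $\supseteq$: given an $\ITL$-proof from $\ITL_{{\bf L}'_\ITL}\cup\tr(\PITL^-_{{\bf L}'}\cup\Gamma)$, I would replace each line by a $\tr$-preimage, inserting the $\ITL$-provable equivalences above as extra $\ITL$ steps, and then invert $\tr$ line by line; the probabilistic hypotheses that get used are translations of $\PITL^-$-permitted instances, so the reconstructed probabilistic steps are again $\PITL^-$-permitted, while the non-probabilistic part of $\PITL^-$ is all of $\ITL$ and needs no adjustment.

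The main obstacle --- in fact essentially the only thing beyond transcribing the proof of Proposition~\ref{consistencyundertranslation} --- is making the superscript bookkeeping airtight. Concretely: one should check that $\tr$ applied to a formula over the sub-vocabulary ${\bf L}'$ introduces only symbols $\probab_\varphi$ with $\varphi$ over ${\bf L}'$, so that ${\bf L}'_\ITL$ sits inside $({\bf L}_D)_\ITL$ in the evident way (immediate from the recursions defining $\tr$ and ${\bf L}_\ITL$); and, in the $\supseteq$ direction, that the $\tr$-preimages chosen for the lines of the given $\ITL$-proof can be selected with the superscript pattern of the probabilistic-schema instances preserved, which holds because each such line is ultimately derived from hypotheses in $\tr(\PITL^-_{{\bf L}'})$ whose superscript structure is fixed by construction. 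With these checks the argument is, just as for Proposition~\ref{consistencyundertranslation}, a routine induction on the construction of proofs.
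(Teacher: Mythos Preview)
Your proposal is correct and follows the same approach as the paper: the paper states Proposition~\ref{consistencyundertranslation2} merely as a ``variant of Proposition~\ref{consistencyundertranslation}'' without further proof, and Proposition~\ref{consistencyundertranslation} itself is dispatched with the single line ``Simple induction on the construction of proofs.'' Your write-up is a faithful elaboration of that induction, with the added (and correct) observation that the only new content over Proposition~\ref{consistencyundertranslation} is the superscript bookkeeping, which is trivially preserved by $\tr$ since $\tr$ does not alter the superscripts on flexible symbols from ${\bf L}_D$.
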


We also use the following somewhat more involved technical consequence
of the restricted use of the instances of $P_;$, $P_\infty$, $P_\leq$,
$P_\bot$, $P_\top$ and $P_+$ and the restricted application of
$P_\leq$.

\begin{lem}
\label{seppitlminus}
Let $\alpha\in\PITL^-_{{\bf L}'}$ for some sub-vocabulary ${\bf L}'$ of ${\bf L}_D$. Let $C$ be the set of the rigid constants of ${\bf L}'$. Then there exist finitely many superscripts $\nu_1,\ldots,\nu_n\in S$ and theorems $\beta_i\in\PITL_{{\bf L}^{\nu_i}\cup C}$, $i=1,\ldots,n$, such that the formula
\begin{equation}\label{seppitlminusformula}
\bigwedge\limits_{i=1}^n\Box\forall\beta_i\Rightarrow\alpha
\end{equation}
is provable without the use of $P_;$, $P_\infty$, $P_\bot$, $P_\top$ and $P_+$ and $P_\leq$, that is, essentially in (non-probabilistic) $\ITL$ with infinite intervals.
\end{lem}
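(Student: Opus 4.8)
The plan is to induct on the length of a derivation of $\alpha$ in $\PITL^-$. By conservativity of $\PITL^-$ over vocabulary extensions --- which, via Proposition \ref{consistencyundertranslation2} and the corresponding standard property of $\ITL$ with infinite intervals (a consequence of completeness of its proof system, or of Theorem \ref{craiginterpolation}), holds here --- we may assume the derivation is written throughout in ${\bf L}'$, so every rigid constant occurring in it lies in $C$. Let $\ITL^\ast$ denote the proof system obtained from that for $\PITL$ by deleting $P_;$, $P_\infty$, $P_\bot$, $P_\top$, $P_+$ and $P_\leq$; this is just $\ITL$ with infinite intervals, with probability terms treated as opaque flexible terms. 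The invariant to maintain is: for each line $\delta$ there are finitely many superscripts $\mu_1,\dots,\mu_k\in S$ and theorems $\beta_j\in\PITL_{{\bf L}^{\mu_j}\cup C}$ with $\bigwedge_{j=1}^k\Box\forall\beta_j\Rightarrow\delta$ a theorem of $\ITL^\ast$; instantiating this at the last line gives the lemma. I shall use that $\Box\chi\Rightarrow\chi$ and the persistence of $\Box$-formulas into subintervals, $\Box\chi\wedge(\varphi;\rho)\Rightarrow(\Box\chi\wedge\varphi;\rho)$ and its mirror image, are theorems of $\ITL^\ast$, and that $\Box\forall\chi$ is a theorem of $\PITL$ whenever $\chi$ is (by generalisation and the rules $N$ and $\Mono$).

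The cases for a line $\delta$ run as follows. If $\delta$ is an axiom of first-order logic, about durations, or $\ITL$-specific, then $\delta$ is a theorem of $\ITL^\ast$ and the empty collection suffices. If $\delta$ is obtained by Modus Ponens, generalisation, or one of $N$ and $\Mono$, I take the union of the collections given by the induction hypothesis for the premises; concretely, one applies the same rule to the $\ITL^\ast$-theorem $\bigwedge\Box\forall\beta_j\Rightarrow\delta_0$ supplied for a premise $\delta_0$ and then moves the conjunction $\bigwedge\Box\forall\beta_j$ back out of whatever chop the rule introduces, using persistence of $\Box$-formulas. If $\delta$ is an instance of $P_;$, $P_\infty$, $P_\bot$, $P_\top$ or $P_+$, then by the definition of $\PITL^-$ all its flexible symbols other than $\ell$ carry one superscript $\nu$, so $\delta$ is itself an instance of the matching $\PITL$ axiom over ${\bf L}^\nu\cup C$; then $\delta\in\PITL_{{\bf L}^\nu\cup C}$ and $\{\delta\}$ works, using $\Box\forall\delta\Rightarrow\delta$.

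The one substantial case is an application of $P_\leq$, deriving $\delta$, namely $\varphi\wedge\ell<\infty\Rightarrow p(\psi)\leq p(\psi')$, from an earlier line $\delta'$ of the form $(\varphi;\ell=\infty)\Rightarrow(\psi\Rightarrow\psi')$; by the $\PITL^-$ restriction both $\delta$ and $\delta'$ are written over ${\bf L}^\nu\cup C$ for a single superscript $\nu$. The induction hypothesis furnishes theorems $\beta_1,\dots,\beta_k$ of possibly different superscripts with $\bigwedge_j\Box\forall\beta_j\Rightarrow\delta'$ a theorem of $\ITL^\ast$. Each $\beta_j$ being a theorem of $\PITL$, so is $\Box\forall\beta_j$, hence so is the conjunction, and as $\ITL^\ast\subseteq\PITL$, Modus Ponens makes $\delta'$ a theorem of $\PITL$. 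Since $\delta'$ is written over ${\bf L}^\nu\cup C$, conservativity of $\PITL$ over vocabulary extensions --- again reduced through Proposition \ref{consistencyundertranslation} to the $\ITL$ case --- gives $\delta'\in\PITL_{{\bf L}^\nu\cup C}$; applying the rule $P_\leq$ inside the unrestricted system $\PITL_{{\bf L}^\nu\cup C}$ then gives $\delta\in\PITL_{{\bf L}^\nu\cup C}$, and $\{\delta\}$ works.

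I expect the last case to be the main obstacle: the rule $P_\leq$ cannot be internalised as an object-level implication --- its premise has to be globally valid, not merely true at the reference interval --- so one must genuinely re-derive $\delta$ as a $\PITL$ theorem over a single-superscript vocabulary, and this is where vocabulary-conservativity is needed; the remaining cases are routine bookkeeping with the persistence of $\Box$-formulas.
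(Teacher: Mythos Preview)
Your inductive expansion is sound in outline and makes explicit the $\Box$-persistence bookkeeping that the paper leaves implicit, but the justification you give for conservativity does not go through. Propositions~\ref{consistencyundertranslation} and~\ref{consistencyundertranslation2} tell you only that $\tr(\PITL_{{\bf L}''})$ is closed under $\ITL$-consequence in ${\bf L}''_\ITL$; they do not let you conclude that a theorem $\delta'$ of $\PITL_{{\bf L}''}$ written in the smaller vocabulary ${\bf L}^\nu\cup C$ is already in $\PITL_{{\bf L}^\nu\cup C}$. The obstacle is that $\tr(\PITL_{{\bf L}''})$ contains the translations of the $\PITL$-specific axiom instances for \emph{all} superscripts present in ${\bf L}''$, and $\ITL$ Craig interpolation alone cannot discard those extra hypotheses, since they are genuinely needed to derive $\tr(\delta')$. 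Invoking completeness of $\PITL$ would be circular, as this lemma is a step toward that completeness.

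The paper closes this gap by a direct syntactic device that you could adopt: given any $\PITL^-$ derivation whose conclusion lies in ${\bf L}^\nu\cup C$, uniformly rename every superscript occurring in the derivation to $\nu$. Because in $\PITL^-$ every instance of $P_;$, $P_\infty$, $P_\bot$, $P_\top$, $P_+$ and every application of $P_\leq$ already involves a single superscript, the renamed sequence is still a correct derivation, now entirely in ${\bf L}^\nu\cup C$, and its conclusion is unchanged. This gives $\delta'\in\PITL_{{\bf L}^\nu\cup C}$ without any appeal to conservativity. The same trick handles your opening reduction to a derivation inside ${\bf L}'$.

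The paper in fact uses this renaming observation as the whole proof rather than as a subroutine: it takes $\nu_1,\ldots,\nu_n$ to be all superscripts occurring anywhere in a fixed $\PITL^-$ proof of $\alpha$, notes that every line written in some ${\bf L}^{\nu_i}\cup C$ is (by renaming the sub-proof leading to it) a theorem of $\PITL_{{\bf L}^{\nu_i}\cup C}$, and lets $\beta_i$ be the conjunction of those lines. Since the only lines of the proof produced by $\PITL$-specific axioms or by $P_\leq$ are single-superscript, treating them as hypotheses leaves a derivation of $\alpha$ that uses only $\ITL$ steps, and your $\Box$-persistence manipulations then yield $\bigwedge_i\Box\forall\beta_i\Rightarrow\alpha$ in $\ITL$. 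Your induction reaches the same destination with more scaffolding; the only missing ingredient is to replace the conservativity appeal by the renaming argument.
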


\begin{proof}
Consider a $\PITL^-$ proof of $\alpha$ in ${\bf L}'$. Let $\nu_1,\ldots,\nu_n$ be all the superscripts of flexible symbols occurring in formulas from this proof. If a formula $\beta$ from the proof is written in the vocabulary ${\bf L}^{\nu_i}\cup C$ for some $i\in\{1,\ldots,n\}$, then $\beta\in\PITL_{{\bf L}^{\nu_i}\cup C}$. To realise this, notice that changing all the superscripts of the flexible symbols in the formulas from the part of the proof which leads to $\beta$ to $\nu_i$ preserves its correctness. We can choose $\beta_i$ to be the conjunction of all the formulas from $\PITL_{{\bf L}^{\nu_i}\cup C}$ in the chosen proof of $\alpha$, $i=1,\ldots,n$.
\end{proof}

Consistency in the rest of this section is with respect to $\PITL^-$.

\subsection{The elementary diagram $\Delta$ for $M$}

Here follows the precise construction of the diagram $\Delta$. 

$\Delta$ is the union of the infinite ascending sequence of sets of formulas 
\begin{equation}\label{deltasequence}
\Delta_0\subset\Delta_1'\subset\Delta_1\subset\ldots\subset\Delta_k'\subset\Delta_k\subset\ldots
\end{equation}
where $\Delta_k$ and $\Delta_{k+1}'$ consist of formulas written in ${\bf L}_{\leq k}$ and ${\bf L}_{\leq k+1}'$, respectively, for each $k<\omega$. $\Delta_0$ is a maximal consistent set with witnesses in $C^d_0\cup C^p_0$ which contains the set $\{\gamma^{\langle\rangle}:\gamma\in\Gamma\}$. Such a set exists by Theorem \ref{lindenbaum}. For an arbitrary $k<\omega$, $\Delta_{k+1}'$ is the extension of $\Delta_k$ by
\begin{equation}
\label{addedformulas}
\mbox{the formula }\varphi^{\nu'}\mbox{ and the formulas }(\Box\forall(\chi^\nu\Leftrightarrow\chi^{\nu'})\wedge\ell=c;\ell=\infty)\mbox{
for all $\chi$ written in }{\bf L},
\end{equation}
for each pair of indices $\nu\in S_{\leq k}$ and $\nu'\in S_{k+1}$ such that $\nu'=\langle\nu,c,\varphi\rangle$ and\\ $(p(\varphi^\nu)\not=0\wedge\ell=c;\ell=\infty)\in\Delta_k$.

\begin{lem}
\label{greatlemma}
If $\Delta_k$ is consistent, then $\Delta_{k+1}'$ is consistent too.
\end{lem}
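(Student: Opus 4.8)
The plan is to argue by contradiction from the hypothesis that $\Delta_k$ is consistent. Suppose $\Delta_{k+1}'$ is inconsistent. A finite inconsistency uses only finitely many of the added formulas, hence finitely many pairs $\langle\nu_1,\nu_1'\rangle,\dots,\langle\nu_m,\nu_m'\rangle$, a finite set $\Phi_j\subseteq{\bf L}$ for each $j$, and finitely many members of $\Delta_k$; since distinct pairs contribute pairwise disjoint sets of fresh flexible symbols and $\Delta_k$ contains no $\nu_j'$-symbol at all, this will let me peel off the pairs one at a time. First I would collect the relevant $\Delta_k$-members into a single formula $\delta$, adding to it all the budget formulas $(p(\varphi_j^{\nu_j})\neq0\wedge\ell=c_j;\ell=\infty)\in\Delta_k$, the formula $\ell=\infty\in\Delta_0\subseteq\Delta_k$, and all $c_j<\infty$ (derivable from the budgets via $P1$). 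Using $L1$ and $S1$, the conjunction of the length-$c_j$-prefix formulas collapses to $(\ell=c_j\wedge\Box\forall\eta_j;\ell=\infty)$ with $\eta_j:=\bigwedge_{\chi\in\Phi_j}(\chi^{\nu_j}\Leftrightarrow\chi^{\nu_j'})$, so the assumed inconsistency becomes
\[\PITL^-\vdash\neg\Bigl(\delta\wedge\bigwedge\nolimits_{j=1}^{m}\bigl(\varphi_j^{\nu_j'}\wedge(\ell=c_j\wedge\Box\forall\eta_j;\ell=\infty)\bigr)\Bigr).\]
I will reduce $m$ to $m-1$; at $m=0$ this reads $\PITL^-\vdash\neg\delta$, contradicting the consistency of $\Delta_k$.

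For the reduction, fix the last pair, let $D$ be the conjunction of $\delta$ with the pair-$j$ conjuncts for $j<m$ (so $D$ contains no $\nu_m'$-symbol but does contain the budget $(p(\varphi_m^{\nu_m})\neq0\wedge\ell=c_m;\ell=\infty)$), and rewrite the above as $\PITL^-\vdash(\ell=c_m\wedge\Box\forall\eta_m;\ell=\infty)\Rightarrow(D\Rightarrow\neg\varphi_m^{\nu_m'})$. The point is to eliminate the fresh superscript $\nu_m'$ and conclude $\PITL^-\vdash\neg D$. Working through the translation $\tr$ into non-probabilistic $\ITL$ (Proposition \ref{consistencyundertranslation2}) and Lemma \ref{seppitlminus}, this $\PITL^-$-implication follows in $\ITL$ with infinite intervals from a conjunction $\bigwedge_\mu\Box\forall B_\mu$ of $\PITL$-theorems $B_\mu$ each involving a single superscript $\mu$. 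I would move the one fresh-superscript conjunct $\Box\forall B_{\nu_m'}$ to the conclusion, so that the hypothesis of the resulting $\ITL$-theorem is $\nu_m'$-free, and then apply Craig interpolation (Theorem \ref{craiginterpolation}) to replace that hypothesis by a formula $D_0$ over the single superscript $\nu_m$ (with $\PITL^-\vdash D\Rightarrow D_0$, since the $\Box\forall B_\mu$ are themselves $\PITL^-$-theorems). This yields an $\ITL$-theorem of exactly the form required by interval-related interpolation (Theorem \ref{intervalrelatedinterpolation}) with the clean $\nu_m$-versus-$\nu_m'$ split and the rename $\nu_m\mapsto\nu_m'$ in the role of priming; it produces a $\theta$ over the $\nu_m$-vocabulary with $D_0\wedge c_m<\infty\wedge\ell=\infty\Rightarrow(\ell=c_m\wedge\theta;\ell=\infty)$ and $(\ell=c_m\wedge\theta';\ell=\infty)\Rightarrow(\Box\forall B_{\nu_m'}\Rightarrow\neg\varphi_m^{\nu_m'})$ both $\ITL$-provable.

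Renaming $\nu_m'$ back to $\nu_m$ in the second property (a bijective renaming of flexible symbols, hence preserving $\ITL$-provability) and appending $\Box\forall$ of the formula obtained from $B_{\nu_m'}$ by renaming $\nu_m'$ to $\nu_m$ — a $\PITL^-$-theorem, since all its flexible symbols then carry the single superscript $\nu_m$ — gives $\PITL^-\vdash(\ell=c_m\wedge\theta;\ell=\infty)\Rightarrow\neg\varphi_m^{\nu_m}$. Now $\theta$ and $\varphi_m^{\nu_m}$ involve only the superscript $\nu_m$, so applying $P_\leq$ with $\chi:=\bot$ is a legal instance of $\PITL^-$; together with $P_\bot$ it gives $\PITL^-\vdash\ell=c_m\wedge\theta\wedge\ell<\infty\Rightarrow p(\varphi_m^{\nu_m})=0$. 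Chaining this with the first interpolant property and with $D\Rightarrow D_0$ via routine $\ITL$-reasoning using $S1$ and $\Mono$ (and $D\Rightarrow c_m<\infty\wedge\ell=\infty$) yields $\PITL^-\vdash D\Rightarrow\neg(p(\varphi_m^{\nu_m})\neq0\wedge\ell=c_m;\ell=\infty)$; but that formula is a conjunct of $D$, so $\PITL^-\vdash\neg D$, which is the desired statement with $m-1$ pairs. Iterating down to $m=0$ produces the contradiction.

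I expect the main obstacle to be that the whole argument must be carried out inside the weakened system $\PITL^-$: the single use of $P_\leq$ that turns ``$\neg\varphi_m^{\nu_m}$ on the $c_m$-prefix'' into ``$p(\varphi_m^{\nu_m})=0$'' has to be a legal $\PITL^-$ instance, i.e.\ every flexible symbol occurring in it must carry one and the same superscript. This is precisely why the context $D$ — which inherits many superscripts from $\Delta_k$ — cannot be fed to interval-related interpolation directly, and why a preliminary Craig interpolation is needed to purge it down to the superscript $\nu_m$ before the $\nu_m$-versus-$\nu_m'$ interval-related interpolation is performed. The rest is bookkeeping: tracking which symbols sit on the primed side through Lemma \ref{seppitlminus} and $\tr$, checking that the length-$c_j$-prefix conjunctions collapse as claimed, and verifying that the superscript renamings are legitimate.
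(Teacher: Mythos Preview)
Your proposal is correct and follows essentially the same route as the paper: peel off one fresh superscript at a time, pass through $\tr$ and Lemma \ref{seppitlminus}, use Craig interpolation to reduce the context to the single superscript $\nu_m$, then interval-related interpolation to produce $\theta$, rename $\nu_m'$ back to $\nu_m$, and apply $P_\leq$ (now a legal $\PITL^-$ instance) together with $P_\bot$ to contradict the budget formula. The paper's proof is organized slightly differently in its bookkeeping (it keeps both a $\nu$- and a $\nu'$-indexed $\beta_i$ on the right before Craig, and uses $\Mono$ rather than $S1$ for the final chopping), but the sequence of ideas and the role of each interpolation step match yours exactly.
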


The proof of this lemma is the key technical step in the entire completeness argument about our proof system for $\PITL$.

\begin{proof}
Assume that $\Delta_k$ is consistent and $\Delta_{k+1}'$ is not for the sake of contradiction. Since proofs in $\PITL^-$ are finitary, there is a finite inconsistent $\Xi\subset\Delta_{k+1}'$. $\Xi\not\subseteq\Delta_k$, because $\Delta_k$ is a consistent set. Hence there are finitely many $\nu'\in S_{k+1}\setminus S_{\leq k}$ such that flexible symbols superscripted by $\nu'$ occur in formulas from $\Xi$. These formulas are of some of the forms (\ref{addedformulas}). Below we prove that the assumed inconsistency of $\Xi$ is preserved after withdrawing the formulas of the forms (\ref{addedformulas}) for each such $\nu'\in S_{k+1}\setminus S_{\leq k}$. The remaining formulas in $\Xi$ are also in $\Delta_k$. This will bring contradiction with the assumed consistency of $\Delta_k$. Let us choose one such $\nu'$ and let $\nu'=\langle\nu,c,\varphi\rangle$. This means that $(p(\varphi^\nu)\not=0\wedge\ell=c;\ell=\infty)\in\Delta_k$. Then the formulas (\ref{addedformulas}) for the chosen $\nu'$ and $\nu$ are in $\Delta_{k+1}'$. Let the formulas in $\Xi$ with flexible symbols superscripted by $\nu'$ be $(\Box\forall(\chi_i^\nu\Leftrightarrow\chi_i^{\nu'})\wedge\ell=c;\ell=\infty)$, $i=1,\ldots,m$, and $\varphi^{\nu'}$. Let $\Xi_{\overline{\nu'}}$ be the set of the remaining formulas from $\Xi$, which have no flexible symbols superscripted by $\nu'$. Then
\[\vdash_{\PITL^-_{{\bf L}_{\leq k+1}'}}(\bigwedge\Xi_{\overline{\nu'}})\Rightarrow\left(\bigwedge\limits_{i=1}^m(\Box\forall(\chi_i^\nu\Leftrightarrow\chi_i^{\nu'})\wedge\ell=c;\ell=\infty)\Rightarrow\neg\varphi^{\nu'}\right).\]
Now Proposition \ref{consistencyundertranslation2} entails that
\[\vdash_{\ITL}\tr(\alpha)\Rightarrow\left(\tr(\bigwedge\Xi_{\overline{\nu'}})\Rightarrow\left(\bigwedge\limits_{i=1}^m(\Box\forall(\tr(\chi_i^\nu)\Leftrightarrow\tr(\chi_i^{\nu'}))\wedge\ell=c;\ell=\infty)\Rightarrow\neg\tr(\varphi^{\nu'})\right)\right)\]
where $\alpha\in \PITL^-_{{\bf L}_{\leq k+1}'}$. According to Lemma \ref{seppitlminus}, there is a finite set of superscripts $\nu_1,\ldots,\nu_n\in S_{\leq k+1}$ and this many formulas $\beta_i\in\PITL_{{\bf L}^{\nu_i}\cup C^d_{\leq k}\cup C^p_{\leq k}}$, $i=1,\ldots,n$, such that (\ref{seppitlminusformula}) is provable without the $\PITL$-specific axioms and rule, that is, essentially in $\ITL$ with infinite intervals. Without loss of generality we can assume that $\beta_1\in\PITL_{{\bf L}^{\nu}\cup C^d_{\leq k}\cup C^p_{\leq k}}$ and $\beta_2\in\PITL_{{\bf L}^{\nu'}\cup C^d_{\leq k}\cup C^p_{\leq k}}$. Then we have  
\[
\begin{array}{l}\vdash_{\ITL}\tr\left((\bigwedge\Xi_{\overline{\nu'}})\wedge\bigwedge\limits_{i=3}^n\Box\forall\beta_i\right)\Rightarrow\\
\qquad\left(\tr(\Box\forall\beta_1)\wedge\tr(\Box\forall\beta_2)\wedge\bigwedge\limits_{i=1}^m(\Box\forall(\tr(\chi_i^\nu)\Leftrightarrow\tr(\chi_i^{\nu'}))\wedge\ell=c;\ell=\infty)\Rightarrow\neg\tr(\varphi^{\nu'})\right).\end{array}\]
All the flexible symbols on the right of the main $\Rightarrow$ in this formula except $\ell$ are superscripted by either $\nu$ or $\nu'$ and the superscript $\nu'$ does not appear on symbols in the formula on the left of $\Rightarrow$. Hence by Craig interpolation (Theorem \ref{craiginterpolation}) some $\ITL$ formula $\lambda$ written in $({\bf L}^{\nu}\cup C^d_{\leq k}\cup C^p_{\leq k})_\ITL$ satisfies both
\begin{equation}\label{craig1}
\vdash_{\ITL}\tr\left((\bigwedge\Xi_{\overline{\nu'}})\wedge\bigwedge\limits_{i=3}^n\Box\forall\beta_i\right)\Rightarrow\lambda
\end{equation}
and
\begin{equation}\label{craig2}
\vdash_{\ITL}\bigwedge\limits_{i=1}^m(\Box\forall(\tr(\chi_i^\nu)\Leftrightarrow\tr(\chi_i^{\nu'}))\wedge\ell=c;\ell=\infty)\Rightarrow((\lambda\wedge\tr(\Box\forall\beta_1))\Rightarrow(\tr(\Box\forall\beta_2)\Rightarrow\neg\tr(\varphi^{\nu'}))).
\end{equation}
The formulas $\lambda\wedge\tr(\Box\forall\beta_1)$ and $\tr(\Box\forall\beta_2)\Rightarrow\neg\tr(\varphi^{\nu'})$ in (\ref{craig2}) are written in $({\bf L}^\nu \cup C^d_{\leq k}\cup C^p_{\leq k})_\ITL$ and $({\bf L}^{\nu'}\cup C^d_{\leq k}\cup C^p_{\leq k})_\ITL$, respectively. A bijection can be defined between the sets of the flexible symbols of these two vocabularies, excluding $\ell$, in which the flexible symbol $s'\in({\bf L}^{\nu'}\cup C^d_{\leq k}\cup C^p_{\leq k})_\ITL\setminus\{\ell\}$ which corresponds to $s\in({\bf L}^\nu \cup C^d_{\leq k}\cup C^p_{\leq k})_\ITL\setminus\{\ell\}$ is obtained by changing all the superscripts $\nu$ in $s$ to $\nu'$ and vice-versa. If $s$ is of the form $\probab_{\tr(\psi)}$ (see Definition \ref{itlvoc}), it may have more than one occurrence of a superscript $\nu$ in the subscript formula $\tr(\psi)$. All these occurrences have to be changed.
This bijection allows us to apply 
interval-related interpolation (Theorem \ref{intervalrelatedinterpolation}) to (\ref{craig2}) and conclude that some $\ITL$ formulas $\theta_\ITL\in({\bf L}^\nu\cup C^d_{\leq k}\cup C^p_{\leq k})_\ITL$ and $\theta_\ITL'\in({\bf L}^{\nu'}\cup C^d_{\leq k}\cup C^p_{\leq k})_\ITL$ which can be obtained from each other by replacing the corresponding flexible symbols from their respective vocabularies satisfy
\begin{equation}\label{interval1}
\vdash_{\ITL}\lambda\wedge\tr(\Box\forall\beta_1)\wedge c<\infty\wedge\ell=\infty\Rightarrow(\ell=c\wedge\theta_\ITL;\ell=\infty)
\end{equation}
and
\[\vdash_{\ITL}(\ell=c\wedge\theta_\ITL';\ell=\infty)\Rightarrow(\tr(\Box\forall\beta_2)\Rightarrow\neg\tr(\varphi^{\nu'}))\]
which by simply changing all superscripts $\nu'$ to $\nu$ implies
\begin{equation}\label{interval2}
\vdash_{\ITL}(\ell=c\wedge\theta_\ITL;\ell=\infty)\Rightarrow(\tr(\Box\forall\beta_2')\Rightarrow\neg\tr(\varphi^{\nu}))
\end{equation}
where $\beta_2'$ is the result of changing all the superscripts $\nu'$ of the flexible symbols in $\beta_2$ to $\nu$. By (\ref{craig1}) and (\ref{interval1}) we obtain 
\begin{equation}\label{cr1int1}
\vdash_{\ITL}\tr\left((\bigwedge\Xi_{\overline{\nu'}})\wedge\bigwedge\limits_{i=3}^n\Box\forall\beta_i\right)\wedge\tr(\Box\forall\beta_1)\wedge c<\infty\wedge\ell=\infty\Rightarrow(\ell=c\wedge\theta_\ITL;\ell=\infty)
\end{equation}
The formula $\theta_\ITL$ is the $\tr$-translation of some $\PITL$ formula written in ${\bf L}^\nu\cup C^d_{\leq k}\cup C^p_{\leq k}$ which, in its turn, has the form $\theta^\nu$ where $\theta$ is a formula written in ${\bf L}\cup C^d_{\leq k}\cup C^p_{\leq k}$. (Then $\theta_\ITL'$ is $\tr(\theta^{\nu'})$.)
Hence we have  
\[\vdash_{\PITL^-_{{\bf L}_{\leq k+1}'}}(\bigwedge\Xi_{\overline{\nu'}})\wedge\bigwedge\limits_{i=3}^n\Box\forall\beta_i\wedge\Box\forall\beta_1\wedge c<\infty\wedge\ell=\infty\Rightarrow(\ell=c\wedge\theta^\nu;\ell=\infty).\]
Since $\beta_i\in\PITL_{{\bf L}^{\nu_i}\cup C^d_{\leq k}\cup C^p_{\leq k}}\subseteq \PITL^-_{{\bf L}_{\leq k+1}'}$, $i=3,\ldots,n$, and $\beta_1\in\PITL_{{\bf L}^{\nu}\cup C^d_{\leq k}\cup C^p_{\leq k}}\subseteq \PITL^-_{{\bf L}_{\leq k+1}'}$, the above formula can be simplified to
\[\vdash_{\PITL^-_{{\bf L}_{\leq k+1}'}}(\bigwedge\Xi_{\overline{\nu'}})\wedge c<\infty\wedge\ell=\infty\Rightarrow(\ell=c\wedge\theta^\nu;\ell=\infty).\]
Since $(p(\varphi^\nu)\not=0\wedge\ell=c;\ell=\infty)\in\Delta_k$, $c<\infty,\ell=\infty\in\Delta_k$ too. This implies that $(\ell=c\wedge\theta^\nu;\ell=\infty)\in \Cn_{{\bf L}_{\leq k+1}'}(\Delta_k\cup\Xi_{\overline{\nu'}})$. Similarly, (\ref{interval2}) implies than 
\[\vdash_{\PITL_{{\bf L}^\nu\cup C^d_{\leq k}\cup C^p_{\leq k}}}(\ell=c\wedge\theta^\nu;\ell=\infty)\Rightarrow(\Box\forall\beta_2'\Rightarrow\neg\varphi^{\nu}),
\]
and, since $\beta_2'$ is a $\PITL$ theorem written in the vocabulary ${\bf L}^\nu\cup C^d_{\leq k}\cup C^p_{\leq k}$,
\begin{equation}\label{ruleapp}
\vdash_{\PITL_{{\bf L}^\nu\cup C^d_{\leq k}\cup C^p_{\leq k}}}(\ell=c\wedge\theta^\nu;\ell=\infty)\Rightarrow(\varphi^\nu\Rightarrow\bot),
\end{equation}
Now by an application of the rule $P_\leq$ to (\ref{ruleapp}), where the flexible symbols have no other superscript except $\nu$ as required by our restricted way of applying this $\PITL$-specific rule, we obtain
\[\vdash_{\PITL^-_{{\bf L}_{\leq k+1}'}}\ell=c\wedge\theta^\nu\wedge\ell<\infty\Rightarrow p(\varphi^\nu)\leq p(\bot)\]
which implies
\[\vdash_{\PITL^-_{{\bf L}_{\leq k+1}'}}\ell=c\wedge\theta^\nu\wedge\ell<\infty\Rightarrow p(\varphi^\nu)=0\]
by $P_\bot$ and, finally,
\[\vdash_{\PITL^-_{{\bf L}_{\leq k+1}'}}(\ell=c\wedge\theta^\nu\wedge\ell<\infty;\ell=\infty)\Rightarrow( p(\varphi^\nu)=0\wedge\ell=c;\ell=\infty)\]
by an application of the $\ITL$ proof rule $\Mono$.
Since $c<\infty,(\ell=c\wedge\theta^\nu;\ell=\infty)\in\Cn_{{\bf L}_{\leq k+1}'}(\Delta_k\cup\Xi_{\overline{\nu'}})$, this implies $(p(\varphi^\nu)=0\wedge\ell=c;\ell=\infty)\in\Cn_{{\bf L}_{\leq k+1}'}(\Delta_k\cup\Xi_{\overline{\nu'}})$. Hence 
$\Delta_k\cup\Xi_{\overline{\nu'}}$ is just as inconsistent as $\Delta_k\cup\Xi$, because the reason for all the formulas with flexible symbols superscripted by $\nu'=\langle\nu,c,\varphi\rangle$ to be in the finite subset $\Xi$ of $\Delta_{k+1}'$ is  
$(p(\varphi^\nu)\not=0\wedge\ell=c;\ell=\infty)\in\Delta_k$.
We can continue by showing that taking away the formulas of the form (\ref{addedformulas}) for some other superscript $\nu''\in S_{k+1}\setminus S_{\leq k}$ leads to a subset $(\Xi_{\overline{\nu'}})_{\overline{\nu''}}$ of $\Xi_{\overline{\nu'}}$ such that $\Delta_k\cup(\Xi_{\overline{\nu'}})_{\overline{\nu''}}$ is still inconsistent, etc., until there are no more symbols with superscripts from $S_{k+1}\setminus S_{\leq k}$ in the remaining subset of $\Xi$, which then will be a subset of $\Delta_k$. 
This is the sought contradiction, because we assume that $\Delta_k$ is consistent.
\end{proof}

For an arbitrary $k<\omega$, if $\Delta_{k+1}'$ is consistent, then $\Delta_{k+1}$ is defined as some maximal consistent set which contains $\Delta_{k+1}'$ and has witnesses in $C^d_{k+1}\cup C^p_{k+1}$. Its existence follows from Theorem \ref{lindenbaum} again. Then Lemma \ref{greatlemma} implies that all the sets in the sequence (\ref{deltasequence}) are consistent. Furthermore, obviously $\Delta$ is a maximal consistent set in ${\bf L}_D$ with respect to $\vdash_{\PITL^-}$ and has witnesses in $C^d\cup C^p$. The construction of $\Delta$ is complete.

\subsection{The $\PITL$ model $M$}
\label{modelm}

Since $\Delta$ is a maximal consistent set of $\PITL$ formulas written in ${\bf L}_D$ with witnesses in $C^d\cup C^p$, $\tr(\Delta)$ is maximal consistent set of $\ITL$ formulas written in $({\bf L}_D)_\ITL$ with witnesses in $C^d\cup C^p$ too. We use this to construct the model $M$ at two steps, the first being the construction of a canonical $\ITL$ model $M_\ITL$ which satisfies $\tr(\Delta)$ and the second being the construction of $M$ itself. This way we avoid the repetition of the non-$\PITL$-specific steps in the construction of $M$ which are as in \cite{WX04}.

\subsubsection{The $\ITL$ counterpart of $M$}

Let
\[c_1\equiv c_2\mbox{ iff }c_1=c_2\in\Delta\]
for constants $c_1,c_2\in C^d$ and $c_1,c_2\in C^p$.
Clearly, $\equiv$ is an equivalence relation on the constants from $C^d\cup C^p$. Let $[c]$ denote the $\equiv$-equivalence class which contains $c$ for each $c\in C^d\cup C^p$. Let 
\[T=\{[c]:c\in C^d\},\ D=T,\mbox{ and }U=\{[c]:c\in C^d\}.\]
Let  
\[[c']\leq[c'']\mbox{ iff }c'\leq c''\in\Delta\]
for $c',c''\in C^d$.
Clearly, $\leq$ is a linear ordering on $T$. Let $c_\infty$ be a witness in $C^d$ for the formula $\exists x(x=\infty)$ in $\Delta$. Then clearly $\langle T,\leq,[c_\infty]\rangle$ is a time domain.

Given $[[c'],[c'']]\in\tilde{\bf I}(T)$, we denote the set of formulas written in ${\bf L}_D$
\[\{\varphi:((\ell=c';\varphi)\wedge\ell=c'';\top)\vee(c''=\infty\wedge(\ell=c';\varphi))\in\Delta\}\]
by $\Delta_{[[c'],[c'']]}$. To understand the definition of $\Delta_{[[c'],[c'']]}$, recall our choice to start from a set $\Gamma$ such that $\ell=\infty\in\Gamma$ and, consequently, $\ell=\infty\in\Delta$. Let $c_0\in C^d$ be a witness for $\exists x(x=0)$ in $\Delta$ and $\sigma_0=[[c_0],[c_\infty]]$ for the rest of the section. Then obviously $\Delta_{\sigma_0}=\Delta$ and 
\begin{equation}\label{trdelta}
\varphi\in\Delta_{[[c'],[c'']]}\mbox{ iff }(\ell=c';\varphi)\in\Delta_{[[c_0],[c'']]}
\end{equation}
for all $\varphi\in{\bf L}_D$.

We define the mapping $I_\ITL$ of $({\bf L}_D)_\ITL$ by the clauses:

$I_\ITL(x),I_\ITL(d)\in A$ for individual variables $x$ and constants $d$ where $A=D$ for $x$ and $d$ of the duration sort and $A=U$ otherwise, and
\[I_\ITL(x)=\{c\in C^d\cup C^p:c=x\in\tr(\Delta)\},\ I_\ITL(d)=\{c\in C^d\cup C^p:c=d\in\tr(\Delta)\}.\]

$I_\ITL(f):A_1\times\ldots\times A_{\# f}\rightarrow A_{\# f+1}$ rigid function symbols $f$ where $A_1,\ldots,A_{\# f+1}$ are either $D$ or $U$, depending on the sort of the respective arguments of $f$ and the sort of its value, and 
\[I_\ITL(f)([c_1],\ldots,[c_{\# f}])=\{c\in C^d\cup C^p:c=f(c_1,\ldots,c_{\# f})\in\tr(\Delta)\}.\]

$I_\ITL(R):A_1\times\ldots\times A_{\# R}\rightarrow\{0,1\}$ for rigid relation symbols $R$ where $A_1,\ldots,A_{\# R}$ are as for function symbols, and 
\[I_\ITL(R)([c_1],\ldots,[c_{\# R}])=1\mbox{ iff }R(c_1,\ldots,c_n)\in\tr(\Delta).\]

$I_\ITL(d):\tilde{\bf I}(T)\rightarrow A$, $I_\ITL(f):\tilde{\bf I}(T)\times A_1\times\ldots\times A_{\# f}\rightarrow A_{\# f+1}$ and\\ $I_\ITL(R):A_1\times\ldots\times A_{\# R}\rightarrow\{0,1\}$ for flexible $d$, $f$ and $R$, respectively, where the $A$s are as for rigid symbols.
\[I_\ITL(d)(\sigma)=\{c\in C^d\cup C^p:c=d\in\tr(\Delta_{\sigma})\}.\]
Similarly, 

\[I_\ITL(f)(\sigma,[c_1],\ldots,[c_{\# f}])=\{c\in C^d\cup C^p:c=f(c_1,\ldots,c_{\# f})\in\tr(\Delta_{\sigma})\}.\]
Finally, $I_\ITL(R)(\sigma,[c_1],\ldots,[c_{\# R}])=1$ iff $R(c_1,\ldots,c_{\# R})\in\tr(\Delta_{\sigma})$.

\noindent
A lengthy but otherwise straighforward argument, which is standard for canonical models, shows that the above definitions are correct, $\langle D,I_\ITL(+),I_\ITL(0),I_\ITL(\infty)\rangle$ is a duration
domain, $\langle U,I_\ITL(+),I_\ITL(0),I_\ITL(1)\rangle$ is a probability
domain and $I_\ITL(\ell)$ is a measure function from $\tilde{\bf I}(T)$ to $D$,
\[F=\langle \langle T,\leq,I_\ITL(\infty)\rangle,\langle
D,I_\ITL(+),I_\ITL(0),I_\ITL(\infty)\rangle,\langle
U,I_\ITL(+),I_\ITL(0),I_\ITL(1)\rangle,I(\ell)\rangle\]
is a two-sorted frame for $\ITL$ with infinite intervals and $I$
is an $\ITL$ interpretation of $({\bf L}_D)_\ITL$ into $F$, which means
that $M_\ITL=\langle F,I_\ITL\rangle$ is a two-sorted $\ITL$ model for $({\bf L}_D)_\ITL$.
The standard truth lemma holds for $M_\ITL$, which is a canonical model:

\begin{lem}[Truth Lemma for $M_\ITL$]
Let $\sigma\in\tilde{\bf I}(T)$. Then
\[(I_\ITL)_\sigma(t)=\{c\in C^d\cup C^p:t=c\in\tr(\Delta_\sigma)\}\mbox{ and }M_\ITL,\sigma\models\varphi\mbox{ iff }\varphi\in\tr(\Delta_\sigma)\]
for every term $t$ and every formula $\varphi$ written in the vocabulary $({\bf L}_D)_\ITL$.
\end{lem}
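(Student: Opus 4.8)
The plan is to prove the two claims simultaneously by induction on the structure of the term $t$ and the formula $\varphi$, exactly as in the standard canonical-model construction, but taking care over the interplay between the vocabulary $({\bf L}_D)_\ITL$, the $\tr$-translation, and the $\ITL$-with-infinite-intervals features $(.;.)$ and $\ell$. First I would observe that, since $\tr(\Delta)$ is a maximal consistent Henkin theory of $\ITL$ with infinite intervals written in $({\bf L}_D)_\ITL$ with witnesses in $C^d\cup C^p$, and since $M_\ITL$ is built from $\tr(\Delta)$ in precisely the canonical way described just above the statement, this is an instance of the truth lemma for the canonical model of $\ITL$ with infinite intervals from \cite{WX04}; so the bulk of the argument is a direct appeal to that work, and what remains is to check that the specific way the sets $\Delta_\sigma$ are indexed by pairs of constants and the equivalence (\ref{trdelta}) reproduce the hypotheses of that lemma.

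The key steps, in order, would be as follows. \emph{Base cases for terms:} for a variable $x$ or rigid constant $d$, $(I_\ITL)_\sigma(x)=I_\ITL(x)$ and $(I_\ITL)_\sigma(d)=I_\ITL(d)$ hold by the definition of $I_\ITL$ on rigid symbols and the fact that rigid values do not depend on $\sigma$; for a flexible constant $d$, $(I_\ITL)_\sigma(d)=I_\ITL(d)(\sigma)=\{c:d=c\in\tr(\Delta_\sigma)\}$ straight from the definition; in particular $(I_\ITL)_\sigma(\ell)=\{c:\ell=c\in\tr(\Delta_\sigma)\}$, which by the definition of $\Delta_{[[c'],[c'']]}$ via $((\ell=c';\ell=c'')\wedge\ell=c'';\top)\vee\ldots$ and axioms $L1$--$L3$, $P1$--$P3$ picks out exactly the class $[c'']$ (after subtracting $[c']$), i.e.\ the measure function value $m(\sigma)$. \emph{Induction step for terms:} for $f(t_1,\ldots,t_{\# f})$, apply the induction hypothesis to each $t_i$, use that $\tr(\Delta_\sigma)$ is a maximal consistent Henkin theory closed under the function-value axioms and the witness property to find the class representing $f(c_1,\ldots,c_{\# f})$, and match it against the definition of $I_\ITL(f)$. \emph{Base case for formulas:} $\bot\notin\tr(\Delta_\sigma)$ by consistency; for atomic $R(t_1,\ldots,t_n)$, combine the term part of the lemma with the definition of $I_\ITL(R)$. \emph{Boolean and quantifier steps:} $\Rightarrow$ is routine by maximal consistency; $\exists x\varphi$ uses the Henkin witness property of $\tr(\Delta_\sigma)$ in $C^d\cup C^p$ together with the closure of ${\bf W}$-style variants — here just the substitution lemma for $I_\ITL$ — being careful that the substituted term is a rigid constant so the restriction on $(\exists_r)$ is respected. \emph{The chop step:} for $(\varphi;\psi)$ at $\sigma=[[c'],[c'']]$, the direction $\Leftarrow$ uses $A2$, $L2$, $B$, $\Mono$ to split; the direction $\Rightarrow$ is where one uses the definition of $\Delta_\sigma$, the equivalence (\ref{trdelta}), and a witness $c'''$ with $[c']\leq[c''']\leq[c'']$ for the chopping point supplied by the Henkin property and $(M3)$/$L2$, so that $\varphi\in\tr(\Delta_{[[c'],[c''']]})$ and $\psi\in\tr(\Delta_{[[c'''],[c'']]})$; the infinite-interval disjunct in the definition of $\Delta_\sigma$ handles the case $c''=\infty$, using $P1$--$P3$ to ensure the left subinterval stays finite as required by the semantics of $(.;.)$.

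The main obstacle I expect is the chop step, and specifically the bookkeeping needed to show that for a formula $(\varphi;\psi)\in\tr(\Delta_\sigma)$ there is a constant $c'''$ naming a genuine chopping point — that is, that $\{\,\chi : (\ell=c';\chi)\in\Delta_{[[c_0],[c''']]}\,\}$ and its counterpart at $[[c'''],[c'']]$ are each maximal and consistent and jointly realize $\varphi$ and $\psi$. This requires the Henkin witness property applied to a suitable existential formula asserting the existence of the split point, plus the measure-function axioms $(M1)$--$(M4)$ and the length axioms $L1$, $L2$, $S1$, $P1$--$P3$ to pin the point down uniquely and to keep the finite/infinite case distinction straight; all of this is already carried out in \cite{WX04} for $\ITL$ with infinite intervals, so I would state the chop step by reference to that argument, noting only that the indexing of $\Delta_\sigma$ by pairs of constants from $C^d$ and the relation (\ref{trdelta}) are exactly what makes the transfer work, and that no probability-specific reasoning enters here because $\probab_\varphi$ symbols are treated by $M_\ITL$ as ordinary flexible constants and function symbols.
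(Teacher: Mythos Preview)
Your proposal is correct and matches the paper's approach: the paper simply states the lemma without proof after remarking that ``the standard truth lemma holds for $M_\ITL$, which is a canonical model,'' deferring implicitly to the construction in \cite{WX04} (and \cite{Dut95}). Your sketch supplies precisely the details the paper omits---the term and formula induction, the Henkin-witness handling of $\exists$, and the chop step via a witnessing constant for the split point---and your closing observation that the $\probab_\varphi$ symbols are treated as ordinary flexible symbols so that no probability-specific reasoning enters is exactly the point that makes the reduction to the $\ITL$ truth lemma go through.
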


\subsubsection{The model $M$}

Our next step is to define the $\PITL$ model $M=\langle F,{\bf W},I,P\rangle$ itself. The vocabulary of $M$ is ${\bf L}\cup C^d\cup C^p$ and its frame is $F$. Let $\Pi$ denote the set of the functions $\pi:V\rightarrow D\cup U$ where $V$ is a finite set of individual variables in ${\bf L}$ and $\pi(x)$ is in the domain which corresponds to the sort of $x$ for each $x\in V$. We define ${\bf W}$ as the set $S\times\Pi$.
Given $\nu\in S$, we define the interpretation $I_\nu$ by the equalities
\[I_\nu(s)=I_\ITL(s)\]
for rigid $s\in{\bf L}\cup C^d\cup C^p$, including the individual variables,
\[I_\nu(\ell)=m\mbox{ and }I_\nu(d)=I_\ITL(d^\nu)\]
for flexible constants $d\in{\bf L}\setminus\{\ell\}$ and
\[I_\nu(s)(\sigma,a_1,\ldots,a_{\# s})=I_\ITL(s^\nu)(\sigma,a_1,\ldots,a_{\# s})\]
for other flexible $s\in{\bf L}$. Now ${\bf W}$ consists of all the variants of the $I_\nu$ for all $\nu\in S$.

Given $w=\langle\nu,\pi\rangle$ such that $\dom\pi=\{x_1,\ldots,x_n\}$, we put \[I^w=(I_\nu)_{x_1,\ ,\ \ldots\ ,\ x_n}^{\pi(x_1),\ldots,\pi(x_n)}.\]
Some auxiliary notation is needed for the definition of $P^w$.

Let $\varphi$ be a formula written in ${\bf L}\cup C^d\cup C^p$, $FV(\varphi)=\emptyset$, $\nu\in S$ and $[[c'],[c'']]\in\tilde{I}(T)$. Then we denote the set
\[\{\nu'\in S:\varphi^{\nu'}\in\Delta_{[[c'],[c_\infty]]},
(\Box\forall(\chi^\nu\Leftrightarrow\chi^{\nu'})\wedge\ell=[c''];\top)\in\Delta_{[[c'],[c_\infty]]}\mbox{ for all }\chi\mbox{ in }{\bf L}\cup C^d\cup C^p\}\]
by $S_{\varphi,\nu,[[c'],[c'']]}$. We use $S_{\varphi,\nu,[[c'],[c'']]}$ to define a syntactical conterpart $\synsem{.}$ to $\sem{.}$ in our model under construction. If $\psi$ is a formula written in ${\bf L}\cup C^d\cup C^p$, $FV(\psi)=\{x_1,\ldots,x_n\}$ and $c_i\in I^{\langle\nu,\pi\rangle}(x_i)$, $i=1,\ldots,n$, then we put
\begin{equation}\label{defset}
\synsem{\varphi}_{\langle\nu,\pi\rangle,[[c'],[c'']]}=\{\langle \nu',\pi'\rangle\in{\bf W}:\nu'\in S_{[c_1/x_1,\ldots,c_n/x_n]\varphi,\nu,[[c'],[c'']]},\pi'\in\Pi\}.
\end{equation}
Clearly, the set on the right of $=$ in (\ref{defset}) does not depend on the precise choice of $c_i\in I^{\langle\nu,\pi\rangle}(x_i)$, $i=1,\ldots,n$.
The truth lemma about $M$ which is proved below entails that 
\begin{equation}\label{synsemsem}
\synsem{\varphi}_{w,[[c'],[c'']]}=\sem{\varphi}_{M,w,[[c'],[c'']]}.
\end{equation}
Note that 
\begin{equation}\label{extpast}
\synsem{\varphi}_{w,[[c'],[c'']]}=\synsem{(\ell=c';\varphi)}_{w,[[c_0],[c'']]}
\end{equation}
follows from (\ref{trdelta}) and therefore the rest of the construction steps involve mostly intervals $\sigma\in\tilde{I}(T)$ such that $\min\sigma=[c_0]$. Given $w\in{\bf W}$, $w=\langle\nu,\pi\rangle$, a formula $\varphi$ written in ${\bf L}\cup C^d\cup C^p$ whose free variables are $x_1,\ldots,x_n$, $\nu\in S$, $c_i\in I^w(x_i)$, $i=1,\ldots,n$, and $[c'']\in T$ we define $P^w$ on the subsets of ${\bf W}$ of the form (\ref{defset}) by the equality

\[P^w([c''],\synsem{\varphi}_{w,[[c_0],[c'']]})=\{c\in C^p:p([c_1/x_1,\ldots,c_n/x_n]\varphi^\nu)=c\in\Delta_{[[c_0],[c'']]}\}.\] 
For this definition to be correct, we need to have 
\[p([c_1/x_1,\ldots,c_n/x_n]\varphi^\nu)=c\in\Delta_{[[c_0],[c'']]}\mbox{ iff }p([c_1/x_1,\ldots,c_n/x_n]\psi^\nu)=c\in\Delta_{[[c_0],[c'']]}\]
for formulas $\varphi$ and $\psi$ such that 
\begin{equation}\label{synsemeq}
\synsem{\varphi}_{w,[[c_0],[c'']]}=\synsem{\psi}_{w,[[c_0],[c'']]},
\end{equation}
and $c_i\in I^w(x_i)$, $i=1,\ldots,n$, where $\{x_1,\ldots,x_n\}=FV(\varphi)\cup FV(\psi)$. To prove it, assume that \[p([c_1/x_1,\ldots,c_n/x_n]\varphi^\nu)<p([c_1/x_1,\ldots,c_n/x_n]\psi^\nu)\in\Delta_{[[c_0],[c'']]}\]
for the sake of contradiction. Then \[p([c_1/x_1,\ldots,c_n/x_n](\psi^\nu\wedge\neg\varphi^\nu))\not=0\in\Delta_{[[c_0],[c'']]}\]
by $\PITL 3$ from Section \ref{usefultheorems}. If $c''<\infty\in\Delta$, then this implies that \[\langle\langle\nu,c'',\psi\wedge\neg\varphi\rangle,\pi'\rangle\in\synsem{\psi}_{w,[[c_0],[c'']]}\setminus\synsem{\varphi}_{w,[[c_0],[c'']]}\]
where $\dom\pi'=FV(\varphi)\cup FV(\psi)$ and $\pi'(x_i)=I^w(x_i)$. $i=1,\ldots,n$, which contradicts (\ref{synsemeq}). If $c''=\infty\in\Delta$, then the appropriate instances of $P_\infty$ and $\PITL 2$ from Section \ref{usefultheorems} imply that 
\[p([c_1/x_1,\ldots,c_n/x_n](\psi^\nu\wedge\neg\varphi^\nu))=1\in\Delta_{[[c_0],[c'']]}\]
and, consequently, \[[c_1/x_1,\ldots,c_n/x_n](\psi^\nu\wedge\neg\varphi^\nu)\in\Delta_{[[c_0],[c'']]}.\]
This implies that $w$ itself is in $\synsem{\psi}_{w,[[c_0],[c'']]}\setminus\synsem{\varphi}_{w,[[c_0],[c'']]}$, which contradicts (\ref{synsemeq}) too.

The presence of all the instances of $P_\bot$, $P_\top$ and $P_+$ written in the vocabularies ${\bf L}^\nu\cup C^d\cup C^p$, $\nu\in S$, in $\Delta_{[[c_0],[c'']]}$ implies that $\lambda X.P^w([c''],X)$ is a finitely additive probability function on the boolean algebra
\[\langle\{\synsem{\psi}_{w,[[c_0],[c'']]}:\psi\in{\bf L}\},\cap,\cup,\emptyset,{\bf W}_{w,[c'']}\rangle\]
for every $w\in{\bf W}$ and every $[c'']\in T$. Note that this algebra contains the sets $\synsem{\psi}_{w,[[c'],[c'']]}$ for all $c'\in C^d$ such that $c'\leq c''\in\Delta$ because of (\ref{extpast}). Clearly, $M=\langle F,{\bf W},I,P\rangle$ is a $PITL$ model for the vocabulary ${\bf L}\cup C^d\cup C^p$. 

Obviously if $w=\langle\nu,\pi\rangle$ for some $\pi\in\Pi$ then $\{\langle\langle\nu,c,\varphi\rangle,\pi'\rangle:\pi'\in\Pi\}\subseteq\in{\bf W}_{w,[c]}$ for all $\nu\in S_{\leq k}$, $c\in C^d$ and all $\varphi$ written in ${\bf L}_{\leq k}$ such that $(p(\varphi^\nu)\not=0\wedge\ell=c;\top)\in\Delta$ and all $k<\omega$, because, according to the construction of $\Delta$, in this case
\[(\Box\forall(\chi^\nu\Leftrightarrow\chi^{\langle\nu,c,\varphi\rangle})\wedge\ell=c;\top)\in\Delta\]
for all formulas $\chi$ written in ${\bf L}\cup C^d\cup C^p$, and in particular for $\chi$ of the forms $d=x$, $f(x_1,\ldots,x_{\# f})=x_{\# f+1}$, $R(x_1,\ldots,x_{\# R})$ and $p(\psi)=x$ where $d$, $f$ and $R$ are flexible constants, function and relation symbols from ${\bf L}$, and $\psi$ is written in ${\bf L}\cup C^d\cup C^p$ respectively. 
Furthermore, if $I^w$ is a variant of $I^v$ and $P^w=P^v$ for some $w,v\in{\bf W}$, then ${\bf W}_{w,[c]}={\bf W}_{v,[c]}$ for all $[c]\in T$.

Here follows the truth lemma for $M$:

\begin{lem}[Truth Lemma for $M$]
\label{truthlemma}
Let $\sigma\in\tilde{\bf I}(T)$, $w\in{\bf W}$ and $w=\langle\nu,\pi\rangle$. If $t$ is a term written in ${\bf L}_D$, $FV(t)=\{x_1,\ldots,x_n\}$ and $c_1,\ldots,c_n\in C^d\cup C^p$ are such that $c_i\in I^w(x_i)$, $i=1,\ldots,n$, then
\[w_\sigma(t)=\{c\in C^d\cup C^p:[c_1/x_1,\ldots,c_n/x_n]t^\nu=c\in\Delta_\sigma\}.\]
If $\varphi$ is a formula written in ${\bf L}_D$, $FV(\varphi)=\{x_1,\ldots,x_n\}$ and $c_1,\ldots,c_n$ satisfy the same conditions as above, then
\[M,w,\sigma\models\varphi\mbox{ iff }[c_1/x_1,\ldots,c_n/x_n]\varphi^\nu\in\Delta_\sigma.\]
\end{lem}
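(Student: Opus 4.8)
The plan is to prove the two statements (the term clause and the formula clause) simultaneously by induction on the structure of $t$ and $\varphi$, exactly in the style of a canonical-model truth lemma, but routing all purely-$\ITL$ reasoning through $M_\ITL$ and its Truth Lemma rather than redoing it. The base of the term induction (variables, rigid and flexible constants, function applications) and the base and inductive steps for formulas that do not involve probability terms follow immediately from the Truth Lemma for $M_\ITL$ together with the definitions $I^w=(I_\nu)_{x_1,\ldots,x_n}^{\pi(x_1),\ldots,\pi(x_n)}$ and $I_\nu(s)(\sigma,\ldots)=I_\ITL(s^\nu)(\sigma,\ldots)$: substituting the witness constants $c_i$ for the free variables $x_i$ realises the variant $I^w$ syntactically, and superscripting by $\nu$ realises the choice of behaviour. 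So the only genuinely new cases are (i) the term $t=p(\psi)$ and (ii) the propagation of the induction hypothesis through $\Rightarrow$, $\exists x$ and the chop $(\varphi_1;\varphi_2)$ in the $\PITL$ setting.

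For the chop and the quantifier I would argue as in the $\ITL$ case, but using the equivalences (\ref{trdelta}) and (\ref{extpast}): $M,w,\sigma\models(\varphi_1;\varphi_2)$ unfolds via $\sigma=\sigma_1;\sigma_2$ into statements about $\Delta_{\sigma_1}$ and $\Delta_{\sigma_2}$, and the maximality of $\Delta$ together with the $\ITL$ axioms $A1$, $A2$, $B$, $L1$--$L3$, $S1$, $P1$--$P3$ and the rules $N$, $\Mono$ (all available in $\PITL^-$) lets one split and recombine. The key point is that $\Delta_\sigma$ was defined so that $\varphi\in\Delta_{[[c'],[c'']]}$ iff $(\ell=c';\varphi)\in\Delta_{[[c_0],[c'']]}$, and that witnesses for the existential quantifier live in $C^d\cup C^p$ because $\Delta$ has witnesses there; so $\exists x\varphi\in\Delta_\sigma$ produces a constant $c$ with $[c/x]\varphi\in\Delta_\sigma$, hence a variant $w'=\langle\nu,\pi[x\mapsto[c]]\rangle\in{\bf W}$ (using that ${\bf W}$ is closed under variants), and conversely.

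The heart of the argument is the clause $t=p(\psi)$, where I must show
\[
w_\sigma(p(\psi))=P^w(\max\sigma,\sem{\psi}_{M,w,\sigma})
  =\{c\in C^p:[c_1/x_1,\ldots,c_n/x_n]p(\psi)^\nu=c\in\Delta_\sigma\}.
\]
By the defining equation of $P^w$ we have $P^w([c''],\synsem{\psi}_{w,[[c_0],[c'']]})=\{c\in C^p:p([\vec c/\vec x]\psi^\nu)=c\in\Delta_{[[c_0],[c'']]}\}$, and by (\ref{extpast}) and (\ref{trdelta}) the same holds with $[[c_0],[c'']]$ replaced by an arbitrary $\sigma=[[c'],[c'']]$ once we prefix $\ell=c'$, using axiom $P_;$ to move the $\ell=c'$ prefix across $p$. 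So what remains is precisely the identity (\ref{synsemsem}), namely $\synsem{\psi}_{w,\sigma}=\sem{\psi}_{M,w,\sigma}$, which is where the induction hypothesis for the subformula $\psi$ is consumed: $\sem{\psi}_{M,w,\sigma}$ is the set of $v\in{\bf W}_{w,\max\sigma}$ whose "variable-cleaned" versions satisfy $\psi$ at $[\min\sigma,\infty]$, and by the formula part of the induction hypothesis applied at the infinite interval this is exactly the set cut out by the membership conditions "$\psi^{\nu'}\in\Delta_{[[c'],[c_\infty]]}$ and $\Box\forall(\chi^\nu\Leftrightarrow\chi^{\nu'})\wedge\ell=[c'']\in\Delta_{[[c'],[c_\infty]]}$" defining $S_{\psi,\nu,\sigma}$, i.e.\ $\synsem{\psi}_{w,\sigma}$. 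One checks here that the $\tau$-equivalence relation $\equiv_{\max\sigma}$ between $w$ and $w'=\langle\nu',\pi'\rangle$ corresponds to exactly the family of equivalences $\Box\forall(\chi^\nu\Leftrightarrow\chi^{\nu'})\wedge\ell=c''$ over all $\chi$ (rigid symbols, flexible $d$, $f$, $R$, and the probability atoms $p(\vartheta)=x$), which was arranged in the construction of $\Delta$; the well-definedness of $P^w$ already proved (via $\PITL 3$, $P_\infty$, $\PITL 2$) guarantees that distinct $\psi$ with the same $\synsem{\cdot}$ give the same probability value, so the clause is consistent.

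The main obstacle I anticipate is not any single identity but the bookkeeping needed to make the simultaneous induction go through cleanly across the chop in the presence of probability terms: a probability term $p(\psi)$ occurring inside one side of a chop is evaluated at $\max$ of that side's subinterval, so one must be careful that $\sem{\psi}$ is taken relative to the right endpoint and that (\ref{extpast}) correctly relates $\synsem{\psi}_{w,[[c'],[c'']]}$ to $\synsem{(\ell=c';\psi)}_{w,[[c_0],[c'']]}$; getting the quantifier case right also requires noting that the free variables $x_1,\ldots,x_n$ of $\psi$ are handled by the $c_i$-substitution \emph{and} that fresh witnesses do not collide with them, which is exactly why $S_{k+1}$ and the constant sets $C^d_k,C^p_k$ were stratified. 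Everything else is the standard Lindenbaum/canonical-model routine, imported from \cite{WX04} and \cite{Dut95} via the Truth Lemma for $M_\ITL$ and Proposition~\ref{consistencyundertranslation2}.
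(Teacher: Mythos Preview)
Your proposal is correct and follows essentially the same approach as the paper: simultaneous induction on terms and formulas, with the non-probabilistic cases delegated to the Truth Lemma for $M_\ITL$ via the definitions of $I^w$ and $I_\nu$, and the only new case being $p(\psi)$, handled by reducing to the interval $[[c_0],[c'']]$ via $P_;$ and (\ref{extpast}), then invoking the induction hypothesis on $\psi$ to obtain $\synsem{\psi}_{w,\sigma}=\sem{\psi}_{M,w,\sigma}$ and concluding from the definition of $P^w$. The paper omits the chop, quantifier, and $\tau$-equivalence bookkeeping you spell out, simply asserting that the formula clause implies (\ref{synsemsem}) and that the remaining cases are ``as in (non-probabilistic) $\ITL$''; your more explicit treatment of these points is consistent with that and does not diverge from the intended argument.
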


We use the constants $c_1,\ldots,c_n$ in the formulation of the lemma, because we need it to apply to $w\in{\bf W}$ with variants to some interpretation of the form $I_\nu$, and not just to the interpretations $I_\nu$, $\nu\in S$, themselves.

\begin{proof}
The proof is by simultaneous induction on the length of terms and formulas. The clause of the lemma about formulas implies (\ref{synsemsem}).

The induction base and the steps for formulas and for terms built using constants, variables and function symbols are as in (non-probabilistic) $\ITL$ and we omit them. We only do the case of probabilistic terms $p(\psi)$. According to our definition, $FV(p(\psi))=FV(\psi)$. Let $x_1,\ldots,x_n$ and $c_1,\ldots,c_n$ be as in the lemma and $\sigma=[[c'],[c'']]$. 
Since 
\[
\begin{array}{lll}w_{[[c'],[c'']]}(p(\psi)) &=&P^w([c''],\sem{\psi}_{M,w,{[[c'],[c'']]}})\\ &=&P^w([c''],\sem{(\ell=c';\psi)}_{M,w,[[c_0],[c'']]})\\ &=&w_{[[c_0],[c'']]}(p((\ell=c';\psi)))\end{array}\]
and
\[[c_1/x_1,\ldots,c_n/x_n]p(\psi^\nu)=c\in\Delta_{[[c'],[c'']]}\mbox{ iff }[c_1/x_1,\ldots,c_n/x_n]p((\ell=c';\psi^\nu))=c\in\Delta_{[[c_0],[c'']]}\]
because of the instances $(\ell=c';p(\psi)=d)\Rightarrow p((\ell=c';\psi))=d$ of $P_;$, which are in $\Delta_{[c_0],[c'']}$ for all $d\in C^p$, 
it is sufficient to prove 
\begin{equation}\label{reducedp}
w_{[[c_0],[c'']]}(p((\ell=c';\psi)))=\{c\in C^d\cup C^p:p((\ell=c';[c_1/x_1,\ldots,c_n/x_n]\psi^\nu))=c\in\Delta_{[[c_0],[c'']]}\}.
\end{equation}
By the induction hypothesis, the lemma holds for $\psi$ and therefore
\[\synsem{\psi}_{w,[[c'],[c'']]}=\sem{\psi}_{M,w,[[c'],[c'']]},\]
which implies
\[\synsem{(\ell=c';\psi)}_{w,[[c_0],[c'']]}=\sem{(\ell=c';\psi)}_{M,w,[[c_0],[c'']]}\]
by (\ref{extpast}) and the definition of $\sem{.}_{M,w,[.,[c'']]}$. Now (\ref{reducedp}) follows from the definition of $P^w$.
\end{proof}

We conclude the presentation of $M$ with the observation that $S$ and the domains in $F$ are countably-infinite and therefore every interpretation in ${\bf W}$ has only countably many variants, which entails that ${\bf W}$ is a countably-infinite set. 

\subsection{The completeness theorem}

Now it is easy to prove the strong completeness theorem for our proof system for $\PITL$.

\begin{thm}

Let ${\bf L}$ be a $\PITL$ vocabulary and $\Gamma$ be a set of formulas written in ${\bf L}$ which is consistent with the proof system from Section \ref{proofsystem}. Then there exists a model $M_\Gamma=\langle F_\Gamma,{\bf W}_\Gamma,I_\Gamma,P_\Gamma\rangle$ for ${\bf L}$ and an $w_0\in{\bf W}_\Gamma$ and a time interval $\sigma_0$ in it such that \begin{equation}\label{compl}
M_\Gamma,w_0,\sigma_0\models\varphi\mbox{ for all }\varphi\in\Gamma.
\end{equation}
\end{thm}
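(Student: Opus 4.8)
The plan is to obtain $M_\Gamma$ as a vocabulary reduct of the canonical model $M$ constructed in the preceding subsections, so that the theorem collapses to a single application of the Truth Lemma~\ref{truthlemma}. First I would normalize to the infinite-interval case: if $\Gamma\cup\{\ell=\infty\}$ is consistent, replace $\Gamma$ by it; otherwise replace $\Gamma$ by the set (\ref{infiniteell}) for a fresh rigid constant $c$, which is consistent, contains $\ell=\infty$, and whose satisfaction at an interval $\sigma$ forces the satisfaction of the original $\Gamma$ at the finite subinterval of $\sigma$ sharing its left endpoint and having length $c$ --- this is exactly the reduction announced before Section~\ref{interpolation}. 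So I may assume $\ell=\infty\in\Gamma$. I would then observe that $\{\gamma^{\langle\rangle}:\gamma\in\Gamma\}$ is $\PITL^-$-consistent: every flexible symbol occurring in it carries the single superscript $\langle\rangle$, so any $\PITL^-$-derivation of $\bot$ from it satisfies the superscript restriction on $P_;,P_\infty,P_\bot,P_\top,P_+$ and $P_\leq$ vacuously and, after erasing the superscripts and treating the constants of $C^d_0\cup C^p_0$ as auxiliary symbols absent from $\Gamma$, becomes an ordinary $\PITL$-derivation of $\bot$ from $\Gamma$, contradicting consistency.

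With this in place I would carry out the construction of the preceding subsections: form the maximal consistent $\PITL^-$ set $\Delta\supseteq\{\gamma^{\langle\rangle}:\gamma\in\Gamma\}$ with witnesses in $C^d\cup C^p$ (Lemma~\ref{greatlemma} makes each stage of the chain (\ref{deltasequence}) consistent), pass to the canonical $\ITL$ model $M_\ITL$ of $\tr(\Delta)$, and then to the $\PITL$ model $M=\langle F,{\bf W},I,P\rangle$ for ${\bf L}\cup C^d\cup C^p$, which satisfies the Truth Lemma~\ref{truthlemma} and for which $\Delta_{\sigma_0}=\Delta$ with $\sigma_0=[[c_0],[c_\infty]]$. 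Now set $w_0=\langle\langle\rangle,\emptyset\rangle\in{\bf W}$, so that $I^{w_0}=I_{\langle\rangle}$. Fix $\gamma\in\Gamma$ with $FV(\gamma)=\{x_1,\ldots,x_n\}$ and, for each $i$, a witness $c_i\in C^d\cup C^p$ of $\exists y(y=x_i)$ in $\Delta$, so that $c_i=x_i\in\Delta$ and hence $c_i\in I^{w_0}(x_i)$. From $\gamma^{\langle\rangle}\in\Delta_0\subseteq\Delta$ and $c_i=x_i\in\Delta$, replacing the rigid variables $x_i$ by the rigid constants $c_i$ (permissible even in the presence of $(.;.)$ precisely because the $c_i$ are rigid) yields $[c_1/x_1,\ldots,c_n/x_n]\gamma^{\langle\rangle}\in\Delta=\Delta_{\sigma_0}$, which by the Truth Lemma is exactly $M,w_0,\sigma_0\models\gamma$. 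Thus $M$ satisfies $\Gamma$ at $\langle w_0,\sigma_0\rangle$.

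Finally I would take $M_\Gamma$ to be the reduct of $M$ to the vocabulary ${\bf L}$ (forgetting the interpretations of $C^d\cup C^p$, and of $c$ when the normalization used (\ref{infiniteell})): closure of ${\bf W}$ under variants, the requirement that each $\lambda X.P^w(\tau,X)$ be a finitely additive probability measure, and its concentration on ${\bf W}_{w,\tau}$ are all preserved under such a reduct, while satisfaction of ${\bf L}$-formulas is unchanged, so $M_\Gamma,w_0,\sigma_0\models\gamma$ for every $\gamma\in\Gamma$ --- and when the normalization used (\ref{infiniteell}) one takes for the final interval the finite initial subinterval of $[[c_0],[c_\infty]]$ of length $c$ instead. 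I do not expect a genuine obstacle at this stage, since the hard work --- the consistency preservation of Lemma~\ref{greatlemma} and the Truth Lemma for $M$ --- is already done; the only points requiring attention are the two routine reductions above (to $\ell=\infty\in\Gamma$ and to the level of $\PITL^-$) and the bookkeeping of free variables through the witnessing constants $c_i$ when the Truth Lemma is invoked.
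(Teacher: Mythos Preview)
Your proposal is correct and follows essentially the same route as the paper's own proof: reduce to the case $\ell=\infty\in\Gamma$ via (\ref{infiniteell}), build the model $M$ of Section~\ref{modelm}, take $w_0=\langle\langle\rangle,\emptyset\rangle$ and $\sigma_0=[[c_0],[c_\infty]]$ (respectively its initial segment of length $c$), and restrict $M$ to the vocabulary ${\bf L}$. You in fact spell out several points the paper leaves implicit --- the reason $\{\gamma^{\langle\rangle}:\gamma\in\Gamma\}$ is $\PITL^-$-consistent, the handling of free variables through witnessing constants when invoking the Truth Lemma, and the verification that the reduct remains a $\PITL$ model --- and these elaborations are correct and to the point.
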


\begin{proof}
If $\Gamma$ is consistent with the formula $\ell=\infty$, then we can take the model $M=\langle F,{\bf W},I,P\rangle$ constructed in Section \ref{modelm} for $\Gamma\cup\{\ell=\infty\}$. Otherwise $\Gamma$ is consistent with the formula $\ell=c\wedge c<\infty$ for some rigid constant $c\not\in{\bf L}$ and we can take $M$ from Section \ref{modelm} for the set (\ref{infiniteell}). In both cases $M_\Gamma$ can be chosen to be $\langle F,{\bf W},\lambda w.(I^w|_{\bf L}),P\rangle$ where $I^w|_{\bf L}$ stands for the restriction of $I_w$ to the initially given vocabulary ${\bf L}$, and $w_0$ can be chosen to be $\langle\langle\rangle,\emptyset\rangle$ where $\langle\rangle$ is the only element of $S_0$ and $\emptyset$ denotes the empty function $\emptyset\rightarrow C^d\cup C^p$. In the first case the interval $\sigma_0$ can be chosen to be the entire time domain $T$ of $F$. In the second case $\sigma_0$ can be chosen to be $[\min T,I^{w_0}(c)]$ where $c$ is the constant introduced above. The equivalence now follows from the definition of $\Delta$ and Lemma \ref{truthlemma}.
\end{proof}

\section{Axioms for global probability in $\PITL$ models}
\label{globalprobab}

We call the models for $\PITL$ introduced in Definition \ref{pitlmodels} {\em general}, because the probability functions $\lambda X.P^w(\tau,X)$ in them can be arbitrary, whereas it is natural to require these functions to satisfy certain constraints. Applications typically lead to models in which all the probability functions originate from a {\em global} probability function on the entire ${\bf W}$ such as the automata-based models of $\PDC$. Consider models $M=\langle F,{\bf W},I,P\rangle$ with frames $F=\langle\langle T,\leq,\infty\rangle,\langle D,+,0,\infty\rangle,\langle U,+,0,1\rangle,m\rangle$ whose time domain has a least element $\tau_0=\min T$ and a distinguished $w_0\in{\bf W}$ such that ${\bf W}_{w_0,\tau_0}={\bf W}$. Then $\lambda X.P^{w_0}(\tau_0,X)$ can be regarded as the global probability function and, given an arbitrary $w\in{\bf W}$ and $\tau\in T$, the probability function $\lambda X.P^w(\tau,X)$ should represent {\em conditional} probability on sets of interpretations, the condition being $\tau$-equivalence with $w$. Hence we should have
\begin{equation}
\label{condprobab}
P^{w_0}(\tau_0,{\bf W}_{w,\tau}).P^w(\tau,A)=P^{w_0}(\tau_0,{\bf W}_{w,\tau}\cap A)
\end{equation}
with respect to an appropriately defined operation of multiplication $.$ on the probability domain for all $A\subseteq{\bf W}$ such that the above equality is defined. This equality is usually insufficient to determine $\lambda X.P^w(\tau,X)$, because, e.g., it is possible that $P^{w_0}(\tau,{\bf W}_{w,\tau})=0$. A more general constraint of this form can be formulated as follows. Let $M$, $w$ and $A\subseteq{\bf W}$ be as above, $\tau,\tau'\in T$ and $\tau\leq\tau'$. Then
\begin{equation}\label{fullprobab}
P^{w_0}(\tau,A)=\int\limits_{w\in{\bf W}_{w_0,\tau}}P^w(\tau',A)d(\lambda X.P^{w_0}(\tau,X)).
\end{equation}
The integral above is not guaranteed to exist for an arbitrary probability domain, because its definition involves least upper bounds and greatest lower bounds of sets of approximating sums, which may be unavailable if there are Dedekind gaps, which is the case if, e.g., the probability domain is based the non-negative rational numbers. Dedekind-completeness is not a first-order property and therefore our proof system for $\PITL$ cannot be extended to one that is complete with respect to Dedekind-complete domains by finitary means. In this section we propose axioms which enforce the best possible approximation of (\ref{fullprobab}) permitted by the probability domain.

In the rest of the section we consider $\PITL$ models $\langle F,{\bf W},I,P\rangle$ with the probability domains of their frames $F$ extended to have multiplication. Given \\ $F=\langle\langle T,\leq,\infty\rangle,\langle D,+,0,\infty\rangle,\langle U,+,.,0,1\rangle,m\rangle$, we assume that the new operation satisfies, e.g., the following axioms:

\qquad

\begin{tabular}{ll}
$(U8)$ & $(x.y).z=x.(y.z)$\\
$(U9)$ & $ x.y=y.x$\\
$(U10)$ & $(x+y).z=x.z+y.z$\\
$(U11)$ & $x.1=x$\\
$(U12)$ & $x.y=x.z\Rightarrow x=0\vee y=z$\\
$(U13)$ & $x=0\vee\exists y(x.y=z)$\\
\end{tabular}

\qquad

\noindent
Together with $(U1)$-$(U7)$, these axioms are sufficient to extend a probability domain to a field by introducing negative elements and division in the customary way.

We adopt a definition for the integral in (\ref{fullprobab}) which is based on Darboux-Lebesgue sums as known from the theory of integration of real-valued functions. Let the measurable sets $B_0,\ldots,B_n$ form a partition of ${\bf W}_{w_0,\tau}$ and let $P^w(\tau',A)\in[\xi_i,\eta_i]$ for all $w\in B_i$, $i=0,\ldots,n$. Then the sums 
\begin{equation}\label{approxsums}
\sum\limits_{i=0}^n \xi_iP^{w_0}(\tau,B_i)\mbox{ and }\sum\limits_{i=0}^n \eta_iP^{w_0}(\tau,B_i)
\end{equation}
are a lower and an upper approximation for the integral from (\ref{fullprobab}), respectively. The integral is defined if both the least upper bound of the lower approximations and the greatest lower bound of the upper approximations of the above forms taken for all partitions $B_0,\ldots,B_n$ of ${\bf W}_{w,\tau}$ into measurable subsets and all appropriate boundary probabilities $\xi_i$, $\eta_i$, $i=0,\ldots,n$, exist and are equal.

The sets $A$ for which $P^{w_0}(\tau,A)$ and $P^w(\tau',A)$, $w\in{\bf W}_{w_0,\tau}$ need to be defined have the forms $\sem{\varphi}_{M,w_0,[\tau'',\tau]}$ and $\sem{\varphi}_{M,w,[\tau'',\tau']}= \sem{\varphi}_{M,w_0,[\tau'',\tau]}\cap{\bf W}_{w,\tau'}$, respectively, where $\varphi$ is a formula in the vocabulary of $M$ and $\tau''\leq\tau$. Hence (\ref{fullprobab}) can be written as
\begin{equation}\label{fullprobab2}
P^{w_0}(\tau,\sem{\varphi}_{M,w_0,[\tau'',\tau]})=\int\limits_{w\in{\bf W}_{w_0,\tau}}P(\tau',\sem{\varphi}_{M,w,[\tau'',\tau']})d(\lambda X.P^{w_0}(\tau,X)).
\end{equation}
Our axioms for (\ref{fullprobab2}) exploit the observation that the sets which are available for the construction of partitions $B_0,\ldots,B_n$ have such forms too. Here they are:
\[\begin{tabular}{ll}
$\!(\overline{P})\!\!$ & $\ell\leq y\wedge p((\ell=y\wedge\theta\wedge p(\varphi)>x;\top))=0\Rightarrow p((\theta\wedge\ell=y;\top)\wedge\varphi)\leq x.p((\theta\wedge\ell=y;\top))$\\
$\!(\underline{P})\!\!$ & $\ell\leq y\wedge p((\ell=y\wedge\theta\wedge p(\varphi)\leq x;\top))=0\Rightarrow p((\theta\wedge\ell=y;\top)\wedge\varphi)\geq x.p((\theta\wedge\ell=y;\top))$\\
\end{tabular}
\]
Let us show that these axioms enforce the possible approximations of (\ref{fullprobab2}). Assume that $\overline{P}$ and $\underline{P}$ are part of our proof system. Let $\varphi$ be a $\PITL$ formula, $y$ be an individual variable of the duration sort and $x_0,\ldots,x_n$ be $n+1$ individual variables of the probability sort. Let
\[\theta_0\rightleftharpoons p(\varphi)\leq x_0,\ \theta_i\rightleftharpoons x_{i-1}< p(\varphi)\wedge p(\varphi)\leq x_i,\ i=1,\ldots,n.\]
Now consider the instances 
\[\ell\leq y\wedge p((\ell=y\wedge\theta_i\wedge p(\varphi)>x_i;\top))=0\Rightarrow p((\theta_i\wedge\ell=y;\top)\wedge\varphi)\leq x_i.p((\theta_i\wedge\ell=y;\top))\]
\[\ell\leq y\wedge p((\ell=y\wedge\theta_i\wedge p(\varphi)\leq x_{i-1};\top))=0\Rightarrow p((\theta_i\wedge\ell=y;\top)\wedge\varphi)\geq x_{i-1}.p((\theta_i\wedge\ell=y;\top))\]
of $\overline{P}$ and $\underline{P}$ for $i=1,\ldots,n$ and the instance
\[\ell\leq y\wedge p((\ell=y\wedge\theta_0\wedge p(\varphi)>x_0;\top))=0\Rightarrow p((\theta_0\wedge\ell=y;\top)\wedge\varphi)\leq x_0.p((\theta_0\wedge\ell=y;\top)) \]
of $\overline{P}$. Since
\[\vdash_{\PITL}\theta_i\wedge p(\varphi)>x_i\Rightarrow\bot\mbox{ and }\vdash_{\PITL}\theta_i\wedge p(\varphi)\leq x_{i-1}\Rightarrow\bot,\]
we have 

\[\vdash_{\PITL}p((\ell=y\wedge\theta_i\wedge p(\varphi)>x_i;\top))=0,\ p((\ell=y\wedge\theta_i\wedge p(\varphi)<x_{i-1};\top))=0\]
by $\PITL 1$ and $P_\bot$.
Hence the considered instances of $\overline{P}$ and $\underline{P}$ entail
\begin{equation}\label{papprox1}
\vdash_{\PITL}\ell\leq y\Rightarrow x_{i-1}.p((\theta_i\wedge\ell=y;\top))\leq p((\theta_i\wedge\ell=y;\top)\wedge\varphi)
\end{equation}
for $i=1,\ldots,n$ and 
\begin{equation}\label{papprox2}
\vdash_{\PITL}\ell\leq y\Rightarrow p((\theta_i\wedge\ell=y;\top)\wedge\varphi)\leq x_i.p((\theta_i\wedge\ell=y;\top))
\end{equation}
for $i=0,\ldots,n$. Let $\chi$ denote the rigid formula
\[y<\infty\wedge x_0=0\wedge x_n=1\wedge\bigwedge\limits_{i=1}^{n}x_{i-1}\leq x_i.\]
Then a purely $\ITL$ deduction shows that
\[\vdash_{\PITL}\chi\Rightarrow\left( \varphi\Leftrightarrow\bigvee\limits_{i=0}^n((\theta_i\wedge\ell=y;\top)\wedge\varphi)\right)\]
and
\[\vdash_{\PITL}\chi\Rightarrow\neg(((\theta_i\wedge\ell=y;\top)\wedge\varphi)\wedge((\theta_j\wedge\ell=y;\top)\wedge\varphi))\]
for $i\not=j$, $i,j=0,\ldots,n$. Hence, using the axioms for arithmetics of probabilities and $\PITL 4$, we can derive
\[\vdash_{\PITL}\chi\Rightarrow p(\varphi)=\sum\limits_{i=0}^n p((\theta_i\wedge\ell=y;\top)\wedge\varphi).\]
Now (\ref{papprox1}) and (\ref{papprox2}) imply
\begin{equation}\label{papprox3}
\vdash_{\PITL}\chi\Rightarrow\sum\limits_{i=1}^n x_{i-1}.p((\theta_i\wedge\ell=y;\top))\leq p(\varphi)\wedge p(\varphi)\leq\sum\limits_{i=0}^n x_i.p((\theta_i\wedge\ell=y;\top)).
\end{equation}
Recall the model $M$ and its distinguished $w_0\in{\bf W}$ and time point $\tau_0$. Let $\tau,\tau'\in T$ and $\tau\leq\tau'$. Let $I^{w_0}(y)=m([\tau_0,\tau'])$. Then the satisfaction of (\ref{papprox3}) at $w_0,[\tau_0,\tau]$ in $M$ means that if $A=\sem{\varphi}_{M,w_0,\tau}$ and $B_i=\sem{\theta_i}_{M,w_0,\tau}$, $i=0,\ldots,n$, then $P^{w_0}(\tau,A)$ is bounded by the sums (\ref{approxsums}) where $\xi_0=0$, $\eta_0=I^{w_0}(x_0)$ and $\xi_i=I^{w_0}(x_{i-1})$ and $\eta_i=I^{w_0}(x_i)$ for $i=1,\ldots,n$. Assume that $z$ is a variable of the probability sort and $M$ satisfies the rigid formula
\[\bigwedge\limits_{i=1}^{n}x_i\leq x_{i-1}+z\]
at $w_0$ as well. Then, since $\sum\limits_{i=0}^nP^{w_0}(\tau,B_i)=1$, the lower and upper approximations (\ref{approxsums}) differ by no more than $I^{w_0}(z)$. Now it is clear that the validity of $\overline{P}$ and $\underline{P}$ in $M$ entails that (\ref{fullprobab2}) holds approximately with precision which is smaller than any probability $\delta\in U$ such that $\underbrace{\delta+\ldots+\delta}_{n\ \mbox{\scriptsize times}}\geq 1$ for some $n<\omega$. Hence, if $\langle U,+,.,0,1\rangle$ has no ``infinitely small'' elements, then the integral from (\ref{fullprobab2}) is defined and (\ref{fullprobab2}) holds. If there are such elements, then the difference between the least upper bound and the greatest lower bound of the sums (\ref{approxsums}), respectively, is ``infinitely small''. 

Obviously the condition ${\bf W}_{w_0,\tau_0}={\bf W}$ is relevant just to the {\em scope} of the (approximate) validity of (\ref{fullprobab}). If all instances of $\overline{P}$ and $\underline{P}$ hold everywhere in a $\PITL$ model, then so do the approximations of (\ref{fullprobab}).

\section{Probabilistic real-time $\DC$ with infinite intervals}

In this section we introduce an enhanced system of real-time probabilistic $\DC$ which enables the handling of infinite intervals and has a syntactically simpler and more expressive probability operator instead of the original $\mu(.)(.)$. The new system is obtained as the extension of $\PITL$ by state expressions and duration terms. It properly subsumes the original probabilistic real-time $\DC$ from \cite{DZ99} in a straightforward way. The relative completeness result about probabilistic $\DC$ in this paper is about this enhanced system and we use the acronym $\PDC$ for it in the rest of the paper.

\subsection{Language}

$\PDC$ vocabularies are just $\PITL$ vocabularies extended by state variables, which are used to construct state expressions and duration terms just like in (non-probabilistic) $\DC$ (see Section \ref{dcprelim} of the Preliminaries).  

\subsection{Models and satisfaction}
\label{pdcmodels}

$\PDC$ models are $\PITL$ models which are based on the real-time and -probability frame for two-sorted $\ITL$ with infinite intervals 
\[F_{\bf R}=\langle\langle\overline{\bf R},\leq,\infty\rangle,\langle\overline{\bf R}_+,+,0,\infty\rangle,\langle\overline{\bf R}_+,+,.,0,1\rangle,\lambda\sigma.\max\sigma-\min\sigma\rangle,\]
the only difference being that the interpretations $I^w$, $w\in{\bf W}$ are supposed to map the state variables from the respective vocabularies to $\{0,1\}$-valued functions of time with the finite variability property. We assume that multiplication is available for probabilities. The definition of the values of duration terms and the definition of the satisfaction relation are just like in $\DC$ and $\PITL$, respectively.

\subsection{Describing probabilistic real-time automata and expressing $\mu(.)(.)$}
\label{modellingautomata}

The probabilistic automata from the semantics of $\PDC$ originally introduced in \cite{DZ99} can be described in the system of $\PDC$ proposed in this paper. The original probability operator $\mu(.)(.)$ can be expressed using $p(.)$ as follows.

Let ${\bf A}$ be an automaton of the form (\ref{fpta}) from Definition \ref{pautomata}. The $\DC$ vocabulary which corresponds to ${\bf A}$ consists the states of ${\bf A}$ as state variables and the $\PITL$ vocabulary for ${\bf A}$ introduced the example from Section \ref{pitlsemantics}, which includes the transitions of ${\bf A}$ as temporal propositional letters ($0$-ary flexible predicate symbols), the rigid constants $q_a$ and the rigid unary function symbols $P_a$ to denote $\lambda\tau.\int\limits_0^\tau p_a(t)dt$ for each transition $a$, respectively. Let $M=\langle F_{\bf R},{\bf W},I,P\rangle$ be a $\PDC$ model for this vocabulary in the sense of Section \ref{pdcmodels} with ${\bf W}$ being the set of all the behaviours of ${\bf A}$ and $\lambda X.P^w(\tau,X)$ being the conditional probability for a behaviour of ${\bf A}$ to be described by an interpretation in the set $X\subseteq{\bf W}_{w,\tau}$, given that $w\in{\bf W}$ describes this behaviour within the interval $[0,\tau]$, like in the example from Section \ref{pitlsemantics}. Then $M$ validates the axioms
\[\Box\neg(\pred{\neg a^-};\pred{a^-}\wedge\neg a;\pred{a^+}),
\ \neg(\pred{a^-}\wedge\neg a;\pred{a^+};\top)\]
and 
\[\Box(\neg(\pred{a^-};a)\wedge\neg(a\wedge\neg\pred{a^-})\wedge\neg(a;\pred{\neg a^+}))\]
for all transitions $a$ at all intervals $\sigma$ such that $\min\sigma=0$. These axioms force the interpretations of the temporal propositional letters $a$ to correspond to the respective transitions of ${\bf A}$, which are identified by observing their source states $a^-$ and destination states $a^+$, in the way proposed in the example from Section \ref{pitlsemantics}. Having this correspondence, the probabilistic behaviour of ${\bf A}$ can be described by formulas such as (\ref{exprobabformula}). If used together with the axioms $\overline{P}$ and $\underline{P}$ from Section \ref{globalprobab}, such formulas are sufficient to express the conditions on the probability functions $\lambda X.P^w(\tau,X)$ for $w\in{\bf W}$ which are encoded by the components $p_a$ and $q_a$ of the automaton ${\bf A}$. Furthermore, the value of $\mu(\varphi)(t)$ is equal to $w_{[0,0]}(p((\varphi\wedge\ell=t;\top)))$ for every $\DC$ formula $\varphi$ and every $w\in{\bf W}$.

Note that the probabilities expressed by terms of the form $p(\varphi)$ are determined by using the truth values of $\varphi$ at infinite intervals. That is why the probability for $\varphi$ to hold at a finite interval ending at some future time point is expressed by the term $p((\varphi;\top))$, in which $\top$ accounts of the infinite interval following that end point.

In our $\PDC$ axioms about probabilistic timed automata behaviour we refer to the probability $P_a(\tau)$ for transition $a$ to be over by time $\tau$ instead of the probability density $p_a(t)$ for $a$ to finish at time $t$, which was used in the original paper \cite{DZ99}. This is not a limitation, because, at least in the case of piece-wise continuous $p_a$, the relation $P_a(\tau)=\int\limits_0^\tau p_a(t)dt$ between $P_a$ and $p_a$ can be axiomatised much like (\ref{fullprobab}). On the contrary, there are practically interesting cases such as that of transitions with discrete or finite sets of possible durations in which $p_a$ cannot be defined whereas $P_a$ exists. 

\section{A proof system for $\PDC$}

The proof system for $\PDC$ that we propose consists of the $\DC$ axioms $\DC 1$-$\DC 6$, $T1$ and $T2$ from Section \ref{dcinfax}. We demonstrate the relative completeness of this proof system in Section \ref{relativecompleteness} below. Since completeness relative to validity in the class of the $\PITL$ models which are based on $F_{\bf R}$ means that all formulas which are valid at such $\PITL$ models are admitted as axioms, the $\PITL$ axioms from Section \ref{proofsystem} are no more relevant than any of these valid formulas from the formal point of view. 

\section{Relative completeness of the proof system for $\PDC$}
\label{relativecompleteness}

The proof of the completeness of the axioms $\DC 1$-$\DC 6$, $T1$ and $T2$ for $\PDC$ relative to validity in the class of the $F_{\bf R}$-based models of $\PITL$ follows closely the pattern of the original relative completeness proof for (non-probabilistic) $\DC$ from \cite{HZ92}. The variant of this proof about the system of $\DC$ based on the modalities of $\NL$ from \cite{RZ97} is very close to our setting. Therefore we include the proof details mostly for the sake of completeness. Below $\PITL^{\bf R}_{\bf L}$ stands for the set of the $\PITL$ formulas written in the vocabulary ${\bf L}$ which are valid in the class of all $F_{\bf R}$-based $\PITL$ models.

Let $\varphi$ be a $\PDC$ formula written in some vocabulary ${\bf L}$ and let ${\bf S}$ be the set of all the state expressions which can be written using only the state variables which occur in $\varphi$. Given a state expression $S\in{\bf S}$, we denote the set
\[\{S'\in{\bf S}:S'\mbox{ is propositionally equivalent to }S\}\]
by $[S]$. Since $\varphi$ contains a finite number of state variables, there are finitely many different equivalence classes $[S]$ for $S\in{\bf S}$. Let ${\bf L}'$ be the $\ITL$ vocabulary which consists of the symbols from ${\bf L}$, except the state variables, and the fresh flexible constants $\ell_{[S]}$, $S\in{\bf S}$. Since there are finitely many classes $[S]$, these flexible constants are finitely many too. If all the state expressions which occur in some $\PDC$ formula $\psi$ are from ${\bf S}$, we denote the result of substituting every duration term $\int S$ with the respective flexible constant $\ell_{[S]}$ in $\psi$ by $\psi'$. Note that $\psi'$ is a $\PITL$ formula with no $\PDC$-specific constructs left in it.

Now consider the set ${\bf H}$ of all the instances of $\DC 1$-$\DC 6$, $T1$ and $T2$ for state expressions from ${\bf S}$. Unless no state variables occur in $\varphi$, ${\bf H}$ is infinite. However, since there are finitely many equivalence classes $[S]$, the set
\[{\bf H}'=\{\alpha':\alpha\in{\bf H}\}\]
is finite. We define the sequence of formulas $\psi_k$, $k<\omega$ as follows:
\[\psi_0\rightleftharpoons\Box\bigwedge{\bf H}',\ \psi_{k+1}\rightleftharpoons\Box\bigwedge{\bf H}'\wedge p(\psi_k)=1\mbox{ for all }k<\omega.\]
The formula $\psi_k$ states that all the instances of the $\DC$ axioms hold with probability $1$ at interpretations which are accessible through probability terms of height at most $k$.

Now assume that $\varphi$ is consistent with our proof system for $\PDC$. Let $n=h(\varphi)$ where $h(\varphi)=0$ for $\varphi$ with no occurrence of probability terms, and 
$h(\varphi)=1+\max\{h(\psi):p(\psi)\mbox{ occurs in }\varphi\}$ for $\varphi$ with probability terms. Then the formula 
\[\psi\rightleftharpoons\ell=\infty\wedge(\varphi'\vee(\varphi';\ell=\infty))\wedge\psi_n\] 

is consistent with $\PITL^{\bf R}_{\bf L}$. This entails that there is a $\PITL$ model $M=\langle F_{\bf R},{\bf W},I,P\rangle$, $w_0\in{\bf W}$ and an interval $\sigma_0\in\tilde{\bf I}(\overline{\bf R})$ such that 
\[M,w_0,\sigma_0\models\psi.\]
Clearly $\sigma_0\in{\bf I}^\infin(\overline{\bf R})$. Following the example from \cite{HZ92}, we use $M$ in order to build a $\PDC$ model for ${\bf L}$ which satisfies $\varphi$.

We define the ascending sequence of subsets ${\bf N}_0\subseteq{\bf N}_1\subseteq\ldots\subseteq{\bf N}_n$ of ${\bf W}$ by the equalities
\[{\bf N}_0=\{w_0\}\mbox{ and }{\bf N}_k=\bigcup\limits_{w\in{\bf N}_{k-1}}\{v\in{\bf W}_{w,\min\sigma_0}:M,v,\sigma_0\models\psi_{n-k}\}\mbox{ for }k=1,\ldots,n.\]
The set of the behaviour descriptions ${\bf W}'$ for the $\PDC$ model we are constructing is ${\bf N}_n$. 

Let $w\in{\bf N}_n$ and $\tau\in(\min\sigma_0,\infty)$. Let $Q$ be a state variable occurring in $\varphi$. Then 
\[\ell=0\vee(\pred{Q};\top)\vee(\pred{\neg Q};\top),\ell=0\vee\ell=\infty\vee(\top;\pred{Q})\vee(\top;\pred{\neg Q})\in{\bf H},\]
because these formulas are instances of $T1$ and $T2$, respectively. This entails that 
\[M,w,[\tau,\tau+1]\models(\ell_{[Q]}=\ell\wedge\ell\not=0;\top)\vee(\ell_{[\neg Q]}=\ell\wedge\ell\not=0;\top)\]
and
\[M,w,[\min\sigma_0,\tau]\models(\top;\ell_{[Q]}=\ell\wedge\ell\not=0)\vee(\top;\ell_{[\neg Q]}=\ell\wedge\ell\not=0),\]
which implies that there are some $\xi,\eta\in{\bf R}$ such that $\xi<\tau<\eta$ and 
\[M,w,[\tau,\eta]\models\ell_{[Q]}=\ell\vee\ell_{[\neg Q]}=\ell\mbox{ and }M,I,[\xi,\tau]\models\ell_{[Q]}=\ell\vee\ell_{[\neg Q]}=\ell.\]
Let us fix some $\xi$ and $\eta$ with this property and denote the open neighbourhood $(\xi,\eta)$ of $\tau$ by $O_{Q,w,\tau}$. Similarly,

\[M,w,[\min\sigma_0,\min\sigma_0+1]\models(\ell_{[Q]}=\ell\wedge\ell\not=0;\top)\vee(\ell_{[\neg Q]}=\ell\wedge\ell\not=0;\top)\]
and hence there is an $\eta>\min\sigma_0$ such that 
\[M,w,[\min\sigma_0,\eta]\models\ell_{[Q]}=\ell\vee\ell_{[\neg Q]}=\ell.\]
We fix such an $\eta$ and write $O_{Q,w,\min\sigma_0}$ for the
semi-open neighbourhood $[\min\sigma_0,\eta)$ of
$\min\sigma_0$. Obviously
\[\bigcup\limits_{\tau\in[\min\sigma_0,\infty)}O_{Q,w,\tau}=[\min\sigma_0,\infty).\]
Moreover, ${\bf
O}_{Q,w}=\{O_{Q,w,\tau}:\tau\in[\min\sigma_0,\infty)\}$ is a
(relatively) open covering of $[\min\sigma_0,\infty)$. Here follows
the key observation in this proof: the compactness of the intervals of
the form $[\min\sigma_0+k,\min\sigma_0+k+1]$ where $k=0,1,2,\ldots$
implies that for every such $k$ there is a finite sub-covering ${\bf
O}_{Q,w,k}\subset{\bf O}_{Q,w}$ of
$[\min\sigma_0+k,\min\sigma_0+k+1]$. Let ${\bf
O}_{Q,w,k}=\{O_{Q,w,\tau_{Q,w,k,1}},\ldots,O_{Q,w,\tau_{Q,w,k,n_{w,k}}}\}$. We
will use the time points $\tau_{Q,w,k,i}$,
$i=1,\ldots,n_{w,k}$, $k=0,1,\ldots$, where $Q$ is a state variable
occurring in $\varphi$ to define an interpretation $(I')^w$ of ${\bf
L}$ in our $\PDC$ model under construction which corresponds to $I^w$
for $w\in{\bf W}'$. Let us denote the set of these time points by
$C_{Q,w}$. Since $\min\sigma_0\in C_{Q,w}$ and $C_{Q,w}\cap\sigma$ is
finite for every bounded interval $\sigma$, the set $C_{Q,w}\cap
[\min\sigma_0,\tau]$ contains a greatest time point for every
$\tau\in[\min\sigma_0,\infty)$. $(I')^w$ is defined by the following
clauses
\[\begin{tabular}{lp{4.5in}}
$(I')^w(s)=I(s)$ & for all symbols $s\in{\bf L}$ which are not state variables;\\
$(I')^w(Q)(\tau)=0$ & for all state variables $Q\in{\bf L}$ which do not occur in $\varphi$ and all $\tau\in\overline{\bf R}$;\\
$(I')^w(Q)(\tau)=1$ & for state variables $P$ which occur in $\varphi$ and $\tau$ such that $M,w,[\tau',\sup O_{Q,w,\tau'}]\models\ell_{[Q]}=\ell$, where $\tau'=\max(C_{Q,w}\cap [\min\sigma_0,\tau])$;\\
$(I')^w(Q)(\tau)=0$ & for state variables $Q$ which occur in $\varphi$ and $\tau$ such that $M,w,[\tau',\sup O_{Q,w,\tau'}]\models\ell_{[\neg Q]}=\ell$, where $\tau'$ is as above and for $\tau<\min\sigma_0$ as well.\\
\end{tabular}
\]
A straightforward argument based on the presence of the appropriate
instances of $\DC 1$-$\DC 6$ in ${\bf H}$ implies that this definition
of $(I')^w$ is correct and $I'$ satisfies the equality
\[\textstyle (I')^w_\sigma(\int S)=I^w_\sigma(\ell_{[S]})\]
for all state expressions $S\in{\bf S}$ and all intervals
$\sigma\in\tilde{\bf I}(\overline{R})$ such that
$\min\sigma_0\leq\min\sigma$.

The functions $(P')^w$, $w\in{\bf W}'$, are defined using the
respective $P^w$ by the equality
\begin{equation}\label{pprimedef}
(P')^w(\tau, A\cap{\bf W}')=P^w(\tau,A)
\end{equation}
for $w\in\bigcup_{i=0}^{n-1}{\bf N}_i$ and
$\tau\geq\min\sigma$.  Since $M,w_0,\sigma_0\models\psi_n$, the
construction of ${\bf W}'$ implies that $P^w(\tau,({\bf
W}')_{w,\tau})=1$ for all such $w$. Hence if
$P(\tau,A_1)\not=P(\tau,A_2)$, then $P(\tau,A_1\cap{\bf
W}'_{w,\tau})\not=P(\tau,A_2\cap{\bf W}'_{w,\tau})$ as well, which
implies that $A_1\cap({\bf W}')_{w,\tau}\not=A_2\cap({\bf
W}')_{w,\tau}$. That is why the equality (\ref{pprimedef}) defines the
function $(P')^w$ correctly.  We allow $(P')^w$ to be arbitrary for
$w\in{\bf W}'\setminus\bigcup_{i=0}^{n-1}{\bf N}_i$ , because
the truth values of formulas of probability height up to $n$ at
$w_0,\sigma_0$ do not depend on these functions.
  
Let $M'=\langle F_{\bf R},{\bf W}',I',P'\rangle$. An induction on $k$
implies that if $\psi$ is a $\PDC$ formula written in ${\bf L}$,
$h(\psi)\leq k$, $w\in{\bf N}_i$, $\sigma\in\tilde{\bf
I}(\overline{\bf R})$, $\min\sigma\geq\min\sigma_0$ and $k+i\leq n$,
then
\[M',w,\sigma\models\psi\mbox{ iff } M,w,\sigma\models\psi'\mbox{ and }P^w(\max\sigma,\sem{\psi'}_{M,w,\sigma})=(P')^w(\tau,\sem{\psi}_{M',w,\sigma}).\]
This, in particular, implies that
\[M',w_0,\sigma_0\models\varphi\mbox{ or }M',w_0,\sigma_0\models(\varphi;\ell=\infty).\]
In the latter case $M',w_0,\sigma\models\varphi$ for some $\sigma\in{\bf I}^\fin(\overline{\bf R})$ such that $\min\sigma=\min\sigma_0$.

This concludes the proof of the relative completeness of the axioms $\DC 1$--$\DC 6$, $T1$ and $T2$ for $\PDC$, because we have shown that the assumption that a given $\PDC$ formula is consistent with this proof system entails that the formula is satisfiable at a $\PDC$ model.

\section{$\PITL$ with infinite intervals and $\PNL$}
\label{itlvsnl}

The system which is closest to $\PITL$ both in its semantics and proof system is the probabilistic extension of neighbourhood logic $\PNL$ which was proposed in \cite{Gue00probab}. The modalities $\Diamond_l$ and $\Diamond_r$ of $\NL$ are defined by the clauses:
\[\begin{tabular}{ll}
$M,\sigma\models\Diamond_l\varphi$ & iff $M,\sigma'\models\varphi$ for some $\sigma'$ such that $\max\sigma'=\min\sigma$\\
$M,\sigma\models\Diamond_r\varphi$ & iff $M,\sigma'\models\varphi$ for some $\sigma'$ such that $\min\sigma'=\max\sigma$\\
\end{tabular}
\]
$\Diamond_l$ and $\Diamond_r$ are called {\em expanding} modalities
because they allow access outside the reference interval. The dual
modalities $\Box_d$ of $\Diamond_d$ are defined by the clauses
\[\Diamond_d\rightleftharpoons\neg\Diamond_r\neg\Diamond_d\varphi\]
for $d\in\{l,r\}$.

A duration calculus on the basis of $\NL$ was developed in
\cite{RZ97}. Infinite intervals are an alternative way to achieve the
expressivity of $\Diamond_r$. A truth preserving translation from
$\ITL$ with infinite intervals to $\NL$ is impossible for the trivial
reason that $\NL$ does not have infinite intervals and there is no
straightforward way to capture the $\ITL$ interpretation of flexible
symbols at infinite intervals. Furthermore, $\NL$ duration domains
known from the literature do not include $\infty$, but include
negative durations. However, if the only flexible symbols in the
considered vocabularies are $\ell$ and state variables, then the
duration calculi based on $\NL$ and on $\ITL$ with infinite intervals,
respectively, can be related by means of a translation which has the
following property:

\begin{quote}
If $\psi$ is the $\NL$-based $\DC$ formula which is the translation of some $\ITL$-based $\DC$ formula $\varphi$ and $FV(\varphi)=\{x_1,\ldots,x_n\}$, then
\begin{equation}\label{itltonl}
M',[\tau,\tau]\models\psi\mbox{ iff }M,[\tau,\infty]\models\varphi,\end{equation}
\end{quote}
where the duration domain of the $\ITL$ model $M$ is obtained from
that of the $\NL$ model $M'$ by removing the negative elements and
adding $\infty$, and the meanings of the non-logical symbols in $M$
and $M'$ on the intersection of the two duration domains are the
same. We describe such a translation in this section.

The predicate logic equivalences 
\[R(t_1,\ldots,t_n)\Leftrightarrow\exists x_1\ldots\exists x_n\left(R(x_1,\ldots,x_n)\wedge\bigwedge\limits_{i=1}^n t_i=x_i\right)\]
and
\[f(t_1,\ldots,t_n)=z\Leftrightarrow\exists x_1\ldots\exists x_n \left(f(x_1,\ldots,x_n)=z\wedge\bigwedge\limits_{i=1}^n t_i=x_i\right),\]
where $x_1,\ldots,x_n$ do not occur in $t_1,\ldots,t_n$, allow us to
assume that all atomic subformulas of the $\ITL$ formulas to be
translated are either rigid of have the form $\int S=x$ where $x$ is a
variable. We can also treat $\ell$ as $\int {\bf 1}$. The clauses
below define two auxiliary translations $(.)^\fin$ and $(.)^\infin$
from $\ITL$-based to $\NL$-based $\DC$.  $(.)^\fin$ translates an
$\ITL$ formula which is to be evaluated at a {\em finite} interval
into its $\NL$ equivalent. $(.)^\infin$ translates an $\ITL$ formula
which is to be evaluated at an {\em infinite} interval $\sigma$ into a
corresponding $\NL$ formula which defines the same condition on
$\sigma$ when evaluated at the zero-length interval
$[\min\sigma,\min\sigma]$. $(.)^\infin$ refers to $(.)^\fin$ for the
translation of $(.;.)$-formulas. Both auxiliary translations are
correct only under the assumption that the free variables of the given
$\ITL$ formulas range over non-negative finite durations. Infinity is
handled only where explicitly denoted by the symbol $\infty$. Atomic
formulas $R(t_1,\ldots,t_n)$ with the parameter list $t_1,\ldots,t_n$
consisting of individual variables and, possibly, $\infty$ translate
into dedicated {\em specialising} formulas $S^R_{t_1,\ldots,t_n}$,
which define the appropriate predicates on the non-$\infty$ parameters
according to the intended meaning of $R$ and the positions of the
occurrences of $\infty$ in $t_1,\ldots,t_n$. For instance, $S^=_{x,y}$
is $x=y$, $S^=_{x,\infty}$ is $\bot$, and $S^=_{\infty,\infty}$ is
$\top$. Atomic formulas with $=$ and function symbols are handled
similarly, e.g. the formula $S^+_{x,\infty;y}$ for $x+\infty=y$ is
$\bot$, and $S^+_{x,\infty;\infty}$ is $\top$.

\[\begin{tabular}{lll}
$\bot^\fin$ & $\rightleftharpoons$ & $\bot$\\
$(R(t_1,\ldots,t_n))^\fin$ & $\rightleftharpoons$ & $S^R_{t_1,\ldots,t_n}$\\
$(f(t_1,\ldots,t_n)=t_{n+1})^\fin$ & $\rightleftharpoons$ & $S^f_{t_1,\ldots,t_n;t_{n+1}}$\\
$(\int S=\infty)^\fin$ & $\rightleftharpoons$ & $\bot$\\
$(\int S=x)^\fin$ & $\rightleftharpoons$ & $\int S=x$\\
$(\varphi\Rightarrow\psi)^\fin$ & $\rightleftharpoons$ & $\varphi^\fin\Rightarrow\psi^\fin$\\
$(\varphi;\psi)^\fin$ & $\rightleftharpoons$ & $\exists x\exists y(\int {\bf 1}=x+y\wedge\Diamond_l\Diamond_r(\ell=x\wedge\varphi^\fin\wedge\Diamond_r(\ell= y\wedge\psi^\fin)))$\\
$(\exists x\varphi)^\fin$ & $\rightleftharpoons$ & $([\infty/x]\varphi)^\fin\vee\exists x(x\geq 0\wedge\varphi^\fin)$\\\\
$\bot^\infin$ & $\rightleftharpoons$ & $\bot$\\
$(R(t_1,\ldots,t_n))^\infin$ & $\rightleftharpoons$ & $S^R_{t_1,\ldots,t_n}$\\
$(f(t_1,\ldots,t_n)=t_{n+1})^\infin$ & $\rightleftharpoons$ & $S^f_{t_1,\ldots,t_n;t_{n+1}}$\\
$(\int S=\infty)^\infin$ & $\rightleftharpoons$ & $\forall x\Diamond_r\int S>x$\\
$(\int S=x)^\infin$ & $\rightleftharpoons$ & $\Diamond_r(\int S=x\wedge\Box_r\int S=0)$\\
$(\varphi\Rightarrow\psi)^\infin$& $\rightleftharpoons$ & $\varphi^\infin\Rightarrow\psi^\infin$\\
$(\varphi;\psi)^\infin$ & $\rightleftharpoons$ & $\Diamond_r(\varphi^\fin\wedge\Diamond_r(\ell=0\wedge\psi^\infin))$\\
$(\exists x\varphi)^\infin$ & $\rightleftharpoons$ & $([\infty/x]\varphi)^\infin\vee\exists x(x\geq 0\wedge \varphi^\infin)$\\
\end{tabular}
\]
As mentioned above, $(.)^\infin$ is correct only under the assumption that the free variables of the given $\ITL$ formulas range over non-negative finite durations. 
To remove this restriction, given an $\ITL$ formula $\varphi$ whose free variables are $x_1,\ldots,x_n$, we define the sequence of formulas $\varphi_0,\ldots,\varphi_n$ by the clauses 
\[\varphi_0\rightleftharpoons\varphi\mbox{ and }\varphi_i\rightleftharpoons(x_i\geq 0\wedge\varphi_{i-1})\vee[\infty/x_i]\varphi_{i-1}\mbox{ for }i=1,\ldots,n,\]
and choose the formula $\psi$ from (\ref{itltonl}) to be $(\varphi_n)^\infin$.
This translation can be extended to one between $\PDC$ with infinite intervals and a system of probabilistic $\DC$ based on $\NL$ by putting 
\[\begin{tabular}{lll}
$(p(\varphi)=x)^\fin$ & $\rightleftharpoons$ & $p(\varphi^\infin)=x$.\\
$(p(\varphi)=x)^\infin$ & $\rightleftharpoons$ & $\varphi^\infin\wedge x=1\vee\neg\varphi^\infin\wedge x=0$.
\end{tabular}
\]
A translation from $\NL$ into $\ITL$ with infinite intervals is possible too under the assumption that there is a time point $\tau_0$ such that the values of all flexible symbols except $\ell$ at intervals starting before $\tau_0$ are irrelevant to the truth value of the translated formula. This restriction is necessary, because an $\ITL$ formula cannot express conditions on the past prior to the beginning of the infinite reference interval. It can be avoided if one considers a system of $\ITL$ with intervals which can be infinite into the past as well, which is beyond the scope of this paper. If a property does not depend on the interpretation of the flexible symbols on the left of the beginning of the reference interval and can be expressed by an $\NL$ formula, then it can be expressed by an $\NL$ formula in which the only occurrences of $\Diamond_l$ are in subformulas of the form $\Diamond_l\Diamond_r\chi$. Given an $\NL$ formula $\varphi$ which satisfies this syntactical restriction, one can find an $\ITL$ formula $\psi$ such that $M,[\tau_0,\infty]\models\psi$ is equivalent to the existence of a $\tau_1\geq\tau_0$ such that $M',[\tau_0,\tau_1]\models\varphi$. 
Below we give a translation which, given a $\varphi$ of the form
\[\varphi::=\bot\mid R(t,\ldots,t)\mid (\varphi\Rightarrow\varphi)\mid\Diamond_r\varphi\mid\Diamond_l\Diamond_r\varphi\mid\exists x(x\geq 0\wedge\varphi)\]
produces a corresponding $\psi$. This translation produces formulas constructed using $\exists$, $\Rightarrow$, $\bot$, rigid formulas and formulas of the form
\begin{equation}\label{nltoitl}
(\ell=t_1;\ell=t_2\wedge\alpha;\top)
\end{equation}
with $\alpha$ being a modality-free formula.
The translation works by reducing the number of the occurrences of $\Diamond_l\Diamond_r$ and $\Diamond_r$ in formulas of the form (\ref{nltoitl}), yet with $\alpha$ being a $\NL$ formula. The $\ITL$ formula $\psi$ is obtained by starting from $(\ell=0;\ell=0\wedge\Diamond\varphi;\top)$. To understand the correctness of the translation, one can think of a system which has all the modalities $(.;.)$, $\Diamond_l$ and $\Diamond_r$, with the obvious semantics, and check that the translation rules correspond to valid equivalences at infinite reference intervals, provided that the free variables of the involved formulas have finite non-negative values. Here follow the transformation rules which define the translation:

\qquad

\noindent

\begin{tabular}{l}
$(\ell=t_1;\ell=t_2\wedge(\chi_1\Rightarrow\chi_2);\top)\rightarrow(\ell=t_1;\ell=t_2\wedge\chi_1;\top)\Rightarrow(\ell=t_1;\ell=t_2\wedge\chi_2;\top)$\\
$(\ell=t_1;\ell=t_2\wedge\Diamond_r\chi;\top)\rightarrow\exists z(\ell=t_1+t_2;\ell=z\wedge\chi;\top)$\\
$(\ell=t_1;\ell=t_2\wedge\Diamond_l\Diamond_r\chi;\top)\rightarrow\exists z(\ell=t_1;\ell=z\wedge\chi;\top)$\\
$(\ell=t_1;\ell=t_2\wedge\exists x(x\geq 0\wedge \chi);\top)\rightarrow\exists x(x<\infty\wedge (\ell=t_1;\ell=t_2\wedge\chi;\top))$ 
\end{tabular}

\qquad

The individual variable $z$ in the rules above is supposed to be fresh. The last rule can be applied only if $x\not\in FV(t_1),FV(t_2)$. This translation can be extended to one from $\PNL$ to $\PITL$ by mapping $\NL$ probability terms $p(\varphi)$ to $\PITL$ corresponding probability terms $p(\psi)$ where $\psi$ is the translation of $\varphi$.

\section*{Concluding remarks}

We conclude by discussing some restrictions on the scope of the completeness results about $\PITL$ and $\PDC$ presented in this paper. 

\subsubsection*{Countable additivity of probability functions}

According to our definition, the probability functions in $\PITL$ models are required to be just finitely additive, whereas classical probability theory is about countably additive probability functions. One simple reason for this is the choice to have an abstract domain of probabilities which is not required to be Dedekind-complete and therefore the infinite sums which are relevant to countable additivity cannot be guaranteed to exist. The difficulty in axiomatising countable additivity becomes even more obvious from the observation that $\PITL$ has the {\em L\"owenheim-Skolem} property. This means that countably-infinite consistent sets of $\PITL$ formulas can be satisfied at countably-infinite models, which, in particular, have countably-infinite domains. This follows immediately from the construction of the $\PITL$ model in the completeness argument for our proof system. Countably-infinite $\PITL$ models with countably additive probability functions validate formulas of the form
\[\forall x (p(\varphi)=0)\Rightarrow p(\exists x\varphi)=0.\]
This follows immediately from the fact that $x$ ranges over a countably-infinite domain. Hence, the above formula should be a theorem in a proof system which is complete with respect to models with countably additive probability functions, as long as the L\"owenheim-Skolem property holds. However, this formula is not valid in arbitrary models.

\subsubsection*{Completeness of $\PDC$ relative to (non-probabilistic) real-time $\ITL$}

Our demonstration that some well-known axioms of (non-probabilistic) $\DC$ form a proof system which is complete relative to {\em probabilistic} $\ITL$ with infinite intervals was hardly a technical challenge, given the similar proofs from \cite{HZ92,RZ97}. It would have been interesting to develop a proof system for $\PDC$ which is complete relative to real-time $\ITL$ without probabilities. The proof of Lemma \ref{greatlemma}, which is the key step in our model construction for the completeness argument for $\PITL$, explains why this is impossible. The model construction involves an expression of $\tau$-equivalence by the formulas \begin{equation}\label{tauequiv}
(\Box\forall(\chi^\nu\Leftrightarrow\chi^{\nu'})\wedge\ell=c;\ell=\infty)
\end{equation}
for $\tau$ being the equivalence class $[c]$ of the rigid constant $c$. The relation of $\tau$-equivalence is needed to hold between any given $w\in{\bf W}$ from a $\PDC$ model $M=\langle F_{\bf R},{\bf W},I,P\rangle$ and the $v\in{\bf W}$ which are needed to populate $\sem{\varphi}_{M,w,\sigma}$ for $\varphi$ such that $M,w$ is supposed to satisfy $p(\varphi)\not=0$ at intervals $\sigma$ whose end point is $\tau$. The proof of Lemma \ref{greatlemma} relies on the possibility to use the formulas (\ref{tauequiv}) and an assumption which essentially amounts to the derivability of $\neg\varphi$ from some appropriately chosen formulas in order to derive the existence of a formula $\theta$ such that the same formulas imply $(\theta\wedge\ell=c;\ell=\infty)\Rightarrow\neg\varphi$, which in its turn enables an application of the $\PITL$ proof rule $P_\leq$ to derive $\theta\Rightarrow p(\varphi)=0$ and reach the aimed contradiction. The existence of the formula $\theta$ amounts to the interval-related intepolation property of $\ITL$ with infinite intervals (see Section \ref{interpolation}). Unfortunately, $\DC$ has neither this interpolation property, nor the related Craig interpolation property \cite{Gue04b}. The counterexample to Craig interpolation in \cite{Gue04b} indicates that the property could possibly be restored by allowing infinitary formulas to take the role of $\theta$. $\DC$ is not a compact logic and therefore derivability from infinite sets of premises is not reducible to derivability from finite ones. Hence, in order to achieve sufficient deductive power, the proof rule $P_\leq$ would have to be replaced by one allowing infinitary formulas on the left of $\Rightarrow$ as well.  The deductive power of a finitary rule would be insufficient for the role of $P_\leq$ in any presumable finitary proof system for $\PDC$ that is complete relative to (non-probabilistic) real-time $\ITL$ with infinite intervals.

\bibliographystyle{alpha}
\bibliography{../bibfiles/mybiblio}

\end{document}